\title{Mapping polygons to the grid\linebreak with small Hausdorff and Fr\'echet distance\footnote{
Research on the topic of this paper was initiated at the 1st Workshop on Applied Geometric Algorithms (AGA 2015) in Langbroek,
The Netherlands, supported by the Netherlands Organisation for Scientific Research (NWO)
under project no. 639.023.208.
NWO is supporting Q.\,W. Bouts, I. Kostitsyna, and W. Sonke under project no.~639.023.208, and K. Verbeek under project no.~639.021.541.
W. Meulemans is supported by Marie Sk\l{}odowska-Curie Action MSCA-H2020-IF-2014 656741.}}
\author{Quirijn W. Bouts\thanks{Dept.\ of Mathematics and Computer Science, TU Eindhoven, The Netherlands. {\tt \{ q.w.bouts | i.kostitsyna | w.m.sonke | k.a.b.verbeek \}@tue.nl}}
\and Irina Kostitsyna\footnotemark[2]
\and Marc van Kreveld\thanks{Dept.\ of Information and Computing Sciences, Utrecht University, The Netherlands. {\tt m.j.vankreveld@uu.nl}}
\and Wouter Meulemans\thanks{giCentre, City University London, United Kingdom. {\tt wouter.meulemans@city.ac.uk}}
\and Willem Sonke\footnotemark[2] 
\and Kevin Verbeek\footnotemark[2]}
\newtheorem{definition}{Definition} 
\renewcommand{\subparagraph}[1]{\smallskip\noindent{\bfseries\sffamily #1.}}
\newenvironment{cloneclaim}[1]
  {\noindent\bfseries #1 [repeated] \normalfont\em}
  {\normalfont}
\newcommand{\cmark}{\ding{51}}
\newcommand{\xmark}{\ding{55}}
\newcommand{\module}[1]{\ensuremath{\mathcal{M}(#1)}\xspace}
\begin{document}

\maketitle

\begin{abstract}
We show how to represent a simple polygon $P$ by a grid (pixel-based) polygon $Q$ that is simple and whose Hausdorff or Fr\'echet distance to $P$ is small. For any simple polygon $P$, a grid polygon exists with constant Hausdorff distance between their boundaries and their interiors. Moreover, we show that with a realistic input assumption we can also realize constant Fr\'echet distance between the boundaries. We present algorithms accompanying these constructions, heuristics to improve their output while keeping the distance bounds, and experiments to assess the output.
\end{abstract}

\section{Introduction}
\label{sec:intro}

Transforming the representation of objects from the real plane onto a grid has been studied for decades
due to its applications in computer graphics, computer vision, and finite-precision
computational geometry~\cite{yao}. Two interpretations of the grid are possible: (i) the grid graph,
consisting of vertices at all points with integer coordinates, and horizontal and vertical edges
between vertices at unit distance; (ii) the pixel grid, where the only elements are pixels (unit squares). In the latter, one can choose between 4-neighbor or 8-neighbor grid
topology.
In this paper we adopt the pixel grid view with 4-neighbor topology.

The issues involved when moving from the real plane to a grid begin with the
definition of a line segment on a grid, known as a
\emph{digital straight segment}~\cite{klette}.
For example, it is already difficult to represent line segments such
that the intersection between any pair is a connected set (or empty). In general,
the challenge is to represent objects on a grid in such a way that certain properties
of those objects in the real plane transfer to related properties on the grid;
connectedness of the intersection of two line segments is an example of this.

While most of the research related to \emph{digital geometry} has the
graphics or vision perspective~\cite{klette,klette2}, computational geometry has made a number of
contributions as well. Besides finite-precision computational geometry~\cite{Devillers2006,yao}
these include snap rounding~\cite{berg,guibas,hershberger}, the integer hull~\cite{Althaus04,Harvey99},
and consistent digital rays with small Hausdorff distance~\cite{chun}.

\subparagraph{Mapping polygons} 
We consider the problem of representing a \emph{simple polygon} $P$
as a similar polygon in the grid (see Fig.~\ref{fig-killer}).
A \emph{grid cycle} is a simple cycle of edges and vertices of the grid graph.
A \emph{grid polygon} is a set of pixels whose boundary is a grid cycle.
This problem is motivated by
schematization of country or building outlines and by nonograms.

\begin{figure}[t]
	\centering
	\includegraphics[width=0.9\linewidth]{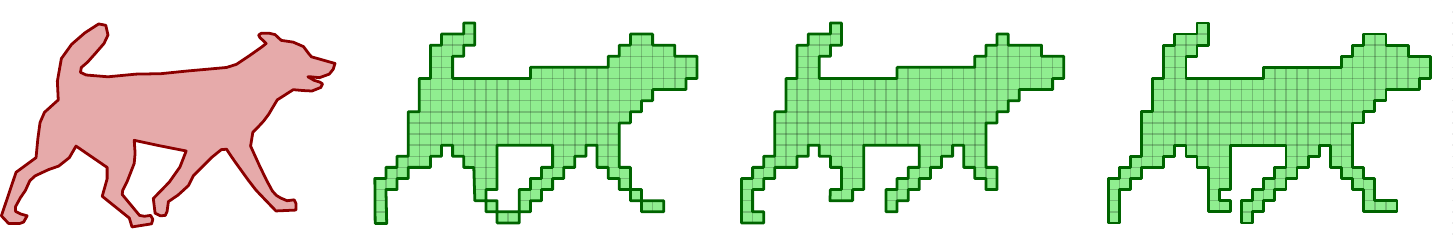}
	\caption{From left to right: input; symmetric-difference optimal result is not a grid polygon; grid polygon computed by our Fr\'echet algorithm; grid polygon computed by our Hausdorff algorithm.}
	\label{fig-killer}
\end{figure}

The most well-known form of schematization in cartography are metro maps, where metro
lines are shown in an abstract manner by polygonal lines whose edges typically have
only four orientations. It is common to also depict region outlines with these
orientations on such maps. It is possible to go one step further in schematization by
using only integer coordinates for the vertices, which often aligns vertices
vertically or horizontally, and leads to a more abstracted view. Certain types of cartograms
like mosaic maps~\cite{cano} are examples of maps following this visualization style.
The version based on a square grid is often used to show electoral votes after elections.
Another cartographic application of grid polygons lies in the schematization of building outlines~\cite{meulemans}.

Nonograms---also known as Japanese or picture logic puzzles---are popular
in puzzle books, newspapers, and in digital form. The objective is to reconstruct a
pixel drawing from a code that is associated with every row and column. 
The algorithmic problem of solving these puzzles is well-studied and
known to be NP-complete~\cite{Berend14}.
To generate a nonogram from a vector drawing, a grid polygon on a coarse grid should be made.
We are interested in the generation of grid polygons from shapes like animal outlines,
which could be used to construct nonograms. To our knowledge, two papers address
this problem. Ort\'iz-Garc\'ia et al.~\cite{Ortiz-Garcia2007} study the problem
of generating a nonogram from an image; both the black-and-white and color versions
are studied. Their approach uses image processing techniques and
heuristics. Batenburg et al.~\cite{batenburg2009} also start with an image, but
concentrate on generating nonograms from an image with varying difficulty levels,
according to some definition of difficulty.

Considering the above, our work also relates to image downscaling (e.g. \cite{kopf2013}), though this usually starts from a raster image instead of continuous geometric objects.
Kopf~et al.~\cite{kopf2013} apply their technique to vector images, stating that the outline remains connected where possible.
In contrast to our work, the quality is not measured as the geometric similarity and the conditions necessary to guarantee a connected outline remain unexplored.

\subparagraph{Similarity}
There are at least three common ways of defining the similarity of two simple polygons: the symmetric difference\footnote{The symmetric difference between two sets $A$ and $B$ is defined as the set $(A\setminus B)\cup(B\setminus A)$. When using symmetric difference as a quality measure, we actually mean the area of the symmetric difference.}, the Hausdorff distance~\cite{alt1}, and the Fr\'echet
distance~\cite{alt2}. 
The first does not consider similarity of the polygon boundaries, whereas the third usually applies
to boundaries only. 
The Hausdorff distance between polygon interiors and between polygon boundaries both exist and are different measures; this distance can be directed or undirected.
Let $X$ and $Y$ be two closed subsets of a metric space.
The (directed) \emph{Hausdorff distance} $d_H(X, Y)$ from $X$ to $Y$ is defined as the
maximum distance from any point in $X$ to its closest point in $Y$. The undirected version
is the maximum of the two directed versions.
To define the \emph{Fr\'echet distance}, let $X$ and $Y$ be two curves in the plane.
The Fr\'echet distance $d_F(X,Y)$ is the minimum leash length needed to let a man walk over $X$ and
a dog over $Y$, where neither may walk backwards (a formal definition
can be found in~\cite{alt2}).

\begin{figure}
	\centering
	\includegraphics{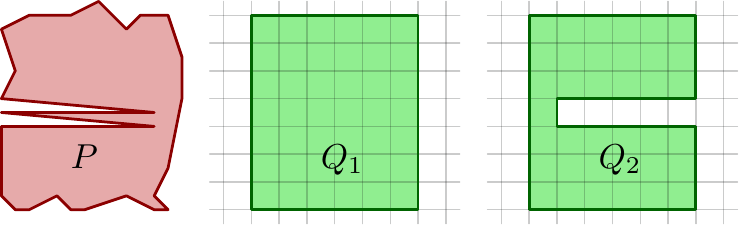}
	\caption{$d_H(P,Q_1)$ is small but $d_H(\partial P,\partial Q_1)$ is not.
	$d_H(P,Q_2)$ and $d_H(\partial P,\partial Q_2)$ are both small but the Fr\'echet
	distance $d_F(\partial P,\partial Q_2)$ is not.}
	\label{fig:intro}
\end{figure}

\subparagraph{Contributions}
In Section~\ref{sec:hausdorff} we show that any simple polygon $P$ admits a grid polygon $Q$
with $d_H(P, Q) \leq \frac{1}{2} \sqrt{2}$ and $d_H(Q, P) \leq \frac{3}{2} \sqrt{2}$ on
the unit grid.
Furthermore, the constructed polygon satisfies the same bounds between the boundaries $\partial P$ and $\partial Q$.
This is not equivalent, since the point that realizes the maximum smallest distance to the other
polygon may lie in the interior (Fig.~\ref{fig:intro}).
Our proof is constructive, but the construction often does not give intuitive results
(Fig.~\ref{fig:intro}, $P$ and $Q_2$).
Therefore, we extend our construction with heuristics that reduce the symmetric difference whilst keeping the Hausdorff
distance within $\frac{3}{2} \sqrt{2}$.
The Fr\'echet distance $d_F$~\cite{alt2} between two polygon boundaries is often considered
to be a better measure for similarity.
Unlike the Hausdorff distance, however, not every polygon boundary $\partial P$
can be represented by a grid cycle with constant Fr\'echet distance.
In Section~\ref{sec:frechet} we present a condition on the input polygon boundary
related to fatness (in fact, to $\kappa$-straightness~\cite{alt3}) and show
that it allows a grid cycle representation with constant Fr\'echet distance.
Finally, in Section~\ref{sec:experimental-results} we evaluate how our algorithms perform on realistic input polygons.

\section{Hausdorff distance}
\label{sec:hausdorff}

We consider the problem of constructing a grid polygon $Q$ with small Hausdorff distance to $P$. In Appendix~\ref{app-hausdorff} we prove the theorem below.
In this section we present an algorithm that achieves low Hausdorff distance between both the boundaries and the interiors of the input polygon $P$ and the resulting grid polygon $Q$. We first show how to construct such a grid polygon. Then, in Section~\ref{sec:hausdorff-algorithm}, we provide an efficient algorithm to compute $Q$. Finally, we describe heuristics that can be used to improve the results in practice.

\begin{theorem}
	\label{thm:hard}
	Given a polygon $P$, it is NP-hard to decide whether there exists a grid polygon $Q$ such that both $d_H(\partial P, \partial Q) \leq \frac{1}{2}$ and $d_H(\partial Q, \partial P)\leq \frac{1}{2}$.
\end{theorem}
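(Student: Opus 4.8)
The plan is to reduce from a known NP-hard problem. Since we are dealing with a constraint that forces the grid polygon's boundary to stay within distance $\frac{1}{2}$ of $\partial P$ (and vice versa), the boundary of $Q$ is essentially pinned to a narrow corridor around $\partial P$; the choice that remains is which grid edges to use inside that corridor, and the global constraint is that these choices must form a single simple cycle. This smells like a problem where local gadgets have a small number of valid configurations and global consistency is the hard part --- so I would aim for a reduction from a planar constraint-satisfaction style problem, most naturally \textsc{Planar 3-SAT} or a Hamiltonicity/non-crossing-perfect-matching type problem on grid-like graphs. Planar 3-SAT is the most likely candidate because the polygon $P$ can be drawn in the plane and the corridor around $\partial P$ is itself planar.

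Concretely, I would first analyze the ``corridor'' structure: fix a small grid (say the unit grid) and observe that $d_H(\partial P,\partial Q)\le\frac12$ forces every point of $\partial Q$ to lie within $\frac12$ of $\partial P$, and $d_H(\partial Q,\partial P)\le\frac12$ forces $\partial P$ to be covered by the $\frac12$-neighborhood of $\partial Q$. Together these sandwich $\partial Q$ into a thin band; I would choose the geometry of $P$ so that within this band the admissible grid edges form a union of ``tracks'' with occasional branch points where the cycle must make a binary choice. The second step is to design the gadgets: a \emph{variable gadget} that forces a consistent binary choice (which of two parallel grid paths the cycle uses) propagated along a track; a \emph{wire/track gadget} that carries this bit without loss; a \emph{negation} or \emph{split} gadget to fan out a variable to several clauses; and a \emph{clause gadget} --- a small region where three incoming wires meet and where a simple closed grid curve within the allowed band exists if and only if at least one of the three incoming bits is ``true''. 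The third step is to stitch all gadgets into a single simple polygon $P$ (one boundary curve), checking that the only freedom for $\partial Q$ is exactly the intended per-gadget choices, that $Q$ is forced to be a single simple cycle (no pinch points, no disconnection), and that the Hausdorff bounds are met with equality only in the intended way.

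The main obstacle, I expect, is the clause gadget together with the global simplicity/connectivity requirement. It is easy to make local gadgets with two states; it is much harder to ensure that (a) the clause gadget is satisfiable exactly under an OR condition rather than, say, parity, and (b) the many local cycles glue into one simple cycle --- grid polygons are very rigid, and a naive gadget layout tends to force the boundary to either self-intersect or pinch into a figure-eight, which is not a grid polygon. I would handle this by making the band locally ``one-dimensional'' (width essentially one grid unit) everywhere except at designed junctions, so that $\partial Q$ behaves like a set of forced path segments plus a few switches, and by routing so that an Eulerian/parity argument guarantees a single cycle. A secondary technical point is showing NP-hardness is genuine (membership in NP is not claimed here, so I need only hardness), and that the construction has polynomial size --- this follows once each gadget has $O(1)$ size and the planar SAT instance is embedded on an $O(n)\times O(n)$ grid.

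A cleaner alternative, which I would fall back on if the SAT reduction gets unwieldy, is to reduce directly from a grid problem such as Hamiltonian cycle in solid/planar grid graphs or non-crossing spanning structures: build $P$ so that the forced corridor for $\partial Q$ is (a thickening of) a given grid graph $G$, with the Hausdorff constraints precisely enforcing that $\partial Q$ visits certain mandatory ``room'' pixels, so that a valid $Q$ corresponds to a Hamiltonian-type cycle in $G$. In either approach the skeleton of the proof is the same: (1) characterize the admissible $\partial Q$ as a combinatorial object living in a thin neighborhood of $\partial P$; (2) give local gadgets whose only freedom is a bit; (3) give a clause/global gadget enforcing the hard constraint; (4) verify polynomial size, simplicity of $P$, and that a satisfying assignment corresponds exactly to a legal grid polygon $Q$, and conversely. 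The delicate verification in step (4) --- that ``legal $Q$'' cannot cheat by using unexpected grid edges or by degenerating --- is where the real work lies.
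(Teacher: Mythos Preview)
Your proposal is a strategy outline rather than a proof: you name two candidate reductions and correctly identify where the difficulties lie, but you do not actually construct any gadgets or verify them. That said, your instincts are sound, and in fact your \emph{fallback} plan is precisely what the paper does. The paper reduces from Hamiltonian cycle in (partial) grid graphs, not from Planar 3-SAT. Each edge of the input grid graph $G$ becomes an \emph{edge gadget}: two $\epsilon$-close zigzag chains with a pair of clipped corners at the midpoint, so that within Hausdorff distance $\tfrac12$ the boundary $\partial Q$ has exactly two legal behaviours --- either traverse the full zigzag (edge ``used'') or make a U-turn from each end without crossing the midpoint (edge ``unused''). Each vertex becomes a \emph{vertex gadget}: a small cross-shaped pattern containing nine mandatory grid points that $\partial Q$ must hit; the geometry forces $\partial Q$ to enter along one incident edge in zigzag mode, sweep the nine points, and leave along exactly one other incident edge, while any remaining incident edges are covered by U-turns. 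Thus a valid $Q$ selects, at every vertex, exactly two ``through'' edges --- i.e.\ a Hamiltonian cycle. Holes in the resulting polygon $P$ are eliminated by small cuts through some edge gadgets that do not affect the local analysis.

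Your primary route via Planar 3-SAT is plausible but, as you yourself note, the clause (OR) gadget and the global single-cycle requirement are genuinely delicate here; the Hamiltonian-cycle reduction sidesteps both, because the ``exactly two through-edges per vertex'' condition is naturally local and the single-cycle condition is exactly the Hamiltonicity you are reducing from. So the paper's approach is the cleaner of your two options; what is missing from your write-up is the concrete gadget geometry (the clipped-corner zigzag and the cross vertex) and the forbidden/forced-point analysis that pins down the two states per edge and the enter-once-exit-once behaviour per vertex.
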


\subsection{Construction}
\label{sec:hausdorff-construction}
We represent the grid polygon $Q$ as a set of cells (or pixels). We say that two cells are \emph{adjacent} if they share a segment. If two cells share only a point, then they are \emph{point-adjacent}. If two cells $c_1 \in Q$ and $c_2 \in Q$ are point-adjacent, and there is no cell $c \in Q$ that is adjacent to both $c_1$ and $c_2$, then $c_1$ and $c_2$ share a \emph{point-contact}. We construct $Q$ as the union of four sets $Q_1$, $Q_2$, $Q_3$, $Q_4$ (not necessarily disjoint). To define these sets, we define the \emph{module} $\module{c}$ of a cell $c$ as the $2 \times 2$-region centered at the center of $c$ (see Fig.~\ref{fig:module}). Furthermore, we assume the rows and columns are numbered, so we can speak of even-even cells, odd-odd cells, odd-even cells, and even-odd cells. The four sets are defined as follows; see also Fig.~\ref{fig:hausdorff-example}.

\begin{figure}[t]
	\begin{minipage}[b]{.27\textwidth}
		\centering
		\includegraphics{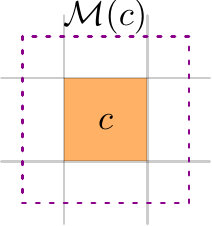}
		\caption{Module $\module{c}$ (dashed) of a cell $c$.}
		\label{fig:module}
	\end{minipage}
	\hfill
	\begin{minipage}[b]{.67\textwidth}
		\centering
		\includegraphics{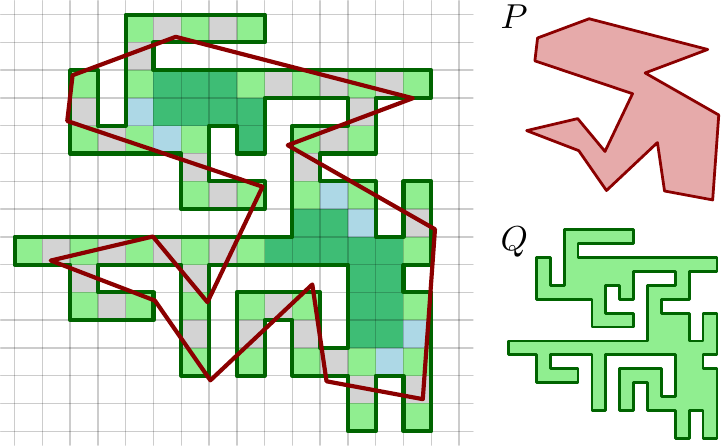}
		\caption[Example of the Hausdorff algorithm; the input and output are shown on the right.]{Example of the Hausdorff algorithm; the input and output are shown on the right. Colors: \includegraphics[page=1]{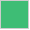}~$Q_1$, \includegraphics[page=2]{hausdorff-legend}~$Q_2$, \includegraphics[page=3]{hausdorff-legend}~$Q_3$, \includegraphics[page=4]{hausdorff-legend}~$Q_4$.}
		\label{fig:hausdorff-example}
	\end{minipage}
\end{figure}

\begin{description}
	\item[$\boldsymbol{Q_1}$:] All cells $c$ for which $\module{c} \subseteq P$.
	\item[$\boldsymbol{Q_2}$:] All even-even cells $c$ for which $\module{c} \cap P \neq \emptyset$.
	\item[$\boldsymbol{Q_3}$:] For all cells $c_1, c_2 \in Q_1 \cup Q_2$ that share a point-contact, the two cells that are adjacent to both $c_1$ and $c_2$ are in $Q_3$.
	\item[$\boldsymbol{Q_4}$:] A minimal set of cells that makes $Q$ connected, and where each cell $c \in Q_4$ is adjacent to two cells in $Q_2$ and $\module{c} \cap P \neq \emptyset$.
\end{description}

Set $Q_1 \cup Q_2$ is sufficient to achieve the desired Hausdorff distance. 
We add $Q_3$ to resolve point-contacts, and $Q_4$ to make the set $Q$ simply connected (a polygon without holes).
The lemma below is the crucial argument to argue that $Q$ is a grid polygon.

\begin{lemma}
	\label{lem:holefree}
	The set $Q_1 \cup Q_2$ is hole-free, even when including point-adjacencies.
\end{lemma}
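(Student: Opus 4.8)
The plan is to argue by contradiction: suppose $Q_1 \cup Q_2$ has a hole, i.e., there is a bounded connected component $H$ of the complement (of the union of the cells, where we must be careful to treat point-adjacencies as \emph{not} connecting, so a $2\times 2$ checkerboard of cells counts as enclosing its shared corner). I would first reduce the statement to a local combinatorial fact about a single grid vertex $v$: since $Q_2$ contains \emph{every} even-even cell whose module meets $P$, and the module of an even-even cell is a $2\times 2$ block, the cells of $Q_2$ already "tile" the plane coarsely; the only way the boundary of a hole can wind around is through configurations near a vertex $v$ where some of the four incident cells are present and some are not. So the key claim is: for every grid vertex $v$, the set of cells in $Q_1\cup Q_2$ incident to $v$ is never exactly the two diagonal cells (a point-contact), \emph{and} never a configuration that would let the complement pass "around" $v$ while staying connected — in fact it suffices to rule out the diagonal-pair configuration and then invoke a standard discrete Jordan-curve / Euler-characteristic argument.

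The heart of the argument is therefore: \emph{no two diagonally-adjacent cells $c_1,c_2$ lie in $Q_1\cup Q_2$ while the other two cells $c_3,c_4$ around their common vertex $v$ both lie outside $Q_1\cup Q_2$.} I would split on which sets $c_1,c_2$ come from. Case (a): both $c_1,c_2 \in Q_2$. Then $c_1,c_2$ are even-even cells; but two even-even cells can only be diagonally adjacent if their modules overlap in a shared $1\times1$ region — more precisely, the modules of the four cells around $v$, when the cells are even-even, cover a neighborhood of $v$, and if $c_1,c_2\in Q_2$ then $\module{c_1}\cap P\neq\emptyset$ and $\module{c_2}\cap P\neq\emptyset$; I then show the module of at least one of $c_3,c_4$ must also meet $P$ (because $\module{c_1}\cup\module{c_2}$ is contained in $\module{c_3}\cup\module{c_4}$ near $v$, using that $P$ is connected and the modules are $2\times 2$), forcing $c_3$ or $c_4$ into $Q_2$ — contradiction. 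Case (b): $c_1\in Q_1$ (so $\module{c_1}\subseteq P$, a full $2\times 2$ neighborhood of $c_1$'s center lies in $P$). Then a suitable $1\times1$ sub-square around $v$ lies in $P$, and that sub-square is inside $\module{c_3}$ (and $\module{c_4}$), so $\module{c_3}\cap P\neq\emptyset$; if $c_3$ happens to be even-even this already puts it in $Q_2$. When $c_3$ is not even-even, I instead use that one of the four cells at $v$ is even-even — say it is $c_3$ or $c_4$ — and the module of that even-even cell meets $P$ for the same reason, so it is in $Q_2$, again a contradiction. The bookkeeping of parities (exactly one of the four cells around any vertex is even-even, one is odd-odd, and two are mixed) is what makes this go through.

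Having excluded the diagonal-pair configuration at every vertex, I would finish with the standard argument that a union of grid cells with no "pinch point" (no vertex incident to exactly two cells that are diagonal) and which is edge-connected has a complement whose bounded components are themselves unions of cells bounded by grid cycles — and then rule out such a bounded complementary component directly: any cell $c$ in such a hole would have its module $\module{c}$ enclosed by cells of $Q_1\cup Q_2$, but walking a straight horizontal line from $c$ out of the hole crosses the boundary of $P$ an even number of times relative to the modules, and one shows the first cell of $Q_1\cup Q_2$ hit must actually force $c$ itself into $Q_2$ (its module meets $P$ because a neighbor's module does and $P$ is connected across the shared strip). Alternatively, and more cleanly, I would phrase this last step as: the union of all modules $\{\module{c} : c \text{ even-even}, \module{c}\cap P\neq\emptyset\}$ already covers a connected closed neighborhood of $P$, hence is hole-free, and $Q_1\cup Q_2$ differs from the corresponding cell set only by adding more cells, which cannot create a hole once pinch points are excluded.

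The main obstacle I expect is the last step — turning "no diagonal pinch at any vertex" plus "$P$ is connected" into a clean proof that the bounded complementary components vanish. The parity case analysis in the middle is fiddly but routine; the genuinely delicate point is handling the interaction between $Q_1$ (a purely local, parity-independent condition) and $Q_2$ (a coarse, parity-dependent condition) at a vertex where the four incident cells have all different parities, since there one must chase the module of the \emph{even-even} cell among the four and argue its module meets $P$ whenever two opposite cells' modules do. I would isolate this as a short sublemma: \emph{if $v$ is a grid vertex and $P$ meets the open unit square having $v$ as a corner on the $c_1$ side, then $P$ meets $\module{c}$ for the even-even cell $c$ incident to $v$} — which follows because that unit square lies inside $\module{c}$.
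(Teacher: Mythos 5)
Your proof has a genuine gap, and it sits exactly at the step you flag as the ``main obstacle'': ruling out a bounded complementary component. The decisive point is that the lemma is \emph{false} if $P$ is merely connected --- for a thick annular region, $Q_1\cup Q_2$ is itself an annulus of cells --- so any correct proof must use that $P$ is a \emph{simple} polygon, i.e., simply connected. Your argument only ever invokes connectivity of $P$, so it cannot close. Concretely, the three justifications you offer for the final step all fail. (i) Excluding diagonal ``pinch'' configurations does not prevent holes: a $3\times 3$ block minus its center has no pinch point yet has a hole, and adding cells to a hole-free set can close a U into an O with no pinch point anywhere; so ``$Q_1\cup Q_2$ differs from a hole-free set only by adding cells'' proves nothing. (Point-contacts are a separate issue, handled by $Q_3$ in Lemma~\ref{lem-hs-q-is-simple}.) (ii) ``The union of the modules of the $Q_2$ cells covers a connected closed neighborhood of $P$, hence is hole-free'' is a non sequitur: a closed neighborhood of a simply connected set need not be simply connected (a thin, nearly closed C-shaped $P$ yields an annular union of modules). (iii) ``$c$'s module meets $P$ because a neighbor's module does and $P$ is connected across the shared strip'' is false; $P$ can meet the neighbor's module and never enter $\module{c}$. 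Smaller defects: your Case~(a) is vacuous, since two distinct even-even cells are never diagonally adjacent (indeed never adjacent at all --- this isolation of $Q_2$ cells is the useful fact, not an obstacle), and your closing sublemma is wrong as stated, because a unit cell incident to a vertex $v$ is not contained in the module of the even-even cell at $v$; only a half or a quarter of it is.

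For comparison, the paper's proof considers the ring $B$ of cells of $Q_1\cup Q_2$ surrounding a putative hole $H$: because $Q_2$ cells are pairwise non-(point-)adjacent, consecutive cells of the ring are glued together by $Q_1$ cells, whose modules lie entirely in $P$ and whose union therefore contains a closed curve $C\subset P$ enclosing $H$. Simple connectivity of $P$ forces the interior of $C$ --- in particular the modules of all cells of $H$ --- into $P$, so every cell of $H$ is in $Q_1$, a contradiction. This Jordan-curve-inside-$P$ argument is the idea missing from your proposal.
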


\begin{wrapfigure}[11]{r}{0.34\textwidth} 
	\centering	
	\includegraphics{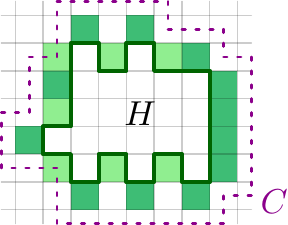}
	\vspace{-.1em}
	\caption{A hole in $Q$. Colors: \protect\includegraphics[page=1]{hausdorff-legend}~$Q_1 \cap B$; \protect\includegraphics[page=2]{hausdorff-legend}~$Q_2 \cap B$.}
	\label{fig:lemma-1}
\end{wrapfigure}
\noindent {\bfseries{Proof.}} 
For the sake of contradiction, let $H$ be a maximal set of cells comprising a hole. Let set $B$ contain all cells in $Q_1 \cup Q_2$ that surround $H$ and are adjacent to a cell in $H$. Since $Q_2$ contains only even-even cells, every cell in $Q_2 \cap B$ is (point-)adjacent to two cells in $Q_1 \cap B$ (see Fig.~\ref{fig:lemma-1}). Hence, the outer boundary of the union of all modules of cells in $Q_1 \cap B$ is a single closed curve $C$. Since $C \subset P$ due to the definition of $Q_1$, the interior of $C$ must also be in $P$. Since the module of every cell in $H$ lies completely inside $C$, they are also in $P$, so the cells in $H$ must all be in $Q_1$. This contradicts that $H$ is a hole.
\hfill$\square$ 

\begin{lemma}
\label{lem-hs-q-is-simple}
	The set $Q$ is simply connected and does not contain point-contacts.
\end{lemma}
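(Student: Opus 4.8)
The plan is to establish the two claims of Lemma~\ref{lem-hs-q-is-simple} separately: first that $Q$ is simply connected (connected and hole-free), and then that $Q$ contains no point-contacts. For connectivity, I would argue that $Q_1 \cup Q_2$ already consists of few connected components that track the structure of $P$, and that $Q_3$ only adds cells between already-present point-adjacent cells, so it does not break anything; finally $Q_4$ is defined precisely as a set of cells that makes $Q$ connected, so connectivity follows by construction (one should note that such a set $Q_4$ always exists, because every even-even cell whose module meets $P$ that is needed as a ``bridge'' has a neighbour whose module meets $P$, allowing a walk along the connected polygon $P$ — this is the point where I would lean on a short geometric observation about $P$ being connected).

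For hole-freeness, the key is to bootstrap from Lemma~\ref{lem:holefree}, which already tells us $Q_1 \cup Q_2$ is hole-free even accounting for point-adjacencies. Adding cells can only destroy hole-freeness if the added cells enclose a bounded complementary region; so I would show that neither $Q_3$ nor $Q_4$ can create a new hole. For $Q_3$: each cell added to $Q_3$ sits in a $2\times 2$ block together with the two point-contacting cells $c_1,c_2 \in Q_1\cup Q_2$ and one more cell; filling in the ``cross-diagonal'' cell of such a block can only merge regions, not enclose a new one, because locally the complement of the block's added cell near the point-contact was already pinched to a single point. For $Q_4$: by definition $Q_4$ is a \emph{minimal} set making $Q$ connected, so no cell of $Q_4$ is redundant; a hole, if it appeared, would have a boundary consisting of cells of $Q$, and I would derive a contradiction with minimality (a cell of $Q_4$ strictly inside or bordering the hole could be removed while preserving connectivity, since connectivity would be witnessed ``the other way around'' the hole). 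The combination of Lemma~\ref{lem:holefree} with these two local arguments gives that $Q$ is hole-free, hence simply connected.

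For the absence of point-contacts in the final set $Q$: recall a point-contact between $c_1,c_2\in Q$ means they are point-adjacent and no cell of $Q$ is adjacent to both. I would enumerate where a point-contact could arise. Point-contacts inside $Q_1\cup Q_2$ are exactly the ones $Q_3$ is designed to kill: for each such pair, $Q_3$ inserts (at least) one of the two common neighbours, so after adding $Q_3$ those pairs are no longer point-contacts. It then remains to check that $Q_3$ and $Q_4$ do not \emph{introduce} new point-contacts: a cell $c\in Q_3$ is, by construction, adjacent to two cells $c_1,c_2$, and any cell point-adjacent to $c$ shares a side with $c_1$ or $c_2$ (a case check on the $2\times 2$ block structure), so it cannot be in point-contact with $c$; for $c\in Q_4$, $c$ is adjacent to two cells of $Q_2$, and again a local case analysis on the possible configurations (the two $Q_2$-neighbours of $c$ are opposite or orthogonal) shows any cell point-adjacent to $c$ shares a side with one of those $Q_2$-cells or with $c$ itself through another cell of $Q$.

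The main obstacle I anticipate is the $Q_4$ analysis: because $Q_4$ is only specified as ``a minimal set making $Q$ connected with certain local properties,'' I cannot point to an explicit structure, so both the no-new-hole argument and the no-new-point-contact argument must be made to work for \emph{every} valid choice of $Q_4$. The clean way to handle the hole question is the minimality/redundancy contradiction sketched above; the point-contact question for $Q_4$-cells reduces to verifying that the required property ``$c$ is adjacent to two cells in $Q_2$'' combined with hole-freeness already forbids a pinch at a corner of $c$. I would make both of these precise with a small figure enumerating the local $2\times 2$ and $3\times 3$ neighbourhoods rather than a long calculation.
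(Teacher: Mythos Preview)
Your overall structure is sound, but there is a concrete error in the $Q_3$ analysis. You claim that for $c\in Q_3$ ``any cell point-adjacent to $c$ shares a side with $c_1$ or $c_2$.'' This is false. Take $c_1=(0,0)$, $c_2=(1,1)$, $c=(1,0)$; then $(2,-1)$ is point-adjacent to $c$ but shares no edge with either $c_1$ or $c_2$. So knowing only that $c$ has two adjacent cells in $Q$ does not rule out a new point-contact at the corner of $c$ diagonally opposite the original point-contact, nor does your local ``pinched complement'' picture rule out a new hole being closed off on that far side.

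The paper fixes exactly this corner by a parity argument you are missing: since two even-even cells are never point-adjacent, one of $c_1,c_2$ lies in $Q_1$; and if that cell were not odd-odd, an even-even neighbour in $Q_2$ would already kill the point-contact. Hence w.l.o.g.\ $c_1$ is odd-odd with $\module{c_1}\subseteq P$, which forces all four even-even point-neighbours of $c_1$ into $Q_2$. One of those is the \emph{third} edge-neighbour of $c$, so $c$ has three (not two) neighbours in $Q_1\cup Q_2$. With three neighbours present, every corner of $c$ has at least one adjacent edge already in $Q$, so adding $c$ can create neither a point-contact nor a hole. Your $Q_4$ sketch has the same flavour of gap: the ``orthogonal'' case you plan to analyse would in fact allow a point-contact at the far corner, but parity of $Q_2$ makes that case impossible (the two $Q_2$-neighbours of a $Q_4$-cell must be opposite). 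Once you incorporate these parity observations, the rest of your plan (Lemma~\ref{lem:holefree} for the base case, minimality of $Q_4$ for hole-freeness, and the path-in-$P$ argument for existence of $Q_4$) matches the paper.
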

\begin{proof}
	Consider a point-contact between two cells $c_1, c_2 \in Q_1 \cup Q_2$ and a cell $c \notin Q_1 \cup Q_2$ that is
	adjacent to both $c_1$ and $c_2$ ($c \in Q_3$). Since $Q_2$ contains only even-even cells, we may assume that $c_1
	\in Q_1$. Recall that $\module{c_1} \subseteq P$ by definition. We may further assume that $c_1$ is an odd-odd cell, for
	otherwise a cell in $Q_2$ would eliminate the point-contact. Hence, all cells point-adjacent to $c_1$ are in $Q_1 \cup Q_2$,
	and thus $c$ has three adjacent cells in $Q_1 \cup Q_2$. This implies that adding $c \in Q_3$ to $Q_1 \cup
	Q_2$ cannot introduce point-contacts or holes. Similarly, cells in $Q_4$ connect two oppositely adjacent cells in
	$Q_2$, and thus cannot introduce point-contacts (or holes, by definition). Combining this with Lemma~\ref{lem:holefree} implies that $Q$ is hole-free and does not contain point-contacts.

	It remains to show that $Q$ is connected, that is, the set $Q_4$ exists. Consider two cells $c_1, c_2 \in Q$. We show that $c_1$ and $c_2$ are connected in $Q$. We may further assume that $c_1, c_2 \in Q_2$, as
	cells in $Q_1 \cup Q_3 \cup Q_4$ must be adjacent or point-adjacent to a cell in $Q_2$. Let $p \in \module{c_1} \cap P$,
	$q \in \module{c_2} \cap P$ and consider a path $\pi$ between $p$ and $q$ inside $P$. Every even-even cell $c$ with $\module{c} \cap \pi \neq \emptyset$ must be in $Q_2$. Furthermore, the modules of even-even cells cover the plane. Every cell connecting a consecutive pair of even-even cells intersecting $\pi$ satisfies the conditions of $Q_4$, and thus can be added to make $c_1$ and $c_2$ connected in $Q$.
\end{proof}

\subparagraph{Upper bounds}
To prove our bounds, note that $\module{c} \cap P \neq \emptyset$ holds for every
cell $c \in Q$. This is explicit for cells in $Q_1$, $Q_2$, and $Q_4$. For
cells in $Q_3$, note that these cells must be adjacent to a cell in $Q_1$, and thus contain a point in $P$.

\begin{lemma}\label{lem:upperboundPtoQ}
	$d_H(P, Q),\,d_H(\partial P, \partial Q) \leq \frac{1}{2}\sqrt{2}$.
\end{lemma}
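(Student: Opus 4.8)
The plan is to bound the two directed Hausdorff distances separately, using the single unifying fact already recorded in the text: for every cell $c\in Q$ we have $\module{c}\cap P\neq\emptyset$. For $d_H(P,Q)$, I would take an arbitrary point $p\in P$ and exhibit a cell of $Q$ whose center is within $\frac{1}{2}\sqrt 2$ of $p$. Let $c$ be the cell containing $p$ (ties broken arbitrarily). Then $p$ lies in the closed unit square $c$, so the distance from $p$ to the center of $c$ is at most half the diagonal, i.e. $\frac{1}{2}\sqrt 2$. It therefore suffices to argue $c\in Q$: since $p\in P$ and $c\subseteq\module{c}$ (actually $p\in c$ gives $p\in\module{c'}$ for the relevant even-even cell $c'$ as well), the even-even cell $c'$ whose module contains $c$ satisfies $\module{c'}\cap P\ni p$, hence $c'\in Q_2\subseteq Q$, and $c'$'s center is within $\frac12\sqrt2$ of $p$ because $p$ lies in $\module{c'}$, a $2\times2$ square — wait, that gives only $\sqrt2$. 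The cleaner route: the cell $c$ containing $p$ need not be in $Q$, but the center of $c$ is within $\frac12\sqrt2$ of $p$, so I instead show the center of $c$ (or some grid point/cell close to $p$) is covered; concretely, $p\in c\Rightarrow \module{c}\supseteq c$ so if $\module{c}\subseteq P$ then $c\in Q_1$; otherwise I fall back on the even-even covering. The precise bookkeeping of which of the four sets catches $c$ is the one place that needs care.

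For $d_H(\partial P,\partial Q)$, the argument is the same but restricted to boundary points: given $p\in\partial P$, the cell $c$ containing $p$ has $\module{c}\cap P\neq\emptyset$ (it contains $p$) and $\module{c}\not\subseteq P$ (since $p\in\partial P$, every neighbourhood of $p$ meets the exterior, and $\module{c}$ is a full $2\times2$ neighbourhood of $c$'s center containing $p$ in its interior or on its boundary). Thus $c$ is \emph{not} in $Q_1$. I then claim $c$ lies on $\partial Q$: either $c\notin Q$, in which case some cell of $Q$ adjacent to it (one exists because, as shown in Lemma~\ref{lem-hs-q-is-simple}, every cell near a point of $P$ is connected into $Q$ via $Q_2$/$Q_4$) is a boundary cell within $\frac12\sqrt2$ of $p$; or $c\in Q$ but then, since $\module{c}\not\subseteq P$ and the construction only adds cells whose modules meet $P$, a neighbouring cell of $c$ is absent from $Q$, so $c\in\partial Q$. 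Either way $\partial Q$ has a point (an edge of the grid cell $c$ or a neighbour) within distance $\frac12\sqrt2+$(something) of $p$; to keep the bound at exactly $\frac12\sqrt2$ I would actually locate a \emph{vertex} of $\partial Q$ or a point on an edge of $\partial Q$ inside the closed cell $c$, which is distance $\le\frac12\sqrt2$ from $p$ since both lie in the unit square $c$.

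The key steps, in order: (1) record the covering fact $\bigcup_{c\in Q_2}\module{c}=\mathbb{R}^2$ and that every cell lies in some even-even module; (2) for $d_H(P,Q)$, map $p\in P$ to the even-even cell $c'$ with $p\in\module{c'}$, note $c'\in Q_2$, and — crucially — observe that $p$ lies within $\frac12\sqrt2$ not of $c'$'s center but of \emph{the nearest corner of $c'$}, hence of $c'$ itself as a point set, giving $d(p,Q)\le\frac12\sqrt2$; (3) for $d_H(\partial P,\partial Q)$, show the cell $c\ni p$ is either outside $Q$ (so a boundary edge of $Q$ passes through $\overline{c}$, distance $\le\frac12\sqrt2$) or in $Q$ but adjacent to a non-$Q$ cell (so again $\partial Q\cap\overline{c}\neq\emptyset$); (4) in both cases, since $p$ and the witnessing point of $Q$ (resp.\ $\partial Q$) both lie in the closed unit cell $c$ (resp.\ $c'$), the distance is at most the cell diagonal length $\sqrt2$ — and actually $\le\frac12\sqrt2$ because the witnessing point can be taken to be a shared feature (corner or edge) of $c$ that is within half-diagonal of $p$.

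The main obstacle I anticipate is step (3): arguing that a point of $\partial Q$ — not merely a point of $Q$ or of a cell adjacent to $Q$ — lies within $\frac12\sqrt2$ of an arbitrary $p\in\partial P$. The subtlety is that $Q_3$ and $Q_4$ add cells precisely to fix up connectivity and point-contacts, and one must check these additions never push $\partial Q$ more than half a cell-diagonal away from $\partial P$; this is where the property "$\module{c}\cap P\neq\emptyset$ for every $c\in Q$, including $Q_3,Q_4$" (already established just before the lemma) does the work, since it forbids $Q$ from bulging into regions far from $P$. I expect the formal write-up to proceed by a short case analysis on whether the cell containing $p$ is in $Q_1$, in $Q\setminus Q_1$, or not in $Q$, each case yielding a boundary point of $Q$ inside the closed cell.
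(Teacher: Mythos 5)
Your treatment of $d_H(P,Q)$ is correct and is the paper's argument: the even-even cell $c$ with $p\in\module{c}$ lies in $Q_2$, and every point of a $2\times2$ module is within $\frac{1}{2}\sqrt{2}$ of the unit cell at its center. The gap is in the boundary half, exactly where you yourself flag ``the main obstacle.'' Anchoring the argument at the cell $c$ that \emph{contains} $p$ cannot yield the stated bound: even once you know that some edge or vertex of $\partial Q$ meets the closed cell $\overline{c}$, the only distance that follows from ``$p$ and the witness both lie in $\overline{c}$'' is $\sqrt{2}$ (or $1$, if the witness is an entire edge of $c$ and $p$ sits at the opposite corner), not $\frac{1}{2}\sqrt{2}$. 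Your repair --- ``the witnessing point can be taken to be a shared feature of $c$ that is within half-diagonal of $p$'' --- is precisely the claim that needs proof and is not true for an arbitrary boundary edge of $c$. Likewise, in the case $c\notin Q$, a cell of $Q$ merely \emph{adjacent} to $c$ need not be within $\frac{1}{2}\sqrt{2}$ of $p$.

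The missing idea is to anchor at the grid \emph{vertex} nearest to $p$ rather than at the cell containing $p$, and to exploit parity. The four cells whose modules contain $p$ form a $2\times2$ block around a grid vertex $q$ with $\|p-q\|_\infty\le\frac{1}{2}$, hence $\|p-q\|\le\frac{1}{2}\sqrt{2}$. The block's even-even cell is in $Q_2\subseteq Q$ since its module meets $P$ at $p$. Its odd-odd cell $c'$ is \emph{not} in $Q$: an odd-odd cell can enter $Q$ only through $Q_1$ ($Q_2$ is even-even by definition, and the cells added by $Q_3$ and $Q_4$ are always of mixed parity, being the common neighbours of an odd-odd/even-even diagonal pair, respectively of two even-even cells), and $c'\in Q_1$ would force $\module{c'}\subseteq P$, which is impossible because $\module{c'}$ contains the boundary point $p$ and hence points outside $P$ arbitrarily close to it. So $q$ is a corner shared by a cell of $Q$ and a cell whose interior is disjoint from $Q$, whence $q\in\partial Q$ and $d(p,\partial Q)\le\|p-q\|\le\frac{1}{2}\sqrt{2}$. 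Without this guarantee of an \emph{excluded} cell touching an \emph{included} cell at a vertex that is itself within $\frac{1}{2}\sqrt{2}$ of $p$, no case analysis on the single cell containing $p$ will close the bound.
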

\begin{proof}
	Let $p \in P$ and consider the even-even cell $c$ such that $p \in \module{c}$. Since $c \in Q_2$, the distance $d_H(p,
	Q) \leq d_H(p, c) \leq \frac{1}{2}\sqrt{2}$. Now consider a point $p \in \partial P$. There is a $2 \times 2$-set
	of cells whose modules contain $p$. This set contains an even-even cell $c \in Q$ and an odd-odd cell $c' \notin Q
	$. The latter is true, because odd-odd cells in $Q$ must be in $Q_1$. Therefore, the point $q$ shared by $c$ and
	$c'$ must be in $\partial Q$. Thus, $d_H(p, \partial Q) \leq d_H(p, q) \leq \frac{1}{2}\sqrt{2}$.
\end{proof}

\begin{lemma}\label{lem:upperboundQtoP}
	$d_H(Q, P),\,d_H(\partial Q, \partial P) \leq \frac{3}{2}\sqrt{2}$.
\end{lemma}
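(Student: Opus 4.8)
The plan is to bound, for each cell $c \in Q$, the distance from any point $x \in c$ to $P$, and separately the distance from any point $x \in \partial c$ (when $\partial c$ contributes to $\partial Q$) to $\partial P$. Since $d_H(Q,P) = \max_{c \in Q}\max_{x \in c} d(x,P)$, it suffices to produce, for every $x$ in every cell of $Q$, a witness point of $P$ within distance $\frac{3}{2}\sqrt{2}$. The key fact established just before the lemma is that $\module{c} \cap P \neq \emptyset$ for every $c \in Q$; I would leverage this uniformly.

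\textbf{Interior bound $d_H(Q,P)$.} Fix $c \in Q$ and let $y \in \module{c} \cap P$. For any $x \in c$, both $x$ and $y$ lie in $\module{c}$, which is a $2 \times 2$ square, so $d(x,y) \le$ (diameter of $\module{c}$) $= 2\sqrt{2}$. That is too weak by a factor $4/3$, so the argument must be sharpened: $x$ lies in the cell $c$, which is the central $1 \times 1$ subsquare of $\module{c}$, not anywhere in $\module{c}$. The farthest a point of $\module{c}$ can be from the central unit square $c$ is from a corner of $\module{c}$ to the opposite corner of $c$; that distance is $\sqrt{(3/2)^2 + (3/2)^2} = \frac{3}{2}\sqrt{2}$. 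Hence $d(x, y) \le \frac{3}{2}\sqrt{2}$ and so $d(x,P) \le \frac{3}{2}\sqrt{2}$, giving $d_H(Q,P) \le \frac{3}{2}\sqrt{2}$.

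\textbf{Boundary bound $d_H(\partial Q, \partial P)$.} Let $x \in \partial Q$; then $x$ lies on an edge (or vertex) of some cell $c \in Q$ that is shared with a cell $c' \notin Q$. I would argue that $\module{c} \cap P \ne \emptyset$ and $\module{c}$ also contains a point \emph{not} in $P$: indeed $\module{c'} \cap P$ must be nonempty-or-empty depending on $c'$, but more directly, since $c' \notin Q_2$ in particular $c'$ cannot be an even-even cell with $\module{c'} \cap P \ne \emptyset$ once we account for the structure — the cleanest route is to show $\module{c}$ itself meets $\partial P$. If $\module{c} \subseteq P$ then $c \in Q_1$; moreover every cell point-adjacent to an odd-odd $Q_1$-cell is in $Q$ and every even-even neighbour with module meeting $P$ is in $Q_2$, which (combined with Lemma~\ref{lem-hs-q-is-simple}, no point-contacts) forces $c$ not to be on $\partial Q$ unless a nearby module escapes $P$. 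So $\module{c}$ contains a point $z \notin P$, hence $\module{c}$ meets $\partial P$; let $y \in \module{c} \cap \partial P$. Then as above $d(x,y) \le \frac{3}{2}\sqrt{2}$ since $x \in \partial c \subseteq c$ and $y \in \module{c}$, giving $d_H(\partial Q, \partial P) \le \frac{3}{2}\sqrt{2}$.

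\textbf{Main obstacle.} The interior bound is a short geometric calculation once one is careful to use that $x$ lies in the central unit cell and not all of the module. The real work is the boundary bound: I need to certify that whenever a cell $c \in Q$ has an edge on $\partial Q$, its module $\module{c}$ actually intersects $\partial P$ (equivalently, is not contained in $P$). The subtle case is a cell on $\partial Q$ coming from $Q_3$ or $Q_4$, or an even-even $Q_2$-cell whose module lies deep inside $P$ but which is still on $\partial Q$ because a non-even-even neighbour was not added; I expect the resolution to mirror the case analysis in Lemma~\ref{lem-hs-q-is-simple} — using that odd-odd cells of $Q$ must be in $Q_1$, that point-contacts are resolved, and that $Q_4$ cells sit between two $Q_2$ cells — to conclude that a boundary edge of $Q$ always abuts a cell whose module reaches outside $P$, so that some point of $\partial P$ lies within $\module{c}$ and thus within $\frac{3}{2}\sqrt{2}$ of $x$.
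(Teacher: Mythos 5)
Your interior bound is exactly the paper's argument and is correct: any $x \in c$ and $y \in \module{c}$ differ by at most $3/2$ in each coordinate, so $d(x,y) \le \frac{3}{2}\sqrt{2}$, and $\module{c} \cap P \neq \emptyset$ supplies the witness. The boundary half, however, rests on a claim that is false: it is \emph{not} true that every cell $c \in Q$ contributing an edge to $\partial Q$ satisfies $\module{c} \cap \partial P \neq \emptyset$. Take $c \in Q_1$ with $\module{c}$ strictly in the interior of $P$, and let an edge-neighbour $c'$ (an odd-even cell, say) satisfy $\module{c'} \nsubseteq P$ while not being needed for $Q_3$ or $Q_4$; then $c' \notin Q$, the shared edge lies on $\partial Q$, yet $\partial P$ may pass only through the half-unit strip $\module{c'} \setminus \module{c}$ and miss $\module{c}$ entirely. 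Your intermediate observation that the point-adjacent (diagonal) neighbours of an odd-odd $Q_1$-cell land in $Q_2$ is true but beside the point: it is the \emph{edge}-adjacent neighbours that decide whether $c$ lies on $\partial Q$, and nothing forces those into $Q$. So the deduction ``hence $\module{c}$ contains a point $z \notin P$'' does not go through.

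The paper's fix is to widen the search to $\module{c} \cup \module{c'}$, where $c' \notin Q$ is the cell across the boundary edge containing $q$: if $c \notin Q_1$, then $\module{c}$ meets both $P$ and its complement, hence $\partial P$; if $c \in Q_1$, then $\module{c} \subseteq P$ forces $\module{c'} \cap P \neq \emptyset$ (the two modules overlap), while $\module{c'} \nsubseteq P$ (else $c' \in Q_1 \subseteq Q$), so $\module{c'}$ meets $\partial P$. This costs nothing in the distance bound, because $q \in \partial c \cap \partial c'$ lies in the closed cell $c'$ as well, and is therefore within $\frac{3}{2}\sqrt{2}$ of every point of $\module{c'}$ by the same coordinate-wise computation you used for $\module{c}$. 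With that one change your argument closes.
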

\begin{proof}
	Let $q$ be a point in $Q$ and let $c \in Q$ be the cell that contains $q$. Since $\module{c} \cap P \neq \emptyset$, we can choose a point $p \in \module{c} \cap P$. It directly follows that $d_H(q, P) \leq d_H(q, p) \leq \frac{3}{2}\sqrt{2}$. Now consider a point $q \in \partial Q$, and let $c \in Q$ and $c' \notin Q$ be two adjacent cells such that $q \in \partial c \cap \partial c'$. We claim that $(\module{c} \cup \module{c'}) \cap \partial P \neq \emptyset$. If $c \notin Q_1$, then $\module{c} \nsubseteq P$. As furthermore $\module{c} \cap P \neq \emptyset$, we have that $\module{c} \cap \partial P \neq \emptyset$. On the other hand, if $c \in Q_1$, then $\module{c} \subseteq P$, so $\module{c'} \cap P \neq \emptyset$. As furthermore $\module{c'} \nsubseteq P$ (otherwise $c' \in Q_1$), we have that $\module{c'} \cap \partial P \neq \emptyset$. Let $p \in (\module{c} \cup \module{c'}) \cap \partial P$. Then $d_H(q, \partial P) \leq d_H(q, p) \leq \frac{3}{2}\sqrt{2}$.
\end{proof}

\begin{theorem}
	For every simple polygon $P$ a simply connected grid polygon $Q$ without point-contacts exists such that
	$d_H(P, Q),\,d_H(\partial P, \partial Q) \leq \frac{1}{2}\sqrt{2}$ and $d_H(Q, P),\,d_H(\partial Q, \partial P) \leq \frac{3}{2}\sqrt{2}$.
\end{theorem}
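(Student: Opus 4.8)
The theorem is a synthesis of the construction of Section~\ref{sec:hausdorff-construction} and the lemmas proved above, so my plan is mostly to check that nothing is missing. I would take $Q = Q_1 \cup Q_2 \cup Q_3 \cup Q_4$ exactly as defined and first verify that $Q$ is well-defined, nonempty, and finite. Since $P$ is a simple polygon it is a bounded region with nonempty interior; because the modules of the even-even cells cover the plane, at least one even-even cell has its module meeting $P$, so $Q_2 \neq \emptyset$ and hence $Q \neq \emptyset$. Boundedness of $P$ means only finitely many cells $c$ satisfy $\module{c} \cap P \neq \emptyset$, which bounds $Q_1$ and $Q_2$, and since cells of $Q_3$ are adjacent to cells of $Q_1 \cup Q_2$ and cells of $Q_4$ are adjacent to two cells of $Q_2$, the sets $Q_3$ and $Q_4$ are finite as well. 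That a legal choice of $Q_4$ exists at all is exactly the connectivity half of Lemma~\ref{lem-hs-q-is-simple}.

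Next I would conclude that $Q$ is a simply connected grid polygon without point-contacts. Lemma~\ref{lem-hs-q-is-simple} states precisely that $Q$ is simply connected and contains no point-contacts, building on Lemma~\ref{lem:holefree} (hole-freeness of $Q_1 \cup Q_2$, even accounting for point-adjacencies) and then checking that adjoining $Q_3$ and $Q_4$ creates neither a hole nor a point-contact. It remains only to observe that a nonempty, finite, edge-connected, hole-free set of pixels with no point-contacts has a topological boundary consisting of a single simple closed curve running along grid edges --- the absence of point-contacts is exactly what prevents the boundary (or its complement) from pinching at a grid vertex --- and this curve is a grid cycle in the sense of Section~\ref{sec:intro}. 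Hence $Q$ is a grid polygon.

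Finally, the metric bounds are immediate: Lemma~\ref{lem:upperboundPtoQ} gives $d_H(P, Q),\, d_H(\partial P, \partial Q) \le \frac{1}{2}\sqrt{2}$, and Lemma~\ref{lem:upperboundQtoP} gives $d_H(Q, P),\, d_H(\partial Q, \partial P) \le \frac{3}{2}\sqrt{2}$. Combining these four inequalities with the structural conclusion of the previous paragraph proves the theorem.

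Because everything reduces to the earlier lemmas, the theorem statement itself presents no real obstacle; the difficulty I would expect to wrestle with, were I building this from scratch, is the design of the four-set construction so that all requirements hold simultaneously. The directed distance $d_H(\partial P, \partial Q)$ forces $Q$ to come close to every boundary point of $P$, which is why $Q_2$ takes every even-even cell whose module merely touches $P$; but the reverse bound $d_H(\partial Q, \partial P) \le \frac{3}{2}\sqrt{2}$ forbids including any cell whose module misses $P$, which is why every repair cell added in $Q_3$ and $Q_4$ is required to satisfy $\module{c} \cap P \neq \emptyset$. Reconciling ``add enough cells to be connected and to kill point-contacts'' with ``add no cell that strays from $P$ and introduce no new holes'' is the crux, and the parity bookkeeping --- even-even cells for coverage, odd-odd cells forced to lie in $Q_1$ --- is the device that makes the hole and point-contact arguments of Lemmas~\ref{lem:holefree} and~\ref{lem-hs-q-is-simple} tractable.
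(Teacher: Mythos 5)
Your proposal is correct and matches the paper's (implicit) proof: the theorem is stated there as a direct corollary of Lemma~\ref{lem-hs-q-is-simple} and Lemmas~\ref{lem:upperboundPtoQ} and~\ref{lem:upperboundQtoP}, exactly as you assemble it. Your added checks (non-emptiness, finiteness, and that a hole-free, connected, point-contact-free pixel set has a single grid cycle as boundary) are sound and merely make explicit what the paper leaves unstated.
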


\subparagraph{Lower bound}
Fig.~\ref{fig:hausdorff-lower-bound} illustrates a polygon $P$ for which no grid polygon $Q$ exists with low $d(Q, P)$. A naive construction results in a nonsimple polygon (left). To make it simple, we can either remove a cell (center) or add a cell (right). Both methods result in $d_H(Q, P) \geq 3/2 - \epsilon$. Alternatively, we can fill the entire upper-right part of the grid polygon (not shown), resulting in a high $d_H(Q, P)$. This leads to the following theorem.

\begin{theorem}
	For any $\epsilon > 0$, there exists a polygon $P$ for which no grid polygon $Q$ exists with $d(Q, P) < 3/2 - \epsilon$.
\end{theorem}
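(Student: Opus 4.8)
The plan is to prove the theorem by one explicit construction together with a short case analysis, essentially formalising Fig.~\ref{fig:hausdorff-lower-bound}. Throughout, $d(\cdot,\cdot)$ is the (undirected) Hausdorff distance, so it will suffice, for every candidate $Q$, to exhibit either a point of $P$ at distance at least $3/2-\epsilon$ from $Q$ or a point of $Q$ at distance at least $3/2-\epsilon$ from $P$. Fix a small parameter $\delta=\delta(\epsilon)>0$; choosing $\delta$ small enough in terms of $\epsilon$ will yield the bound. I would take $P$ to be the polygon of Fig.~\ref{fig:hausdorff-lower-bound}, described formally as two thin ``arms'' of width $\Theta(\delta)$ that approach a common grid vertex $v$ from perpendicular directions and are joined near $v$ by a narrow neck, positioned so that: (i) near $v$, the grid cells within distance $3/2-\epsilon$ of $P$ form a prescribed constant-size set; (ii) inside that set, the cells needed to cover $P$ along the two arms and near $v$ cannot be assembled into a simple cycle, so the naive grid polygon $Q_0$ is non-simple (disconnected with a point-contact at $v$, or carrying a spurious point-contact or hole); and (iii) the neck is deep enough relative to $\delta$ that the cheapest way to complete a simple cycle joining the two arms uses a cell whose farthest point lies at distance $3/2-O(\delta)$ from $P$. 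Property (iii) is exactly what produces the constant $3/2$.

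Next I would reduce an arbitrary good $Q$ to a local repair of $Q_0$. Suppose for contradiction that $Q$ is a grid polygon with $d(Q,P)<3/2-\epsilon$. Because $d_H(P,Q)<3/2-\epsilon$, $Q$ must cover the far end of each arm, and because $P$ is connected and $Q$ must be connected, $Q$ contains a connected chain of cells through the neck; because $d_H(Q,P)<3/2-\epsilon$, no cell of $Q$ lies farther than $3/2-\epsilon$ from $P$, so near $v$ the cells of $Q$ are confined to the set of property~(i). Hence, inside a constant-size window $N$ around $v$, $Q$ must be a \emph{simple} completion of $Q_0$ obtained by adding cells (to connect the arms and/or remove a point-contact), all still within $3/2-\epsilon$ of $P$, and/or deleting cells (to remove a point-contact).

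The case analysis then runs as follows. If the repair adds a cell near $v$ to fuse the two arms, then by property~(iii) that cell --- together with the rest of the forced configuration --- lies at distance at least $3/2-\epsilon$ from $P$ (the ``fill the whole upper-right block'' variant only adds further such cells), giving $d_H(Q,P)\ge 3/2-\epsilon$; the only escape is that the added cell merely shifts the point-contact to a neighbouring vertex, and then the argument recurses and must eventually land back in this case. If the repair deletes a cell, then the part of $P$ covered only by that cell becomes uncovered and, by property~(i), the nearest remaining cell of $Q$ is at distance at least $3/2-\epsilon$, giving $d_H(P,Q)\ge 3/2-\epsilon$; moreover, if removing the cell disconnects $Q$, a cell must be added back to reconnect, returning to the previous case. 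In every branch $d(Q,P)\ge 3/2-\epsilon$, contradicting the assumption; since this holds for all sufficiently small $\delta$, the theorem follows for every $\epsilon>0$.

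I expect two steps to be the main obstacles. The first is making ``$Q\cap N$ is forced'' fully rigorous: one must rule out global reroutings in which $Q$ sacrifices coverage of $P$ elsewhere to buy slack near $v$, which requires combining the connectedness of $P$, the bound $d_H(P,Q)<3/2-\epsilon$, and the geometry of the two arms with some care. The second is the exact computation behind property~(iii): the arm widths, the neck depth, and the offset of $v$ from $\partial P$ have to be tuned so that the closest grid cell completing a simple cycle sits at distance precisely $3/2-O(\delta)$ from $P$ --- not the easier value $\sqrt2$ (a cell merely touching $\partial P$ in a corner) and not larger. Verifying that the resulting $P$ is a bona fide simple polygon and that properties (i)--(iii) hold simultaneously for all small $\delta$ is the technical heart of the proof.
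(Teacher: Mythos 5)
Your proposal formalizes exactly the construction and case analysis the paper uses: the polygon of Fig.~\ref{fig:hausdorff-lower-bound} with two thin arms converging near a grid vertex, and the trichotomy of repairs (remove a cell, add a cell, or fill the entire upper-right block), each forced to incur distance at least $3/2-\epsilon$. The paper's own ``proof'' is precisely this figure-and-sketch argument stated in three sentences; you have identified the same approach and, if anything, propose to carry it out more rigorously than the paper does, correctly flagging the parameter tuning that separates the constant $3/2$ from the naive $\sqrt{2}$ as the real technical content.
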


In the $L_\infty$ metric, the lower bound of $3/2 - \epsilon$ given in Fig.~\ref{fig:hausdorff-lower-bound} also holds. A straightforward modification of the upper-bound proofs can be used to show that the Hausdorff distance is at most $3/2$ in the $L_\infty$ metric. In other words, our bounds are tight under the $L_\infty$ metric.

\begin{figure}[h]
	\centering
	\includegraphics{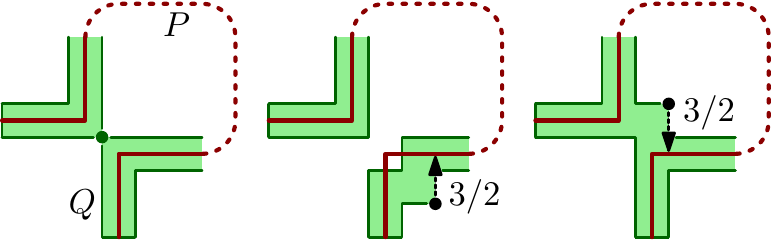}
	\caption{\label{fig:hausdorff-lower-bound} A polygon that does not admit a grid polygon with Hausdorff distance smaller than $3/2$. The brown line signifies an infinitesimally thin polygon.}
\end{figure}

\subsection{Algorithm}
\label{sec:hausdorff-algorithm}

To compute a grid polygon for a given polygon $P$ with $n$ edges, we need to determine the cells in the sets 
$Q_1$--$Q_4$. This is easy once we know which cells intersect $\partial P$. One way to
do this is to trace the edges of $P$ in the grid. The time this takes is proportional to the
number of crossings between cells and $\partial P$.
Let us denote the number of grid
cells that intersect $\partial P$ by $b$. Clearly, there are simple polygons with $\Theta(nb)$
polygon boundary-to-cell crossings. We show how to achieve a time bound of $O(n+B)$, where $B$ is the number of cells in the output.
The key idea is to first compute the Minkowski sum of $\partial P$ with a square of side length 2 and use that to quickly find the cells intersecting $\partial P$.

To compute this Minkowski sum we first compute the vertical decomposition of $\partial P$, see Fig.~\ref{fig:thickening}.
For every of the $O(n)$ quadrilaterals, determine the parts that are within vertical distance $1$ from 
the bounding edges.
The result $P'$ is a simple polygon with holes with a total of $O(n)$ edges,
and $\partial P \subset P'$.
We compute the horizontal decomposition of every hole and the exterior of $P'$
and determine all parts that are within horizontal distance $1$ from the bounding edges. We add this to $P'$, giving $P''$.
These steps take $O(n)$ time if we use Chazelle's triangulation algorithm~\cite{Chazelle1991}.
Essentially, the above steps constitute computing the Minkowski sum of $\partial P$ with a 
square of side length $2$, centered at the origin and axis-parallel. 

\begin{figure}
	\centering
	\includegraphics{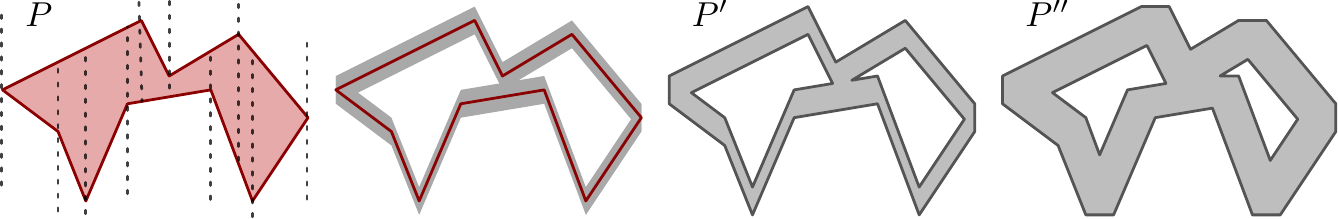}
	\caption{A simple polygon $P$ with its vertical decomposition, and the construction of $P'$ and $P''$.}
	\label{fig:thickening}
\end{figure}

\begin{lemma}
	For any cell $c$, at most four edges of $P''$ intersect its boundary twice.
\end{lemma}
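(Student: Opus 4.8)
The plan is to exploit that $P''$ is the Minkowski sum $\partial P\oplus S$ with $S=[-1,1]^2$; equivalently $P''=\{x:d_\infty(x,\partial P)\le 1\}$ in the $L_\infty$ metric, so $\partial P''=\{x:d_\infty(x,\partial P)=1\}$, the interior of $P''$ is $\{d_\infty(\cdot,\partial P)<1\}$, its exterior is $\{d_\infty(\cdot,\partial P)>1\}$, and every edge of $P''$ is either a translate of an edge of $P$ by a corner of $S$ or an axis-parallel segment (a piece of $\partial S$ placed at a vertex of $\partial P$). An edge of $P''$ meeting $\partial c$ twice crosses the unit square $c$, and inside $c$ it is a single maximal arc of $\partial P''\cap c$ with both endpoints on $\partial c$. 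Since the edges of $P''$ are pairwise interior-disjoint, all such chord arcs are pairwise non-crossing, so $k$ of them already cut the convex cell $c$ into $k+1$ faces, and adding the remaining pieces of $\partial P''$ inside $c$ (longer boundary-to-boundary arcs, or the boundary of a hole of $P''$ contained in $c$; $P''$ itself is connected, so there are no interior components) only increases the face count. Hence it suffices to show that $c\setminus\partial P''$ has at most five connected components.

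First, every component of $c$ lying outside $P''$ is on the same side of $\partial P$: if $x\notin P''$ then the $2\times2$ square $B(x):=x+S$ avoids $\partial P$, hence lies in $\mathrm{int}(P)$ or in $\mathrm{ext}(P)$; and for $x,y\in c$ we have $\|x-y\|_\infty\le 1$, so $B(x)\cap B(y)$ is a square of side at least $1$, forcing the same side. Say, after a reflection, that all outside components lie in $\mathrm{int}(P)$; each then sits inside a bounded component (a hole) of the complement of $P''$, since the unbounded component stays deep in $\mathrm{ext}(P)$ out to infinity. I claim there are at most two outside components. If there were three, with witnesses $x_1,x_2,x_3\in c$, then the three squares $B(x_i)\subseteq\mathrm{int}(P)$ all lie in the $3\times3$ block $c\oplus S$, pairwise overlap, and — their centres having $L_\infty$-diameter at most $1$ — even have a common point, so their union $W$ is a simply connected subset of $\mathrm{int}(P)$. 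But the $x_i$ lie in three distinct components, so inside $c$ each pair is separated by $\partial P''$, i.e.\ by a near-approach of $\partial P$; I would finish by classifying each of the three pockets by which corner of $S$ (equivalently, which of the four diagonal/axial directions) realises the nearest feature of $\partial P$ that ``caps'' it, observing that two pockets with the same cap direction have overlapping, same-side $2\times2$ neighbourhoods, and concluding that the single Jordan curve $\partial P$, which must avoid $W$, cannot pass between all three pockets within the $3\times3$ block.

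A symmetric but easier argument bounds the number of components of $c$ lying inside $P''$ by three: such a component is a piece of $c\cap\bigcup_e(e\oplus S)^\circ$, and since $e\oplus S$ has $L_\infty$-width $2$, no unit cell can be crossed by both of the two ``long'' bounding edges of the same $e\oplus S$, while more generally at most two disjoint width-$2$ strands of $P''$ can cross a unit cell — leaving two ``strand'' components and at most one ``bridge'' component between the two outside components. Combining the two bounds, $c\setminus\partial P''$ has at most $2+3=5$ components, so at most four edges of $P''$ meet $\partial c$ twice; this is tight, as one can position $P$ so that four chord arcs cut off the four corners of $c$ with the central region outside $P''$. The routine parts are the Minkowski-sum reformulation, the non-crossing/face-counting reduction, the overlapping-$2\times2$-squares observations, and the tightness example; the real obstacle is the ``at most two pockets per unit cell on each side of $\partial P$'' claim underlying both face bounds, and I expect to make it rigorous via the cap-direction case analysis together with the Jordan-curve property of $\partial P$ restricted to the $3\times3$ block, as sketched above.
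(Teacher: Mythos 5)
Your reduction to counting connected components of $c\setminus\partial P''$ is valid only if you can show the \emph{total} number of such components is at most five, and your route to that bound --- at most two components outside $P''$ plus at most three inside --- does not hold up. Indeed, your own tightness example refutes the second half of the split: when four edges of $\partial P''$ cut off the four corners of $c$ and the central region lies outside $P''$, each of the four corner pieces lies \emph{inside} $P''$ (it is within $L_\infty$-distance $1$ of the piece of $\partial P$ passing near that corner), so there are four inside components, not at most three. So at least one of your two sub-bounds is false as stated, and both are in any case only sketched: the ``cap-direction'' classification for the outside pockets and the ``at most two disjoint width-$2$ strands'' claim for the inside pieces are exactly the hard content, and you explicitly defer them. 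A further worry is that even a correct bound of five on the components reachable by chords does not immediately follow from bounding outside and inside pockets separately, because $\partial P''$ can contribute non-chord pieces (short edges, or the boundary of a small hole of $P''$ contained in $c$) that raise the face count beyond what the chords alone create; your inequality runs in the wrong direction for that. Finally, the identification $P''=\{x: d_\infty(x,\partial P)\le 1\}$ is itself not exact for the construction in the paper (which takes vertical neighborhoods of $\partial P$ and then horizontal neighborhoods of the result), though this is a secondary issue.

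The paper's argument avoids all of this by working edge by edge rather than face by face: by construction, every edge of $P''$ has the entire vertical half-strip of height $2$ on one side, and the entire horizontal half-strip of width $2$ on one side, contained in $P''$. For an edge crossing the unit cell $c$ twice, this means one entire side of the edge within $c$ lies in the interior of $P''$ and hence contains no other edge of $P''$; that cleared side contains a corner of $c$, and no two crossing edges can claim the same corner. That charging argument gives the bound of four in a few lines. If you want to salvage your component-counting framework, you would need to prove a single unified bound of five on the components of $c\setminus\partial P''$ (not a $2+3$ split), handle the non-chord boundary pieces, and supply the missing separation argument for pockets on the same side of $\partial P$ --- at which point you would essentially be reproving the half-strip property anyway.
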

\begin{proof}
	For any edge of $P''$, by construction, the whole part vertically above or below it over distance
	at least 2 is inside $P''$, and the same is true for left or right. For any edge $e$ that intersects
	the boundary of $c$ twice, one side of that edge is fully in the interior of $P''$, and hence,
	cannot contain other edges of $P''$. Hence, $e$ can be charged uniquely to a corner of $c$.
\end{proof}

\begin{cor}
	The number of polygon boundary-to-cell crossings of $P''$ is $O(n+b)$, where $b$
	is the number of grid cells intersecting $\partial P$.
\end{cor}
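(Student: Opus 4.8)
The plan is to bound, for every grid cell $c$, the number of intersection points between $\partial P''$ and the boundary $\partial c$, and then sum over all cells. I would split these crossings into two groups: those contributed by an edge of $P''$ that crosses $\partial c$ \emph{twice}, and those contributed by an edge that crosses $\partial c$ exactly \emph{once}. A line segment cannot meet the boundary of a convex cell in more than two points, so this case distinction is exhaustive.

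For the first group the preceding lemma already does the heavy lifting: at most four edges of $P''$ cross the boundary of any fixed cell twice, so doubly-crossing edges contribute at most eight crossings per cell. It then only remains to note that a cell with any crossing at all must be met by $\partial P''$, and to bound the number of such cells by $O(b)$. For this I would use that $P''$ is, by construction, (essentially) the Minkowski sum of $\partial P$ with the axis-parallel square of side $2$ centred at the origin, so every point of $\partial P''$ lies within $L_\infty$-distance $1$ of $\partial P$. Hence a cell meeting $\partial P''$ lies in the $3\times 3$ block of cells around some cell met by $\partial P$; charging each such cell to that neighbour — each of the $b$ cells met by $\partial P$ receives at most nine charges — shows there are $O(b)$ cells met by $\partial P''$, hence $O(b)$ crossings of the first group in total.

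For the second group, observe that an edge of $P''$ crosses $\partial c$ exactly once iff exactly one of its two endpoints lies in $c$. Therefore the number of once-crossing incidences at a cell $c$ is at most twice the number of vertices of $P''$ lying in $c$, and summing over all cells gives at most twice the total number of vertices of $P''$. Since $P'$, and therefore $P''$, has $O(n)$ edges and hence $O(n)$ vertices, this group contributes only $O(n)$ crossings. Adding the two bounds gives the claimed $O(n+b)$.

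The only mildly delicate points are the degenerate configurations — vertices or edges of $P''$ lying exactly on grid lines, and tangential versus transversal contacts — which I would dispose of by a standard infinitesimal perturbation or by a tie-breaking rule that assigns each boundary-incident vertex to a single cell; neither affects the asymptotics. Beyond this bookkeeping, the one genuinely load-bearing ingredient is the distance-$1$ locality of $\partial P''$ relative to $\partial P$, which is precisely the property the Minkowski-sum construction of $P''$ was designed to supply, so I do not expect a real obstacle here.
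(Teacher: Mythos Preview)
Your argument is correct and is precisely the reasoning the paper leaves implicit: the corollary is stated without proof, and the intended derivation is exactly the two-step count you give --- the preceding lemma caps the doubly-crossing edges at four (hence eight crossings) per cell, the Minkowski-sum construction confines $\partial P''$ to within $L_\infty$-distance~$1$ of $\partial P$ so only $O(b)$ cells are involved, and the once-crossings are charged to the $O(n)$ vertices of $P''$. Your handling of degeneracies by perturbation is also the standard way to dispose of them here.
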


By tracing the boundary of $P''$, we can identify all cells that intersect it. 
Then we can determine all cells that intersect the boundary of $P$, because these are the
cells that lie fully inside $P''$. The modifications needed to find all cells whose module lies
inside $P$ are straightforward. In particular, we can find all cells whose module lies inside $P$,
but have a neighbor for which this is not the case in $O(n+b)$ time. 
This allows us to find the $O(B)$ cells selected in step $Q_1$ in $O(n+B)$ time. Steps $Q_2$ and $Q_3$ are now straightforward as well.

We now have a number of connected components of chosen grid cells.
No component has holes, and if there are $k$ components, we can connect
them into one with only $k-1$ extra grid cells.
We walk around the perimeter of some component and
mark all non-chosen cells adjacent to it. If a cell is marked twice, it is immediately
removed from consideration. Cells that are marked once but are adjacent to
two chosen cells will merge two different components.
We choose one of them, then walk around the perimeter of the new part and mark the
adjacent cells. Again, cells that are marked twice (possibly, both times from the new part,
or once from the old and once from the new part) are removed from consideration.
Continuing this process unites all components without creating holes.

\begin{theorem}
	For any simple polygon $P$ with $n$ edges, we can determine a set of $B$ cells
	that together form a grid polygon $Q$ in $O(n+B)$ time, such that
	$d_H(P, Q),\,d_H(\partial P, \partial Q) \leq \frac{1}{2}\sqrt{2}$ and
	$d_H(Q, P),\,d_H(\partial Q, \partial P) \leq \frac{3}{2}\sqrt{2}$.
\end{theorem}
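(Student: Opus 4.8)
The plan is to assemble the final theorem from three ingredients established earlier in the section: correctness of the geometric construction (Lemmas~\ref{lem:holefree}, \ref{lem-hs-q-is-simple}, and the upper-bound Lemmas~\ref{lem:upperboundPtoQ}--\ref{lem:upperboundQtoP}), and an $O(n+B)$-time procedure that materializes the cell sets $Q_1$--$Q_4$. Since the distance bounds and the fact that $Q$ is a simply connected grid polygon without point-contacts are already proved for the construction, the only genuinely new content is the running-time analysis, so the proof will be short and will mostly cite what precedes it.

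First I would recall the Minkowski-sum preprocessing: compute the vertical decomposition of $\partial P$, thicken each quadrilateral vertically by~$1$ to obtain $P'$, then decompose the holes and exterior horizontally and thicken by~$1$ to obtain $P''$; by Chazelle's triangulation this is $O(n)$ time, and $P''$ has $O(n)$ edges. By the Lemma and its Corollary, tracing $\partial P''$ visits only $O(n+b)$ polygon-boundary-to-cell crossings, where $b$ is the number of cells meeting $\partial P$; since every such cell lies in the output, $b=O(B)$. From this trace I would extract the cells intersecting $\partial P$, and then the cells whose module lies inside $P$ (a local modification of the same sweep), yielding $Q_1$ in $O(n+B)$ time; $Q_2$ is read off the same information (even-even cells with nonempty module intersection lie near $\partial P$ or strictly inside, and only $O(B)$ of them are selected), and $Q_3$ is a local check at point-contacts, also $O(B)$.

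Next I would handle $Q_4$ and connectivity. After $Q_1\cup Q_2\cup Q_3$ we have $k$ hole-free components (hole-freeness is Lemma~\ref{lem:holefree}); walking the perimeter of one component and marking adjacent non-chosen cells, removing any cell marked twice, and absorbing a cell that is marked once but is adjacent to two chosen cells, merges two components using one extra cell without creating a hole or a point-contact (as argued in the proof of Lemma~\ref{lem-hs-q-is-simple}). Iterating unites all components with $k-1$ extra cells; each perimeter walk is linear in the perimeter of the growing part, and since every cell is added at most once and each contributes $O(1)$ perimeter, the total work is $O(B)$. Summing, the whole algorithm runs in $O(n+B)$ time, and the resulting $Q$ is exactly the set produced by the construction, so Lemmas~\ref{lem:upperboundPtoQ} and~\ref{lem:upperboundQtoP} give the claimed Hausdorff bounds.

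The main obstacle I anticipate is the bookkeeping that keeps the time bound output-sensitive rather than $O(n+b)$ with $b$ possibly $\omega(B)$ — in particular, arguing carefully that every cell touched during the trace of $\partial P''$ and during the component-merging walks is either a cell of $P''$ adjacent to $\partial P$ (hence near an output cell) or an output cell itself, so that the charge is genuinely $O(n+B)$. The other delicate point is verifying that the greedy merging step never reintroduces a point-contact; I would dispatch this by reusing the structural observation from the proof of Lemma~\ref{lem-hs-q-is-simple} that a connecting cell is adjacent to two oppositely-placed chosen cells, together with the "remove if marked twice" rule, rather than re-deriving it.
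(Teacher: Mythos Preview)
Your proposal is correct and follows essentially the same approach as the paper: the Minkowski-sum preprocessing to bound boundary-to-cell crossings, tracing $\partial P''$ to identify the relevant cells for $Q_1$--$Q_3$, and the perimeter-walk merge for $Q_4$, with the distance bounds inherited from Lemmas~\ref{lem:upperboundPtoQ}--\ref{lem:upperboundQtoP}. Your explicit observation that $b = O(B)$ (each cell meeting $\partial P$ lies in the $3\times 3$ neighborhood of some $Q_2$-cell) is actually a small clarification the paper leaves implicit when it silently passes from $O(n+b)$ to $O(n+B)$.
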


\subsection{Heuristic improvements}
\label{sec:hausdorff-heuristics}

The grid polygon $Q$ constructed in Section~\ref{sec:hausdorff-construction} does not follow the shape of $P$ closely (see Fig.~\ref{fig:hausdorff-example}). Although the boundary of $Q$ remains close to the boundary of $P$, it tends to zigzag around it due to the way it is constructed. As a result, the symmetric difference between $P$ and $Q$ is relatively high. 
We consider two modifications of our algorithm to reduce the symmetric difference between $P$ and $Q$ while maintaining a small Hausdorff distance:
\begin{enumerate}
	\item We construct $Q_4$ with symmetric difference in mind.
	\item We post-process the resulting polygon $Q$ by adding, removing, or shifting cells.
\end{enumerate}

\subparagraph{Construction of $\boldsymbol{Q_4}$}
Instead of picking cells arbitrarily when constructing $Q_4$ we improve the construction with two goals in mind: (1) to directly reduce the symmetric difference between $P$ and $Q$, and (2) to enable the post-processing to be more effective. To that end, we construct $Q_4$ by repeatedly adding the cell $c$ (not introducing holes) that has the largest overlap with $P$. These cells hence reduce the symmetric difference between $P$ and $Q$ the most. 

\subparagraph{Post-processing}
After computing the grid polygon $Q$, we allow three operations to reduce the symmetric difference: (1) adding a cell, (2) removing a cell, and (3) shifting a cell to a neighboring position. These operations are applied iteratively until there is no operation that can reduce the symmetric difference. Every operation must maintain the following conditions: (1) $Q$ is simply connected, and (2) the Hausdorff distance between $P$ ($\partial P$) and $Q$ ($\partial Q$) is small. For the second condition we allow a slight relaxation with regard to the bounds of Lemma~\ref{lem:upperboundPtoQ}: $d_H(P, Q)$ and $d_H(\partial P, \partial Q)$ can be at most $\frac{3}{2}\sqrt{2}$ (like $d_H(Q, P)$ and $d_H(\partial Q, \partial P)$). This relaxation gives the post-processing more room to reduce the symmetric difference.

\section{Fr\'echet distance}
\label{sec:frechet}

The Fr\'echet distance $d_F$ between two curves is generally considered a better measure for similarity
than the Hausdorff distance.
For an input polygon $P$, we consider computing a grid polygon $Q$ such that $d_F(\partial P, \partial Q)$ is bounded by a small constant.
We study under what conditions on $\partial P$ this is possible and prove an upper and lower bound.
However, if $\partial P$ zigzags back and forth within a single row of grid cells, any grid polygon must have a large Fr\'echet distance: the grid is too coarse to follow $\partial P$ closely.
To account for this in our analysis, we introduce a realistic input model, as explained below.

\subparagraph{Narrow polygons}
For $a, b \in \partial P$, we use $|ab|_{\partial P}$ to denote the perimeter distance, i.e., the shortest distance from $a$ to $b$ along $\partial P$.
We define \emph{narrowness} as follows.
\begin{definition}
	A polygon $P$ is \emph{$(\alpha, \beta)$-narrow}, if for any two points $a, b \in \partial P$ with $|ab| \leq \alpha$, $|ab|_{\partial P} \leq \beta$.
\end{definition}

Given a value for $\alpha$, we refer to the minimal $\beta$ as the $\alpha$-narrowness of a polygon. We assume $\alpha < \beta$, to avoid degenerately small polygons.
We note that narrowness is a more forgiving model than straightness~\cite{alt3}.
A polygon $P$ is \emph{$\kappa$-straight} if for any two points $a, b \in \partial P$, $|ab|_{\partial P} \leq \kappa \cdot \|a - b\|$.
A $\kappa$-straight polygon is $(\alpha, \kappa\alpha)$-narrow for any $\alpha$, but not the other way around.
In particular, a finite polygon that intersects itself (or comes infinitesimally close to doing so) has a bounded narrowness, whereas its straightness becomes unbounded.

\subparagraph{An upper bound}
With our realistic input model in place, we can bound the Fr\'echet distance needed for a grid polygon from above.
In particular, we prove the following theorem.

\begin{theorem}
	\label{thm-frechet-upper}
	Given a $(\sqrt{2}, \beta)$-narrow polygon $P$ with $\beta\geq\sqrt{2}$, there exists a grid polygon $Q$ such that $d_F(\partial P, \partial Q) \leq (\beta + \sqrt{2}) / 2$.
\end{theorem}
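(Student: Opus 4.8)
The plan is to obtain $Q$ by rounding $\partial P$ to a closed walk in the grid graph, repairing that walk into a simple grid cycle, and then exhibiting a single monotone reparametrisation whose leash never exceeds $(\beta+\sqrt2)/2$. Scale so that we work on the unit grid and assume general position. Parametrise $\partial P$ by arc length as a closed curve $\gamma$, and for each parameter $s$ let $\rho(s)$ be a nearest grid vertex to $\gamma(s)$ (breaking ties in the direction of travel), so $\|\gamma(s)-\rho(s)\|\le\sqrt2/2$. As $s$ increases, $\rho$ changes only between grid-adjacent vertices, except that when $\gamma$ passes a cell centre it may ``jump'' diagonally, which we resolve by inserting one intermediate grid vertex. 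This yields a closed walk $W$ in the grid graph, and a direct check (the man frozen at $\gamma(s)$ while the dog crosses a grid edge or the inserted vertex, with $\gamma(s)$ near the relevant cell centre) shows $d_F(\partial P,W)\le\sqrt2/2$.

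Next I would turn $W$ into the boundary of a grid polygon. Whenever $\rho$ sends two points $a,b\in\partial P$ to the same grid vertex $v$, both lie within $\sqrt2/2$ of $v$, hence $\|a-b\|\le\sqrt2$, so $(\sqrt2,\beta)$-narrowness gives $|ab|_{\partial P}\le\beta$: every closed sub-walk (``loop'' or backtracking spur) of $W$ corresponds to a sub-arc of $\partial P$ of perimeter length at most $\beta$. Excising such sub-walks, and resolving the remaining self-touchings of $W$ by a local repair in the spirit of the sets $Q_3,Q_4$ of Section~\ref{sec:hausdorff-construction} (adding a cell to convert a point-contact into a genuine connection, filling any enclosed region so the result is a topological disk), gives a simple grid cycle $\partial Q$ bounding a hole-free, point-contact-free grid polygon $Q$. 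One must check that only bounded pieces are excised, that the repairs stay within a constant of $\partial P$, and --- because $\partial P$ is simple and $W$ hugs it --- that not everything is destroyed, so $Q$ is nonempty and still winds around $\partial P$ once.

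For the leash I would use the obvious monotone correspondence. While $\gamma(s)$ is handled by a vertex or edge of $W$ that survived the repair, keep the dog on the corresponding part of $\partial Q$, exactly as in $d_F(\partial P,W)\le\sqrt2/2$. While $\gamma(s)$ runs through an excised sub-arc $A$, park the dog at the surviving grid vertex $\rho(c)$ associated with the endpoint $c$ of $A$ nearer to $\gamma(s)$ along $\partial P$. Since $A$ has perimeter length at most $\beta$, we have $|\gamma(s)\,c|_{\partial P}\le\beta/2$, and because a chord is no longer than the sub-arc it subtends, $\|\gamma(s)-c\|\le\beta/2$; adding the rounding term $\|c-\rho(c)\|\le\sqrt2/2$ bounds the leash by $(\beta+\sqrt2)/2$. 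Combining the two regimes and verifying monotonicity finishes the proof.

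The main obstacle is the middle step: carrying out the repair so that the result is genuinely a simple grid cycle enclosing a hole-free polygon while \emph{certifying} that every deleted or perturbed piece stays charged to a sub-arc of perimeter at most $\beta$ (so the repairs do not compound into a larger lag), and in particular handling the self-crossings of $W$ produced by near-pinches of $\partial P$. This is precisely where the hypotheses $\alpha=\sqrt2$ and $\beta\ge\sqrt2$ enter and where the constant $(\beta+\sqrt2)/2$, rather than something larger, must be fought for; the first and third steps are essentially bookkeeping once the construction is pinned down.
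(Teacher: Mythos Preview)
Your outline is essentially the paper's proof: round $\partial P$ to the cyclic sequence of nearest grid vertices (the paper phrases this as the vertices whose unit ``square'' is visited), obtain $d_F\le\sqrt2/2$ to this walk, then at each repeated vertex delete the shorter of the two sub-walks (whose image on $\partial P$ has length $\le\beta$ by narrowness), and bound the leash by $\beta/2+\sqrt2/2$ exactly as in your third paragraph.

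Where you diverge is the repair step, and you overcomplicate it. You invoke $Q_3/Q_4$-style surgery to resolve ``remaining self-touchings'' and fill enclosed regions, and flag this as the main obstacle. It is not: in a closed walk on the grid graph, two unit edges can meet only at a shared grid vertex, so \emph{every} self-intersection of $W$ is a repeated vertex. Once all duplicates are removed, the remaining cycle has distinct vertices and is automatically a simple grid cycle bounding a grid polygon---no point-contacts, no holes, no further surgery. The only residual case is that at most two vertices survive, which the paper handles by a trivial extension to a $4$-cycle.

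Your compounding worry also dissolves in the paper's bookkeeping: at each removal one may still choose the witnesses $p,p'$ inside the unit square of the repeated vertex (the original visit is always retained in $\mu(c)$), so the $\beta$ bound applies afresh; each point of $\partial P$ is charged to the single removal that absorbed it, not to a chain. Drop the Hausdorff-style repairs and your argument is complete and matches the paper.
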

\begin{proof}
	To prove the claimed upper bound, we construct $Q$ via a grid cycle $C$ that defines $\partial Q$.
	The construction is illustrated in Fig.~\ref{fig:example-fr-new}.
	\begin{figure}[t]
		\centering
		\includegraphics{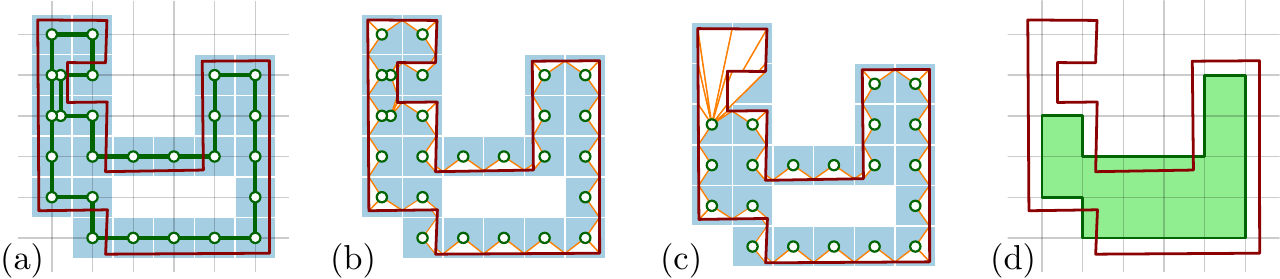}
		\caption{\label{fig:example-fr-new}Constructing $Q$ for the upper bound on the Fr\'echet distance. (a) Input polygon on the grid and the squares it visits (shaded); initial state of $C$ with revisited vertices slightly offset for legibility. (b) Initial mapping $\mu$ (white triangles) between the vertices of $C$ and $\partial P$. (c) Removal of duplicate vertices in $C$, and its effect on $\mu$. (d) Resulting cycle represents a grid polygon.}
	\end{figure}
	We define the \emph{square} of a grid-graph vertex $v$ to be the $1 \times 1$-square centered on $v$.
	Let $C$ be the cyclic chain of vertices whose square is intersected by $\partial P$, in the order in which $\partial P$ visits them.
	We define a mapping $\mu$ between the vertices of $C$ and $\partial P$.
	In particular, for each $c \in C$, let $\mu(c)$ be the ``visit'' of $\partial P$ that led to $c$'s existence in $C$, that is, the part of $\partial P$ within the square of $c$.
	By construction, we have that $\| c - p_c \| \leq \sqrt{2}/2$ for all $c \in C$ and $p_c \in \mu(c)$.
	The visits $\mu(c)$ and $\mu(c')$ for two consecutive vertices, $c$ and $c'$, in $C$ intersect in a point (or, in degenerate cases, in a line segment) that lies on the common boundary of the squares of $c$ and $c'$; let $p$ denote such a point.
	For any point $\sigma$ on the line segment between $c$ and $c'$, we have that $\| \sigma - p \| \leq \max \{ \| c - p \| , \| c' - p \| \} \leq \sqrt{2}/2$, as the Euclidean distance is convex (i.e., its unit disk is a convex set).
	Hence, $\mu$ describes a continuous mapping on $\partial P$ and acts as a witness for $d_F(\partial P, C) \leq \sqrt{2}/2$.

	However, $C$ may contain duplicates and thus not describe a grid polygon $Q$.
	We argue here that we can remove the duplicates and maintain $\mu$ in such a way that it remains a witness to prove that $d_F(\partial P, C) \leq (\beta + \sqrt{2})/2$.
	Let $c$ and $c'$ be two occurrences in $C$ of the same vertex $v$.
	Let $p \in \mu(c)$ and $p' \in \mu(c')$, both in the square of $v$.
	As they lie within the same square, $\|p-p'\| \leq \sqrt{2}$ and hence we know that $|pp'|_{\partial P} \leq \beta$.
	Hence, at least one of the two subsequences of $C$ strictly in between $c$ and $c'$ maps via $\mu$ to a part of $\partial P$ that has length at most $\beta$.
	We pick one such subsequence and remove it as well as $c'$ from $C$.
	We concatenate to $\mu(c)$ the mapped parts of $\partial P$ from the removed vertices.
	As the length of the mapped parts is bounded by $\beta$, the maximal distance between any point on these mapped parts is $\beta/2 + \sqrt{2}/2$.
	Hence, after removing all duplicates, we are left with a cycle $C$, with $\mu$ as a witness to testify that $d_F(\partial P, C) \leq (\beta + \sqrt{2})/2$.

	If $C$ contains at least three vertices, it describes a grid polygon and we are done.
	However, if $C$ consists of at most two vertices, then it does not describe a grid polygon.
	We can extend $C$ easily into a 4-cycle for which the bound still holds (Lemma~\ref{lem-frechet-upper-degenerate}, Appendix~\ref{app-frechet}).
\end{proof}

The proof of the theorem readily leads to a straightforward algorithm to compute such a grid polygon.
The construction poses no restrictions on the order in which to remove duplicates and the decisions are based solely on the lengths of $\mu(v)$.
Hence, the algorithm runs in linear time by walking over $P$ to find $C$ and handling duplicates as they arise.

\subparagraph{Lower bound}
To show a lower bound, we construct a $(\sqrt{2}, \beta)$-narrow polygon $P$ for which there is no grid polygon with Fr\'echet distance smaller than $\frac{1}{4} \sqrt{\beta^2-2}$ to $P$, for any $\beta > \sqrt{2}$.
First, construct a polygonal line $L = (p_1, \ldots, p_n)$, where $n = 2\,\big\lceil \frac{1}{4} \sqrt{\beta^2-2} \,\big\rceil + 1$.
Vertex $p_i$ is $(0, i/2)$ if $i$ is odd and $(\frac{1}{2} \sqrt{\beta^2-2},\ i/\sqrt{2})$ otherwise.
Now, consider a regular $k$-gon with side length $(n - 1) / \sqrt{2}$ and $k \geq 4$ such that its interior angle is not smaller than $\varphi= \arccos{(1-4/\beta^2)}$.
Assume the $k$-gon has a vertical edge on the right-hand side.
We replace this edge by $L$ to construct our polygon $P$.
Fig.~\ref{fig:comb-graph} shows a polygon for $k = 4$ ($\beta \geq 2$) and for $k = 7$ ($\beta < 2$).

The two lemmas below readily imply our lower bound on the Fr\'echet distance. The proof of the first can be found in Appendix~\ref{app-frechet}.

\begin{figure}[t]
	\centering
	\includegraphics[page=1]{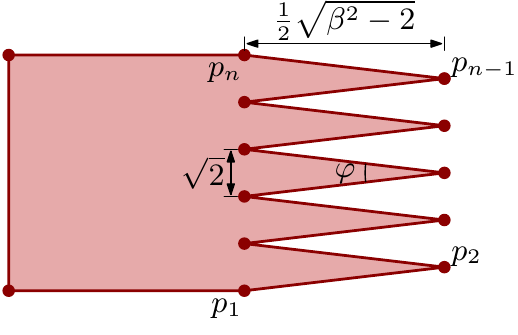}
	\hfill
	\includegraphics[page=2]{frechet-lower-bound}
	\hfill
	\includegraphics{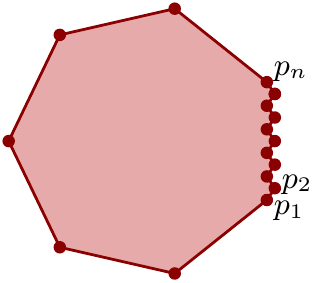}
 	\caption{Polygon $P$ (left) for which any grid polygon will have high Fr\'echet distance (center); polygon $P$ for $\beta<2$ (right).}
	\label{fig:comb-graph}
\end{figure}

\begin{lemma}
	\label{lem:p-obese}
	The polygon $P$ described above is $(\sqrt{2}, \beta)$-narrow.
\end{lemma}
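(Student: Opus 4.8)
\textbf{Proof proposal for Lemma~\ref{lem:p-obese}.}

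The plan is to verify the narrowness bound directly from the geometry of the construction, treating the "comb teeth" part $L$ and the "polygon body" (the rest of the $k$-gon boundary) separately, and then handling the interaction between them. First I would fix two points $a, b \in \partial P$ with $\|a - b\| \leq \sqrt{2}$ and aim to show $|ab|_{\partial P} \leq \beta$. The key numeric facts I would extract from the definition of $L$: consecutive vertices $p_i, p_{i+1}$ are at Euclidean distance exactly related to $\sqrt{\beta^2 - 2}$ and $1/\sqrt 2$ (a short computation gives the edge length of $L$), and crucially, the horizontal extent of $L$ is $\tfrac12\sqrt{\beta^2-2}$, chosen precisely so that a point on an "up" edge and a point on the neighboring "down" edge that are within Euclidean distance $\sqrt 2$ cannot be too far apart along $\partial P$. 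The interior-angle condition $\varphi = \arccos(1 - 4/\beta^2)$ on the $k$-gon is what controls how close non-adjacent edges of the polygon body can come: I would show that any two points on distinct edges of the body (or on the body and on $L$, away from the shared endpoints) are at Euclidean distance strictly greater than $\sqrt{2}$, so that the narrowness constraint $\|a-b\| \le \sqrt 2$ is only ever triggered for points that are "locally" close along the boundary.

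The main case analysis would run as follows. (1) If $a$ and $b$ lie on the same edge, or on two adjacent edges, of $\partial P$, then $|ab|_{\partial P}$ is at most the Euclidean path through the shared vertex, which is bounded by twice the maximum edge length; I would check this is $\le \beta$ using $n = 2\lceil \tfrac14\sqrt{\beta^2-2}\rceil + 1$ and the edge-length computation. (2) If $a$ and $b$ lie on non-adjacent edges of the $k$-gon body, the interior angle bound $\varphi$ forces $\|a - b\| > \sqrt 2$, so this case is vacuous. (3) The interesting case: $a$ on one tooth edge of $L$ and $b$ on a non-adjacent tooth edge of $L$. Here I would use the explicit coordinates of the $p_i$ to bound how many teeth can be "skipped" while keeping $\|a-b\| \le \sqrt 2$ — the vertical spacing $i/2$ vs.\ $i/\sqrt 2$ between consecutive odd/even vertices, together with the horizontal width $\tfrac12\sqrt{\beta^2-2}$, means that being within $\sqrt 2$ bounds the index gap, and hence the perimeter distance, by $\beta$. (4) Finally, $a$ on $L$ and $b$ on the body but not near the two junction vertices $p_1, p_n$: again ruled out by $\varphi$, since $L$ sits in a thin vertical strip and the adjacent body edges turn away by at least $\varphi$.

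The hard part will be case (3) — pinning down exactly which pairs of tooth edges can host points within Euclidean distance $\sqrt 2$, and showing that the worst such pair still has perimeter distance at most $\beta$. This is where the specific choice of the horizontal amplitude $\tfrac12\sqrt{\beta^2-2}$ and the alternating vertical steps must be used in a tight way: I expect the extremal configuration to be $a$ near the top of an "up" edge and $b$ near the bottom of the next "up" edge (two teeth apart), and the inequality $\|a-b\|\le\sqrt2 \Rightarrow |ab|_{\partial P}\le\beta$ should come out as an equality-at-the-boundary statement, which is presumably why the constant in the lower bound is $\tfrac14\sqrt{\beta^2-2}$ rather than something larger. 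A secondary technical nuisance is making the angle argument in cases (2) and (4) rigorous for all $k \ge 4$ uniformly; I would phrase it as: since every interior angle is $\ge \varphi$, any chord of $\partial P$ connecting points on non-consecutive edges subtends a turn of at least $\varphi$, and a short trigonometric estimate with $\cos\varphi = 1 - 4/\beta^2$ shows its length exceeds $\sqrt 2$ unless the points are within one edge of each other along the boundary. Once these geometric facts are in place, collecting the cases gives $|ab|_{\partial P} \le \beta$ in every non-vacuous case, establishing that $P$ is $(\sqrt 2, \beta)$-narrow.
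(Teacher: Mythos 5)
Your overall case structure (same edge / adjacent edges / non-adjacent teeth of $L$ / body-versus-$L$) matches the paper's, but there is a genuine gap in your case (1), which is where the actual content of the lemma lives. For two points on adjacent edges meeting at a vertex $p$, you propose to bound $|ab|_{\partial P} = \|a-p\|+\|p-b\|$ by twice the maximum edge length. This does not work: the $k$-gon side length is $(n-1)/\sqrt{2}$ with $n-1 = 2\lceil\frac{1}{4}\sqrt{\beta^2-2}\rceil$, so already for $\beta = 2$ twice the side length is $2\sqrt{2} > \beta$, and in general the bound only closes for large $\beta$. The correct argument---and the entire reason $\varphi$ is defined as $\arccos(1-4/\beta^2)$---is to combine the hypothesis $\|a-b\|\le\sqrt{2}$ with the interior angle being at least $\varphi$: by convexity the worst case is $\|a-p\|=\|p-b\|$ with $\|a-b\|=\sqrt{2}$, and the law of cosines then gives $\|a-p\|+\|p-b\|\le 2/\sqrt{1-\cos\varphi}=\beta$. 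You invoke $\varphi$ only to make your cases (2) and (4) vacuous, but it is indispensable precisely in the adjacent-edge case, which your edge-length estimate cannot handle.

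Your case (3) is also framed differently from the paper. The paper does not bound an ``index gap'' over several teeth: it first argues that two points within distance $\sqrt{2}$ can only lie on the same segment, on two consecutive segments, or on the two segments $p_{i-1}p_i$ and $p_ip_{i+1}$ of $L$ forming a spike; in the spike case it observes that the $x$-coordinate of $b$ cannot exceed that of $a$ and bounds the detour by the length of two spike edges, which is at most $\beta$ by construction. So the ``hard part'' you anticipate collapses to a short observation once that locality claim is established. Your plan is salvageable, but as written the adjacent-edge estimate is wrong and must be replaced by the angle/law-of-cosines computation.
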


\begin{lemma}
	\label{lem:p-frechet}
	For constructed polygon $P$ and any grid polygon $Q$, $d_F(\partial P, \partial Q) \geq \frac{1}{4} \sqrt{\beta^2-2}$.
\end{lemma}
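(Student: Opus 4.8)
The plan is to exhibit a single point on $\partial P$ (or a short sub-curve) whose image under any Fréchet matching must be far from it, by a pigeonhole/continuity argument on the zigzag $L$ that replaced the vertical edge of the $k$-gon. Recall $L=(p_1,\dots,p_n)$ with $n$ odd, alternating between the line $x=0$ and the line $x=\tfrac12\sqrt{\beta^2-2}$, and with $n=2\lceil\tfrac14\sqrt{\beta^2-2}\rceil+1$ vertices so that $L$ spans vertical extent roughly $\tfrac14\sqrt{\beta^2-2}$ on the $x=0$ side. The key geometric feature is that $L$ oscillates horizontally with amplitude $\tfrac12\sqrt{\beta^2-2}$ but its vertical progress over one back-and-forth is only about $1$; meanwhile any grid cycle $\partial Q$ can, within one unit-height horizontal strip, only make horizontal progress in multiples of a unit, so it cannot follow the oscillation of $L$ and simultaneously stay within distance $<\tfrac14\sqrt{\beta^2-2}$ of it.

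Concretely, I would argue as follows. Suppose for contradiction that some grid polygon $Q$ has $d_F(\partial P,\partial Q)=\delta<\tfrac14\sqrt{\beta^2-2}$, and fix a matching realizing it. Consider the two extreme vertices of $L$ on the line $x=0$, say the topmost and bottommost such vertices $p_1$ and $p_n$ (both having $x$-coordinate $0$), and an intermediate vertex $q$ on the line $x=\tfrac12\sqrt{\beta^2-2}$ lying between them along $L$. Their matched points on $\partial Q$ must lie, respectively, within distance $\delta$ of $x=0$ and within distance $\delta$ of $x=\tfrac12\sqrt{\beta^2-2}$; since $\delta<\tfrac14\sqrt{\beta^2-2}\le\tfrac12\cdot\tfrac12\sqrt{\beta^2-2}$, the matched point of $q$ has $x$-coordinate strictly larger than that of the matched points of $p_1$ and $p_n$, so $\partial Q$ genuinely moves right and then left. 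I would then count: because $\partial Q$ is a grid cycle, the portion of $\partial Q$ matched to this part of $L$ must consist of axis-parallel unit edges, and the vertical room available is only the height of $L$, which is $\Theta(\sqrt{\beta^2-2})$ — too small to accommodate the required horizontal excursion together with the monotonicity constraint of the Fréchet matching and the $\delta$-tube around each sub-segment of $L$. Making this precise, one extracts a contradiction with $\delta<\tfrac14\sqrt{\beta^2-2}$.

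A cleaner route, which I would actually try first, is to use the standard lower-bound tool for Fréchet distance: if $\partial P$ contains a sub-curve that starts at a point $s$, goes to a point $t$ with $\|s-t\|$ small, yet every point of $\partial Q$ matched to the sub-curve must stay in a region forcing the matched sub-curve of $\partial Q$ to have a point at distance $\ge D$ from both $s$ and $t$, then $d_F\ge D$. Here take $s=p_1$, $t=p_3$ (two consecutive vertices on $x=0$, at vertical distance $\tfrac12$), with the vertex on $x=\tfrac12\sqrt{\beta^2-2}$ in between. The matched sub-curve of $\partial Q$ must reach $x$-coordinate close to $\tfrac12\sqrt{\beta^2-2}$ (else it is far from the middle vertex) while being pinned near $x=0$ at both ends; since $\partial Q$ is on the unit grid it must travel at least $\lceil\tfrac12\sqrt{\beta^2-2}\rceil-O(\delta)$ horizontally and then back, and combining the vertical budget of $\tfrac12$ between $s$ and $t$ with the grid constraint shows the turning point of $\partial Q$ is at distance $\ge\tfrac14\sqrt{\beta^2-2}$ from one of $s,t$.

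The main obstacle is the bookkeeping that turns "the grid is too coarse" into the exact constant $\tfrac14\sqrt{\beta^2-2}$: one must track simultaneously (i) the monotonicity of the Fréchet matching along $\partial P$ and $\partial Q$, (ii) the quantization of $\partial Q$'s coordinates to the integer grid, and (iii) the exact vertical spacing $\tfrac12$ and $1/\sqrt2$ chosen for the vertices $p_i$, which were evidently reverse-engineered to make the arithmetic come out to $\tfrac14\sqrt{\beta^2-2}$. I would handle this by choosing the witness sub-curve of $\partial P$ as small as possible (just three vertices of $L$), so that the monotonicity constraint becomes a simple statement that $\partial Q$'s matched portion is itself a short lattice path, and then a direct distance computation on that lattice path finishes the proof.
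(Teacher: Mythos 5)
Your opening paragraph gestures at the right phenomenon---grid quantization versus the oscillation of $L$---but both of your concrete strategies localize to \emph{three} vertices of $L$ (first $p_1$, one intermediate vertex $q$, and $p_n$; then in your ``cleaner route'' $p_1,p_2,p_3$), and that localization breaks the argument. A single narrow spike of $\partial Q$ (a tentacle of width one grid cell that goes out to $x\approx\tfrac12\sqrt{\beta^2-2}$ and comes straight back) tracks one out-and-back excursion of $L$ with Fr\'echet distance $O(1)$: match the outbound leg of $L$ to the top of the spike, the tip to the turnaround, and the inbound leg to the bottom of the spike. So no sub-curve consisting of a single excursion can ever force a lower bound growing like $\sqrt{\beta^2-2}$, and your claim that ``the turning point of $\partial Q$ is at distance $\ge\tfrac14\sqrt{\beta^2-2}$ from one of $s,t$'' is not what the Fr\'echet matching requires---the turning point is matched to $p_2$, not to $s$ or $t$, and it can indeed be close to $p_2$.

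The missing ingredient is a global counting argument that uses \emph{all} $n$ vertices of $L$ together with the simplicity of $Q$. The paper's proof picks points $q_i\in\partial Q$ with $\|p_i-q_i\|<\varepsilon$ appearing in the cyclic order of $\partial Q$, and observes that the $\varepsilon$-disks around the odd-indexed $p_i$ lie strictly left of the vertical line $v\colon x=\tfrac14\sqrt{\beta^2-2}$ while those around the even-indexed $p_i$ lie strictly right of it. Hence between $q_i$ and $q_{i+1}$ the grid cycle $\partial Q$ must contain a horizontal unit segment crossing $v$, giving $n-1$ such crossings. Because $Q$ is a \emph{simple} grid polygon, these $n-1$ crossing segments must occupy $n-1$ distinct integer $y$-levels, so they span a vertical range of at least $n-1$. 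Meanwhile the $n$ vertices $p_1,\dots,p_n$ span only about $(n-1)/\sqrt2$ in $y$. Some crossing is therefore forced vertically far from its matched point on $\partial P$, which, after unwinding the choice $n=2\lceil\tfrac14\sqrt{\beta^2-2}\rceil+1$, yields the claimed bound. The role of simplicity---forbidding the crossings from reusing $y$-levels---is exactly what your three-vertex proposals cannot access and never invoke, and it is the crux of the proof.
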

\begin{proof}
We show this by contradiction: assume that a grid polygon $Q$ exists with $d_F(\partial P, \partial Q) = \varepsilon < \frac{1}{4} \sqrt{\beta^2-2}$.
For any vertex $p_i$ of $P$, there must be a point $q_i \in \partial Q$ (not necessarily a vertex) such that $\|p_i - q_i\| < \varepsilon$.
Moreover, these points $q_1, \ldots, q_n$ need to appear on $\partial Q$ in order.
Equivalently, if we draw disks with radius $\varepsilon$ centered at $p_1, \ldots, p_n$, curve $\partial Q$ needs to visit these disks in order.

The disks centered at $p_1, p_3, \ldots, p_n$ never intersect the disks centered at $p_2, p_4, \ldots, p_{n-1}$. In particular, the disks centered at $p_1, p_3, \ldots, p_n$ are all to the left of the vertical line $v \colon x = \frac{1}{4} \sqrt{\beta^2-2}$, and all disks centered at $p_2, p_4, \ldots, p_{n-1}$ are all to the right of this line. Hence, between $q_1$ and $q_2$, $\partial Q$ must contain at least one horizontal line segment crossing line $v$ to the right, and between $q_2$ and $q_3$ there must be at least one horizontal segment crossing $v$ to the left, and so on until we reach $q_{n}$.
Since $Q$ is simple, this requires that the difference between the maximum and the minimum $y$-coordinate of the these horizontal segments on $\partial Q$ is at least $n - 1$.
The $y$-difference between $p_1$ and $p_n$ is only $(n-1) / \sqrt{2}$.
This implies $d_F(\partial P, \partial Q) \geq n - 1 - (n-1) / \sqrt{2} > \frac{1}{4} \sqrt{\beta^2-2}$ and thus contradicts our assumption.
\end{proof}

\begin{theorem}
	\label{thm:frechet-lower}
	For any $\beta > \sqrt{2}$, there exists a $(\sqrt{2}, \beta)$-narrow polygon $P$ such that $d_F(\partial P, \partial Q) \geq \frac{1}{4} \sqrt{\beta^2-2}$ holds for any grid polygon $Q$.
\end{theorem}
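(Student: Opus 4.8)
\subparagraph{Proof proposal}
The plan is to read Theorem~\ref{thm:frechet-lower} as the packaging of the two lemmas just stated, applied to the explicit polygon $P$ built in the paragraph before the theorem: the regular $k$-gon, one of whose edges has been replaced by the zigzag chain $L$. The one point that genuinely needs to be checked before invoking the lemmas is that this construction is available for \emph{every} $\beta > \sqrt{2}$, not just for the two illustrated cases. For that, observe that the interior angle $\pi - 2\pi/k$ of a regular $k$-gon can be pushed arbitrarily close to $\pi$ by increasing $k$, whereas the required angle $\varphi = \arccos(1 - 4/\beta^2)$ is strictly below $\pi$ as soon as $\beta > \sqrt{2}$ (then $1 - 4/\beta^2 > -1$); hence some finite $k \ge 4$ has interior angle at least $\varphi$, and $P$ is well defined (and $k = 4$ already works once $\beta \ge 2$). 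With $P$ fixed, Lemma~\ref{lem:p-obese} gives that $P$ is $(\sqrt{2}, \beta)$-narrow and Lemma~\ref{lem:p-frechet} gives $d_F(\partial P, \partial Q) \ge \frac{1}{4}\sqrt{\beta^2 - 2}$ for every grid polygon $Q$; these are precisely the two clauses of the theorem, so nothing further is needed at the level of the theorem statement.

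Accordingly I do not see a real obstacle in the theorem itself: the substance lies in the two lemmas, and Lemma~\ref{lem:p-frechet} has already been carried out above (the argument that $\partial Q$ must hit the radius-$\varepsilon$ disks around $p_1, \ldots, p_n$ in order, that consecutive disks sit on opposite sides of the vertical line $x = \frac{1}{4}\sqrt{\beta^2 - 2}$, so that $\partial Q$ needs $n-1$ alternating horizontal crossings of that line, and that simplicity of $Q$ then forces a vertical span exceeding what the chain itself provides). So the only piece I would still have to supply is Lemma~\ref{lem:p-obese}, the narrowness claim deferred to Appendix~\ref{app-frechet}, and that is where I expect the actual work to be.

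To prove Lemma~\ref{lem:p-obese} I would take two points $a, b \in \partial P$ with $\|a - b\| \le \sqrt{2}$ and split on their location: both on $L$, one on $L$ and one on the surrounding convex part of the $k$-gon, or both on the $k$-gon. The teeth of $L$ are designed so that consecutive segments together have length exactly $\beta$ (each tooth has length $\beta/2$), so a Euclidean proximity of $\sqrt{2}$ pins $a$ and $b$ to nearby teeth and keeps the boundary detour at most $\beta$; on the convex part, the bound $\varphi = \arccos(1 - 4/\beta^2)$ on the interior angle is exactly what guarantees that points on two edges sharing a vertex of a convex polygon cannot be within $\sqrt{2}$ of one another unless their boundary distance is also at most $\beta$; the mixed case is sandwiched between these two. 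Each case reduces to a short trigonometric estimate, and the delicate part is lining the constants up with the stated $\beta$. Once Lemma~\ref{lem:p-obese} is in place, the theorem follows immediately by the combination described above.
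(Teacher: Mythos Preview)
Your proposal is correct and matches the paper's approach: the theorem is precisely the conjunction of Lemma~\ref{lem:p-obese} and Lemma~\ref{lem:p-frechet} applied to the constructed polygon, and your added check that a suitable $k$ exists for every $\beta > \sqrt{2}$ fills a small gap the paper leaves implicit. Your sketch of Lemma~\ref{lem:p-obese} organizes the case split by location (both on $L$, mixed, both on the $k$-gon) rather than the paper's split by segment relationship (same segment, two consecutive segments sharing a vertex, two segments of $L$), but the underlying trigonometric estimates---in particular the role of $\varphi = \arccos(1 - 4/\beta^2)$ in bounding the detour through a shared vertex via the law of cosines---are the same.
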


\section{Experiments}
\label{sec:experimental-results}

\begin{wrapfigure}[12]{r}{0.35\textwidth} %
	\raggedleft
	\centering
	\vspace{-1em}
	\includegraphics{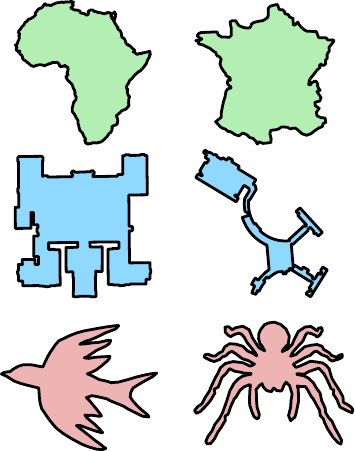}
	\vspace{-.1em}
	\caption{The input categories.}
	\label{fig:input}
\end{wrapfigure}
Here, we apply our algorithms to a set of polygons that can be encountered in practice.
We investigate the performance of the Hausdorff algorithm and its heuristcs as well as the Fr\'echet algorithm.
Moreover, we consider the effects of grid resolution and the placement of the input.

\subparagraph{Data set}
We use a set of $34$ polygons: $14$~territorial outlines (countries, provinces, islands), $11$~building footprints and $9$~animal silhouettes (see Fig.~\ref{fig:input} for six examples; a full list is given in Appendix~\ref{sec:app-inputs}). We scale all input polygons such that their bounding box has area $r$; we call $r$ the resolution.
Unless stated otherwise, we use $r = 100$. This scaling is used to eliminate any bias introduced from comparing different resolutions.

\subsection{Symmetric difference}
\label{sec:symmetric-difference}

We start our investigation by measuring the symmetric difference between the input and output polygon.
If the symmetric difference is small, this indicates that the output is similar to the input.
We normalize the symmetric difference by dividing it by the area of the input polygon.
The results of our algorithms depend on the position of the input polygon relative to the grid. Hence, for every input polygon we computed the average normalized symmetric difference over $20$~random placements.

Computing a (simply connected) grid polygon that minimizes symmetric difference is NP-hard~\cite{meulemans2016esa}.
Hence, as a baseline for our comparison, we compute the set of cells with the best possible symmetric difference by simply taking all cells that are covered by the input polygon for at least $50\,\%$. This set of cells is optimal with respect to symmetric difference but may not be simply connected. It can hence be thought of as a lower bound. 

\subparagraph{Overview}
In Table~\ref{tab:mst-methods}, we compare the Fr\'echet algorithm and the various instantiations of the Hausdorff algorithm in terms of the (normalized) symmetric difference.
The second column lists the average symmetric difference of the symmetric-difference optimal solution, calculated as described above.
The other columns are hence given as a percentage representing the increase with respect to the optimal value.
We aggregated the results per input type; the full results are in Appendix~\ref{app:experiments}.

\begin{table}[t]
	\centering
	\caption{\label{tab:mst-methods} Normalized symmetric difference, as an increase percentage w.r.t.\ optimal, of the algorithms. Note that ``optimal'' here means optimal for the symmetric difference when not insisting on a connected set of cells. For the Hausdorff algorithm, results for the various heuristic improvements are shown. In the second row, \textit{None} means that no postprocessing heuristic was used; \textit{A}, \textit{R} and \textit{S} mean additions, removals and shifts, respectively. In the third row, \cmark\ and \xmark\ indicate whether $Q_4$ was chosen arbitrarily (\xmark) or using the symmetric difference heuristic (\xmark).}
\medskip 
	\begin{tabular}{rccccccccccc}
		\toprule
		& \textbf{Optimal} & \multicolumn{6}{c}{\textbf{Hausdorff}} & \textbf{Fr\'echet} \\
		\cmidrule(lr){3-8}
		\textit{postproc.} & & \multicolumn{2}{c}{\textit{None}} &  \multicolumn{2}{c}{\textit{A\,/\,R}} & \multicolumn{2}{c}{\textit{A\,/\,R\,/\,S}} \\
		\cmidrule(lr){3-4}
		\cmidrule(lr){5-6}
		\cmidrule(lr){7-8}
		\textit{$Q_4$ heur.} & & \xmark & \cmark & \xmark & \cmark &\xmark & \cmark \\
		\midrule
		Maps & $0.223$ & $+\,316\,\%$ & $+\,238\,\%$ & $+\,39\,\%$ & $+\,3\,\%$ & $+\,11\,\%$ & $+\,3\,\%$ & $+\,23\,\%$ \\
		Buildings & $0.257$ & $+\,270\,\%$ & $+\,197\,\%$ & $+\,47\,\%$ & $+\,9\,\%$ & $+\,21\,\%$ & $+\,8\,\%$ & $+\,17\,\%$ \\
		Animals & $0.333$ & $+\,246\,\%$ & $+\,188\,\%$ & $+\,60\,\%$ & $+\,12\,\%$ & $+\,29\,\%$ & $+\,11\,\%$ & $+\,8\,\%$ \\
		\bottomrule
	\end{tabular}
\end{table}

The table tells us that, with the use of heuristics, the Hausdorff algorithm gets quite close to the optimal symmetric difference, while still bounding the Hausdorff distance as well as guaranteeing a grid polygon.
The Fr\'echet algorithm is performing more poorly in comparison, though interestingly performs \emph{better} on the animal contours.

Fig.~\ref{fig:ostrich} shows three solutions for one of the input polygons: symmetric-difference optimal, Fr\'echet algorithm and Hausdorff algorithm with heuristics.
The symmetric-difference optimal solution looks like the input, but consists of multiple disconnected polygons.
The result of the Fr\'echet algorithm is a single grid polygon, but the algorithm cuts off at narrow parts.
The result of the Hausdorff algorithm is also a single grid polygon, but does not have to cut off parts when input is narrow.

\begin{figure}[b]
	\centering
	\vspace{-0.6\baselineskip}
	\includegraphics{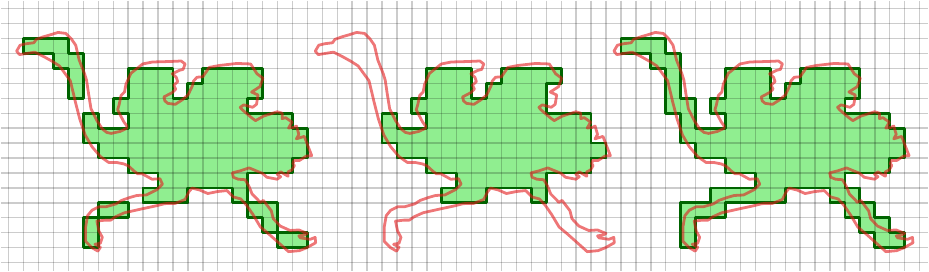}
	\vspace{-0.4\baselineskip}
    \caption{Example outputs for the symmetric-difference optimal algorithm  (left), the Fr\'echet algorithm (center) and the Hausdorff algorithm (right). Note that the first does not yield a grid polygon.}
    \label{fig:ostrich}
\end{figure}

Below, we examine the effect of the different heuristics for the Hausdorff algorithm to explain its success.
Moreover, we show that the performance of the Fr\'echet algorithm is highly dependent on the grid resolution.

\subparagraph{Hausdorff heuristics}
Table~\ref{tab:mst-methods} shows that using the heuristic for $Q_4$ makes a tremendous difference, especially if a postprocessing heuristic is used as well.
Fig.~\ref{fig:msts} illustrates this finding with four results on the same input.
In (a--b) $Q_4$ is chosen arbitrarily and the resulting shape does not look like the input---even after postprocessing.
In particular, the postprocessing heuristic cannot progress further: the cell marked $c$ cannot be added to $Q$ since that would increase the symmetric difference.
In (c--d) $Q_4$ is chosen using the heuristic; it provides a better initial solution which allows the postprocessing to create a nice result.

\begin{figure}[t]
	\centering
	\includegraphics{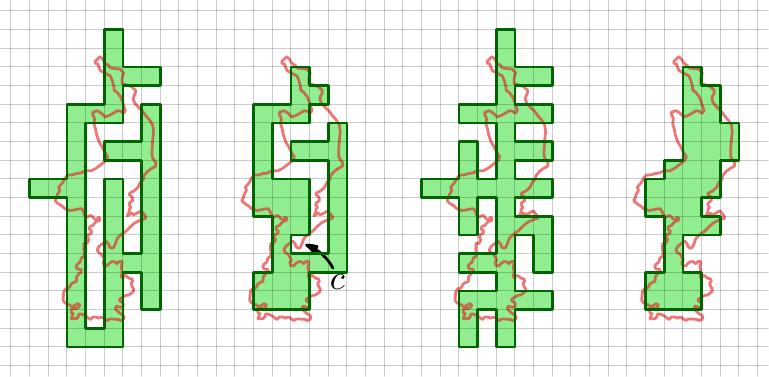}
    \caption{Without the heuristic for the $Q_4$ construction (a), the algorithm gets stuck in the post processing phase (b). The smart $Q_4$ construction gives a better starting point (c) resulting in the desired shape (d).}
	\label{fig:msts}
\end{figure}

In the postprocessing heuristic, allowing or disallowing shifts can influence the result. See for example Fig.~\ref{fig:shifts-are-useful}.
Without shifts, the heuristic cannot move the connection between the two ends of the input polygon to the correct location as it would first need to increase the symmetric difference. With a series of diagonal shifts this can be achieved. Our experiments show that in practice allowing shifts indeed decreases the symmetric difference.
However, the effect is only marginal if we use the heuristic for the $Q_4$ construction.
Hence, we conclude that shifts only significantly improve the result if $Q_4$ is chosen badly.

\begin{figure}[t]
	\centering
	\includegraphics{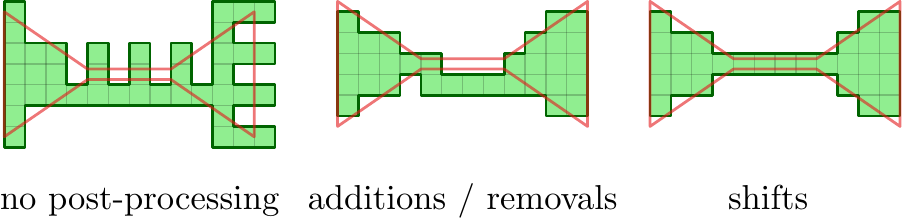}
	\caption{Without allowing shifts, the post-processing phase cannot move the cells in the middle to coincide with the input polygon. With shifts, this is possible.}
	\label{fig:shifts-are-useful}
\end{figure}

\subparagraph{Resolution and placement}
While developing our algorithm we noticed that not just the grid resolution but also the placement of the input polygon effected the symmetric difference. Hence we set up experiments to investigate these factors. First we tested how much the resolution influences the symmetric differences. 
In Table~\ref{tab:resolution}, the results are shown, averaged over all 34 inputs. 
As expected, for all algorithms, the normalized symmetric difference decreases when the resolution increases.

\begin{table}[h]
	\caption{Normalized symmetric difference for the various algorithms on five resolutions.}
	\label{tab:resolution}
	\centering
	\begin{tabular}{rccccc}
		\toprule
		& $\boldsymbol{r = 100}$ & $\boldsymbol{r = 225}$ & $\boldsymbol{r = 400}$ & $\boldsymbol{r = 625}$ & $\boldsymbol{r = 900}$ \\
		\midrule
		Optimal & $0.263$ & $0.188$ & $0.147$ & $0.119$ & $0.101$ \\
		Hausdorff & $0.282$ & $0.201$ & $0.155$ & $0.123$ & $0.103$ \\
		Fr\'echet & $0.306$ & $0.227$ & $0.184$ & $0.148$ & $0.122$ \\
		\bottomrule
	\end{tabular}
\end{table}

To investigate how much the results of our algorithms depend on the placement of the input polygon, we compared the minimal, maximal and average symmetric difference over 20 runs of the algorithms. The polygons were placed randomly for each run, but per polygon the same $20$~positions were used for all three algorithms.
The results of this experiment are shown in Appendix~\ref{app:experiments},~Table~\ref{tab:input-placement}. For every input polygon and algorithm, the minimum, average and maximum symmetric difference of all $20$~runs is shown. The difference between the minimum and the maximum symmetric difference for each algorithm\,/\,polygon combination is rather large.
We found that placement can have a significant effect on the achieved symmetric difference.
Hence, if the application permits us to choose the placement, it is advisable to do so to obtain the best possible result.
This leads to an interesting open question of whether we can algorithmically optimize the placement, to avoid the need to find a good placement with trial and error.
In the upcoming analysis, we also consider the effect of resolution and placement, with respect to the Fr\'echet distance.

\subsection{Fr\'echet analysis}

Theorem~\ref{thm-frechet-upper} predicts an upper bound on the Fr\'echet distance based on $\sqrt{2}$-narrowness.
However, if the points defining the narrowness lie within different squares of grid vertices, this bound may be naive.
Moreover, it assumes a worst-case detour, going away in a thin triangle to maximize the distance between the detour and a doubly-visited cell.
Hence, the algorithm has the potential to perform better, depending on the actual geometry and its placement with respect to the grid.
Here, we discuss our investigation of these effects; refer to Appendix~\ref{app-exp-frechet} for more details.

\subparagraph{Procedure}
We use the 34~polygons described in Appendix~\ref{sec:app-inputs}.
As we may expect the grid resolution to significantly affect results, we used 20 different resolutions.
In particular, we use resolutions varying from $10\,000$ to $25$, using $(100/s)^2$ with scale $s \in \{ 1, \ldots, 20 \}$.

For each resolution-polygon combination (case), we measure its $\sqrt{2}$-narrowness (using the algorithm described in Appendix~\ref{app-measure-narrow}) and derive the predicted upper bound.
Then, we run the Fr\'echet algorithm, using the 25 possible offsets in $\{0, 0.2, 0.4, 0.6, 0.8 \}^2$, and measure the precise Fr\'echet distance between input and output.
We keep track of three summary statistics for each case: the minimum (best), average (``expected'') and maximum (worst) measured Fr\'echet distance.

\subparagraph{Effect of placement}
We consider placement with respect to the grid (offset) to have a significant effect on the result computed for a polygon, if the difference between the maximal and minimal Fr\'echet distance over the 25 offsets is at least $2$.
Almost $30\,\%$ of cases exhibit such a significant effect, with the animal contours being particularly affected ($35\,\%$ significant).
Again, this raises the question of whether we can algorithmically determine a good placement.

\subparagraph{Upper bound quality}
We define the performance as the measured Fr\'echet distance as a percentage of the upper bound.
We consider the algorithm's performance significantly better than the upper bound, if it is less than $40\,\%$.
Using the best placement, over $95\,\%$ of cases perform significantly better.
Averaging performance over placement, we still find such a majority (over $81\,\%$).
Interestingly, this drop is mostly due to the animal contours, of which only $63\,\%$ now perform significantly better.
Thus, although we have a provable upper bound, we may typically expect our simple algorithm to perform significantly better than the upper bound. 
This holds even without any postprocessing to further optimize the result and when taking a random offset.

\subparagraph{Effect of resolution}
The influence of the resolution on the above results does not seem to exhibit a clear pattern.
Nonetheless, resolution likely plays an important role in these results, but not as straightforward as either low or high resolution being more problematic.
Instead, it is likely the most problematic resolutions are those at which the $\sqrt{2}$-narrowness of the polygon jumps as a new pair of edges comes within distance $\sqrt{2}$ of each other. 
However, an in-depth investigation of this is beyond the scope of this paper.

\subparagraph{Heuristic improvement}
In contrast to the Hausdorff algorithm, the Fr\'echet algorithm needs no heuristic improvement on inputs that are not too narrow.
However, badly placed narrow polygons can be problematic: large parts of the polygon may be cut, greatly diminishing similarity.
A solution may be to select an appropriate resolution (if our application permits us to).
In our experiments the algorithm tends to perform well at resolutions where the symmetric-difference optimal solution is a single grid polygon.
The advantage of our Fr\'echet algorithm is that it guarantees a grid polygon on all outputs and bounds the Fr\'echet distance.

\begin{wrapfigure}[8]{r}{0.27\linewidth}
	\vspace{-\baselineskip}
	\centering
	\includegraphics{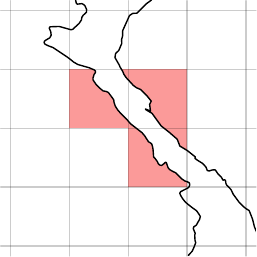}
	\caption{Red cells cause a cut-off and have high symmetric difference.}
	\label{fig-frechet-sdheur-problem}
\end{wrapfigure}
Nonetheless, we may want to consider heuristic postprocessing to obtain a locally-optimal result.
If we want to do this in terms of the symmetric difference, we may use similar techniques as for the Hausdorff algorithm.
However, this does not perform well: the narrow strip that causes the Fr\'echet algorithm to perform badly tends to effect a high symmetric difference for the nearby grid cells (Fig.~\ref{fig-frechet-sdheur-problem}).
As such, the result is already (close to) a local optimum in terms of the symmetric difference.

\section{Conclusion}
\label{sec:conclusion}

We presented two algorithms to map simple polygons to grid polygons that capture the shape of the polygon well. For measuring the distance between the input and the output, we considered the Hausdorff and the Fr\'echet distance. We achieved a constant bound on the Hausdorff distance; for the Fr\'echet distance we require a realistic input assumption to achieve a constant bound. We also evaluated our algorithms in practice. Although the Hausdorff algorithm does not produce great results directly, the algorithm achieves good results when combined with heuristic improvements. The Fr\'echet algorithm, on the other hand, struggles with narrow polygons, and it is not clear how to improve the results using heuristics. Designing an algorithm for the Fr\'echet distance that also works well in practice remains an interesting open problem. Another interesting open problem is to algorithmically optimize the placement of the input polygon, for the best results of both the Hausdorff and the Fr\'echet algorithm.

\newpage

\bibliography{p197-bouts}

\newpage

\appendix

\section{Hausdorff distance}
\label{app-hausdorff}

\begin{cloneclaim}{Theorem~\ref{thm:hard}}
	Given a polygon $P$, it is NP-hard to decide whether there exists a grid polygon $Q$ such that both $d_H(\partial P, \partial Q) \leq 1/2$ and $d_H(\partial Q, \partial P)\leq 1/2$.
\end{cloneclaim}
\begin{proof}
	We reduce from the problem of finding a Hamiltonian cycle in a (``partial'') grid graph~\cite{ham-cycle-grid-graph}. 
	Given such a grid graph $G$, we construct a polygon $P$ such that a Hamiltonian cycle in $G$ exists if and only if there exists a grid polygon $Q$ such that both $d_H(\partial P, \partial Q) \leq 1/2$ and $d_H(\partial Q, \partial P)\leq 1/2$.

	\begin{figure}[b]
		\begin{minipage}[t]{0.475\textwidth}
			\centering
			\includegraphics[page=3]{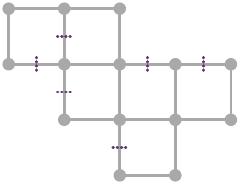}
			\caption{Edge gadget (top): the red segment cannot lie on $\partial Q$ because the Hausdorff distance from the middle point to the chain is $1/\sqrt{2}>1/2$. There are two ways of covering the edge construction with a grid chain with the Hausdorff distance $1/2$ (middle and bottom).}
			\label{fig:hausdorff-hard-edge}
		\end{minipage}\hfill
		\begin{minipage}[t]{0.475\textwidth}
			\includegraphics[page=4]{appendix/hausdorff-hard}
			\caption{Vertex gadget (left) can be connected with up to four edges. There are two ways of covering the vertex gadget with a polygonal chain within the Hausdorff distance $1/2$ (middle and right). Dashed edges can be extended to cover a half of an adjacent edge.}
			\label{fig:hausdorff-hard-vertex}
		\end{minipage}
	\end{figure}

	To construct $P$ from $G$, we replace the edges of $G$ with \emph{edge gadgets} (see Fig.~\ref{fig:hausdorff-hard-edge}) and vertices of $G$ with \emph{vertex gadgets} (see Fig.~\ref{fig:hausdorff-hard-vertex}).

	\subparagraph{Edge gadget}
	The edge gadget consists of two ``zigzag'' polygonal chains following the grid that are placed $\epsilon$-close to each other for a small constant $\epsilon$. The area between these chains will eventually become the interior of polygon $P$. In the middle of the zigzag, we cut off two corners, as illustrated. 
	As a result, a closest point on the zigzag chain to the middle of the red segment in Fig.~\ref{fig:hausdorff-hard-edge} (top) is $1/\sqrt{2}$ distance away, and thus the boundary of the grid polygon $Q$ cannot pass through this segment. Moreover, the distance from any of the red points in the figure to the chain is larger than $1/2$. By construction there will be no other parts of $P$ closer than $1/2$ to any of the red points. Therefore, $\partial Q$ cannot pass through them. Furthermore, $\partial Q$ must pass through all the green points, otherwise the Hausdorff distance from $\partial P$ to $\partial Q$ would be too large. Therefore we can conclude that there are only two ways of covering the edge with $\partial Q$ shown in the Fig.~\ref{fig:hausdorff-hard-edge} (middle and bottom).

	\subparagraph{Vertex gadget}
	The vertex gadget consists of four pieces of polygonal chains forming a cross-like pattern such that along every branch of the cross there are two polygonal chains that are placed $\epsilon$-close to each other. This is schematically illustrated in Fig.~\ref{fig:hausdorff-hard-vertex} (left); observe that to obtain the $1/2$ bound, we must be careful with this construction, moving the edges inward appropriately to ensure that all connections are possible within the desired Hausdorff distance, this is shown in Fig.~\ref{fig:hausdorff-hard-vertex-detail}. 
	The area between these parts of chains will also eventually become the interior of polygon $P$. There are nine grid points in the vertex gadget that need to be traversed by the boundary of the grid polygon $Q$ to achieve a Hausdorff distance of $1/2$. Consider an example of a degree-3 vertex gadget connected to three edge gadgets in Fig.~\ref{fig:hausdorff-hard-vertex-2}.  Similarly to the edge gadget case we can argue that $\partial Q$ must pass through all green points and cannot pass through any of the red points to achieve the Hausdorff distance of $1/2$. Considering this case, and also the cases of 1-, 2-, and 4-degree vertices, we can observe the following: a chain of $\partial Q$ must enter the vertex along one of the adjacent edges following the ``zigzag'' pattern from the Fig.~\ref{fig:hausdorff-hard-edge} (middle), cover the nine grid points, and leave along another adjacent edge following the same ``zigzag'' pattern. If the vertex has more adjacent edges, those edges can only be partially covered with the U-turn pattern from the Fig.~\ref{fig:hausdorff-hard-edge} (bottom); the cut-off corners ensure that the U-turn cannot pass the middle of the edge-construction. A vertex gadget can be entered by $\partial Q$ only once. The two ways for $\partial Q$ to enter and exit the vertex gadget are shown in Fig.~\ref{fig:hausdorff-hard-vertex} (center and right).

	\begin{figure}[t]
	\begin{minipage}[b]{0.475\textwidth}
			\centering
			\includegraphics[page=1,width=0.8\linewidth]{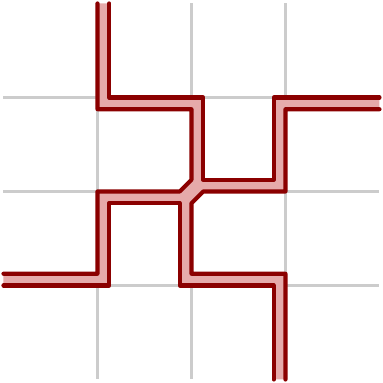}
			\caption{The precise construction of the vertex gadget, to ensure a Hausdorff distance of $1/2$ with constant $\epsilon$.\newline }
			\label{fig:hausdorff-hard-vertex-detail}
		\end{minipage}\hfill
		\begin{minipage}[b]{0.475\textwidth}
		\centering
		\includegraphics[page=7]{appendix/hausdorff-hard}
		\caption{An example of a degree-3 vertex gadget connected to three edge gadgets. $\partial Q$ must pass through all green points and cannot cannot pass through red points.}
		\label{fig:hausdorff-hard-vertex-2}
		\end{minipage}
	\end{figure}

	\subparagraph{Putting the gadgets together} Now, given a grid graph $G$, replace its vertices with vertex gadgets, and replace its edges with edge gadgets, and connect their corresponding pairs of polygonal chains. 
	This construction leads to a polygon $P'$ that may have holes. To make a simple polygon $P$ out of $P'$, we cut some of the edges of $G$ to break all the cycles (see Fig.~\ref{fig:hausdorff-hard} (top left)). We introduce small cuts in the corresponding edge gadgets and reconnect the pairs of polygonal chains (see Fig.~\ref{fig:hausdorff-hard} (top right)). Thus, all the polygonal chains are connected in a cycle now forming a simple polygon $P$. The small cuts in the edge gadgets do not affect the way that it can be covered with $\partial Q$.

	\begin{figure}
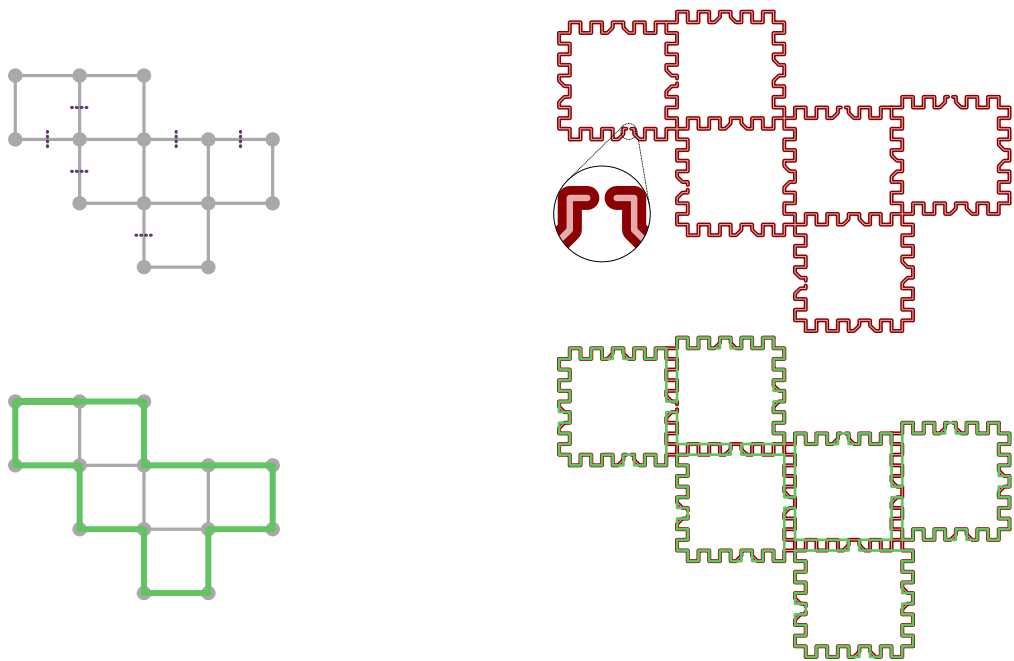

		\begin{minipage}[c]{0.4\textwidth}
			\centering
			\includegraphics[page=1, scale=1.5]{appendix/hausdorff-hard}
		\end{minipage}
		\hfil
		\begin{minipage}[c]{0.4\textwidth}
			\centering
			\includegraphics[page=5, scale=0.5]{appendix/hausdorff-hard}
		\end{minipage}\\
		\begin{minipage}[c]{0.4\textwidth}
			\centering
			\includegraphics[page=2, scale=1.5]{appendix/hausdorff-hard}
		\end{minipage}
		\hfil
		\begin{minipage}[c]{0.4\textwidth}
			\centering
			\includegraphics[page=6, scale=0.5]{appendix/hausdorff-hard}
		\end{minipage}
		\caption{Given a grid graph $G$ (top left), we construct a simple polygon $P$ (top right), such that if and only if there exists a Hamiltonian cycle in $G$ (bottom left), there exists a grid polygon $Q$ with both $d_H(\partial P, \partial Q)\leq \frac{1}{2}$ and $d_H(\partial Q, \partial P)\leq \frac{1}{2}$ (bottom right).}
		\label{fig:hausdorff-hard}
	\end{figure}

	Suppose there is a Hamiltonian cycle in $G$ (see Fig.~\ref{fig:hausdorff-hard} (bottom left)). It is straightforward to construct a grid polygon $Q$ with the Hausdorff distance $1/2$ from its boundary to $\partial P$ and vice versa. The edge gadgets corresponding to the edges in $G$ that belong to the Hamiltonian cycle are covered by $\partial Q$ following a ``zigzag'' pattern, and the edge gadgets corresponding to the edges in $G$ that do not belong to the Hamiltonian cycle are covered by two pieces of $\partial Q$ following a U-turn pattern coming from the two adjacent vertices (see Fig.~\ref{fig:hausdorff-hard} (bottom right)).

	Now, suppose there is a grid polygon $Q$ such that $d_H(\partial P, \partial Q)\leq 1/2$ and $d_H(\partial Q, \partial P)\leq 1/2$. $\partial Q$ can only pass through the grid points marked with green dots in the Fig.~\ref{fig:hausdorff-hard-edge} (top) and Fig.~\ref{fig:hausdorff-hard-vertex} (left), otherwise the Hausdorff distance to the input polygon is too large. Thus, every vertex gadget must have two adjacent edge gadgets covered by $\partial Q$ following the ``zigzag'' pattern. The edges of the grid graph $G$ corresponding to these edge gadgets form a Hamiltonian cycle, because the grid polygon $Q$ is simple.

	Therefore, a Hamiltonian cycle in $G$ exists if and only if there exists a grid polygon $Q$ such that both $d_H(\partial P, \partial Q)\leq 1/2$ and $d_H(\partial Q, \partial P)\leq 1/2$.
\end{proof}

\clearpage

\section{Upper and lower bounds on the Fr\'echet distance}
\label{app-frechet}

\begin{lemma}
	\label{lem-frechet-upper-degenerate}
	Let $P$ be a $(\sqrt{2}, \beta)$-narrow polygon.
	Let $C$ be the cycle in the construction of Theorem~\ref{thm-frechet-upper} for $P$ after removing all duplicates.
	If $C$ consists of at most two vertices, a 4-cycle $C'$ exists with $d_F(\partial P, C') \leq (\beta + \sqrt{2})/2$.
\end{lemma}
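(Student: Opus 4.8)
The plan is to handle the degenerate case by explicitly padding the short cycle $C$ into a genuine $4$-cycle while carefully extending the mapping $\mu$ so that it still witnesses the Fréchet bound. First I would dispose of the case $|C| = 0$: this can only happen if $\partial P$ never intersects any grid-vertex square, which is impossible for a polygon whose boundary has positive length, since the squares of grid vertices cover the plane; hence $|C| \in \{1, 2\}$ after duplicate removal. Note that before duplicate removal $C$ was a (cyclic) sequence that is still guaranteed to be nonempty, and all its vertices lie among the one or two distinct grid vertices that remain.

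Next I would treat $|C| = 1$, say $C = (v)$. Because $\mu$ currently maps all of $\partial P$ to visits inside the square of $v$, the whole polygon boundary lies within that $1\times 1$ square. I would pick the $4$-cycle $C'$ consisting of $v$ together with three of its grid-neighbors forming a unit grid cell $v, v', v'', v'''$ (any of the four incident cells works); this is a valid grid cycle bounding a single pixel. The mapping $\mu'$ assigns essentially all of $\partial P$ to $v$ as before, with $v', v'', v'''$ each mapped to a single point of $\partial P$ (e.g.\ the point of $\partial P$ closest to the respective vertex, inserted at the appropriate cyclic position); since $\partial P \subseteq$ square of $v$, every point of $\partial P$ is within $\sqrt{2}/2 \le (\beta+\sqrt{2})/2$ of $v$, and the inserted point-images are within $\sqrt{2}$ of their vertices, hence within $\sqrt{2} \le (\beta+\sqrt 2)/2$ (using $\beta \ge \sqrt 2$). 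One must check $\mu'$ is still monotone and continuous as a parametrized matching, which is routine: we only inserted point-visits at three places in a cyclic order, matching a reparametrization that pauses at those points.

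For $|C| = 2$, say $C = (u, v)$ with $u, v$ distinct grid vertices, the construction is analogous: I would insert two further grid vertices to complete a $4$-cycle. If $u$ and $v$ are grid-adjacent, I extend the edge $uv$ on one side to a unit cell; if they are not adjacent (which can only be the diagonal of a unit cell, since the pre-removal $C$ only ever connects squares sharing a common boundary, so consecutive vertices were grid-adjacent — hence after removal the two survivors are at grid-distance at most $2$, and if at distance $2$ they are the two opposite corners of a cell), I take the cell having $u$ and $v$ as opposite corners and use its other two corners. In all cases $\mu$ is extended by mapping the two new vertices to single points of $\partial P$, again distance at most $\sqrt 2$ away, and re-threaded into the cyclic order between the images of $u$ and $v$.

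The main obstacle I expect is not any single distance estimate — those are all bounded by $\max\{\sqrt2/2, \sqrt 2, \beta/2 + \sqrt 2/2\} = (\beta+\sqrt2)/2$ — but rather verifying that the extended cyclic correspondence $\mu'$ remains a valid Fréchet matching, i.e.\ that it is order-preserving on both $\partial P$ and $C'$ simultaneously. The point is that inserting a grid vertex $w$ with a point-image $p_w$ between two cyclically consecutive images requires exhibiting reparametrizations of $\partial P$ and of $C'$ under which the leash never exceeds the bound; since $p_w$ is within $\sqrt 2$ of $w$ and within $(\beta+\sqrt2)/2$ of every point of $\partial P$ that $\mu'$ pairs with the $C'$-edges incident to $w$ (all of $\partial P$ being confined to a region of diameter at most $\sqrt 2 + $ a bounded amount around $u,v$), the leash bound propagates. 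I would make this precise by giving the explicit piecewise-linear reparametrization, but it is mechanical; the geometric content is entirely captured by the observation that $\partial P$ is squeezed into a $1\times1$ or $1\times 2$ box, so everything is trivially close together.
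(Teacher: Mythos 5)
Your overall plan (pad $C$ to a $4$-cycle and extend the matching) is the paper's plan, but the proof rests on a false premise: that when $|C|\le 2$ the whole of $\partial P$ is ``squeezed into a $1\times1$ or $1\times2$ box.'' That is not what survives duplicate removal. When occurrences are removed, the corresponding pieces of $\partial P$ (detours of length up to $\beta$) are concatenated onto the $\mu$-image of the surviving vertex, and these pieces generally leave its square; after removal one only knows that every point of $\mu(v)$ is within $\beta/2+\sqrt2/2$ of $v$. If your containment claim were true, the lemma would hold with bound $\sqrt2$ and would not involve $\beta$ at all. This error propagates: for $|C|=1$ the distance from $v$ to points of $\partial P$ is bounded by $(\beta+\sqrt2)/2$, not $\sqrt2/2$ (harmless here), but your bound of $\sqrt2$ for the three new vertices does fail. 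You assert that ``any of the four incident cells works'' and that the point of $\partial P$ closest to each new vertex is within $\sqrt2$ of it; if $\partial P$ meets the square of $v$ only near one corner and you extend into the opposite quadrant, the nearest point of $\partial P$ to the far corner of that cell can be as far as $3\sqrt2/2$, which exceeds $(\beta+\sqrt2)/2$ whenever $\beta<2\sqrt2$. The paper avoids this precisely by \emph{choosing} the direction of extension so that some point $p\in\partial P$ lies in that direction from $v$ (inside $v$'s square), and then mapping the \emph{entire} extension to that single point $p$, which is within $\sqrt2$ of every point of the adjacent unit cell.

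A second, related gap is the monotonicity of your extended matching. If $v$ is matched to all of $\partial P$, then in any monotone (Fr\'echet) matching of the two closed curves the remaining three vertices and their incident edges must all be matched to the single point of $\partial P$ at which the traversal starts and ends; matching them to three \emph{distinct} ``closest points'' is in general not order-consistent, and you cannot simply ``insert point-visits at three places in a cyclic order.'' The same issue arises in your two-vertex case. The paper's choice of a single anchor point (for $|C|=2$, the point on the common boundary of the two squares where $\mu(u)$ and $\mu(v)$ meet) resolves both the distance bound and the ordering issue at once; your proof needs that ingredient.
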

\begin{proof}
	If $C$ is one vertex $v$, we make it into a simple 4-cycle, picking the direction of extension such that at least one point of $\partial P$ lies in that direction from $v$ as well. 
	We can map the entire extension onto this point, which has distance at most $\sqrt{2} \leq (\beta + \sqrt{2})/2$ by our assumption on $\beta$; mapping $v$ to the entire $\partial P$ then proves the necessary bound.

	If $C$ contains two vertices, $u$ and $v$, we first observe that $\mu(v)$ starts and ends on the boundary of the square of $v$, as this holds initially and is maintained during the removal of duplicates.
	In particular, that means that there is a point on the common boundary between the squares of $u$ and $v$ and $\mu(u)$ and $\mu(v)$ end there. 
	As for the single-vertex case, we extend into a simple 4-cycle in the direction of this point, and map the extension onto it (causing distance at most $\sqrt{2}$). 
	Together with $\mu(u)$ and $\mu(v)$, this acts as a witness for the necessary bound on the Fr\'echet-distance.
\end{proof}

\begin{cloneclaim}{Lemma~\ref{lem:p-obese}}
	The polygon $P$ described above is $(\sqrt{2}, \beta)$-narrow.
\end{cloneclaim}
\begin{proof}
	We must show that $|ab|_{\partial P} \leq \beta$ holds for any two points $a,b \in \partial P$ with $\|a - b\| \leq \sqrt{2}$.
	Note that $a$ and $b$ either lie on the same boundary segment of $\partial P$, on two consecutive segments, or on two consecutive parallel segments of $L$ (i.e., on $p_{i-1} p_{i}$ and $p_{i} p_{i+1}$ for some $1<i<n$): otherwise the Euclidean distance between $a$ and $b$ would be larger than $\sqrt{2}$.
	If the two points lie on the same segment, then the distance between $a$ and $b$ along the polygon boundary $|ab|_{\partial P}=||a - b|| \leq \sqrt{2} < \beta$.

	Now assume that $a$ and $b$ lie on two consecutive segments of the boundary of $P$.
	Denote the common point of the two segments to which $a$ and $b$ belong as $p$.
	By construction, $\angle apb \in [\varphi, \pi)$ where $\varphi = \arccos{(1-\frac{4}{\beta^2})}$. 
	As the Euclidean distance is convex, symmetry implies that the value of $\|a-p\|+\|p-b\|$ is maximized when $\|a-p\|=\|p-b\|$ and $\|a - b\| = \sqrt{2}$ (see also Lemma~\ref{lem-obesity}, Appendix~\ref{app-measure-narrow}). 
	Therefore, using the law of cosines, we can conclude that $|ab|_{\partial P}=\|a-p\|+\|p-b\| \leq \frac{2}{\sqrt{1-\cos{\varphi}}}=\beta$.

	Finally, assume that $a$ and $b$ lie on two consecutive parallel segments $p_{i-1} p_{i}$ and $p_{i} p_{i+1}$ for some $1<i<n$.
	W.l.o.g., let $a$ lie on segment $p_{i-1} p_{i}$ and let $i$ be even.
	Then observe that the $x$-coordinate of $b$ is not greater than the $x$-coordinate of $a$, otherwise the Euclidean distance between $a$ and $b$ would be larger than $\sqrt{2}$.
	Therefore, the distance between $a$ and $b$ along $\partial P$ is not greater than the length of two boundary edges forming a spike, i.e., $|ab|_{\partial P} \leq \beta$.
\end{proof}

\clearpage

\section{Measuring narrowness}
\label{app-measure-narrow}

We bounded the Fr\'echet distance needed for a grid polygon, depending on the $\sqrt{2}$-narrowness of the input polygon.
An important question to ask is then how narrow realistic inputs really are.
We briefly describe how to compute the $\alpha$-narrowness of a polygon $P$ in quadratic time.
To this end, we must find the pair of points $p,q$ on $\partial P$ with $\| p - q \| \leq \alpha$ such that $|pq|_{\partial P}$ is maximized; we call such a pair an \emph{($\alpha$-narrow) witness}.
If we consider $|pq|_{\partial P}$ for some fixed $p$, then we see that this function has a single maximum with value $|P|/2$.
In particular, this implies the observation below, and inspires the two subsequent lemmas.
From these statements, the algorithm readily follows.

\begin{obs}
	\label{obs-obesity}
	An $\alpha$-narrow witness $(p,q)$ of $P$ satisfies $\|p-q\| = \alpha$ or $|pq|_{\partial P} = |P|/2$.
\end{obs}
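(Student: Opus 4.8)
Proof proposal for Observation~\ref{obs-obesity}.

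The plan is to argue by contradiction: suppose $(p,q)$ is an $\alpha$-narrow witness with $\|p-q\| < \alpha$ and $|pq|_{\partial P} < |P|/2$, and show that we can perturb one of the two points to strictly increase the perimeter distance while keeping the Euclidean distance at most $\alpha$, contradicting maximality.

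First I would fix $p$ and consider the function $f_p(q) = |pq|_{\partial P}$ as $q$ ranges over $\partial P$, parametrized by arc length. As noted in the surrounding text, $f_p$ increases monotonically as $q$ moves away from $p$ along either direction of $\partial P$, attains its unique maximum value $|P|/2$ at the antipodal point (the point at perimeter distance $|P|/2$ from $p$), and then decreases. Hence, as long as $q$ is not the antipodal point of $p$ — equivalently, as long as $|pq|_{\partial P} < |P|/2$ — there is a direction along $\partial P$ in which moving $q$ strictly increases $f_p(q)$.

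Next I would use the slack in the Euclidean constraint. Since $\|p-q\| < \alpha$ strictly, and the position of $q$ on $\partial P$ depends continuously on its arc-length parameter, moving $q$ a sufficiently small distance along $\partial P$ in the increasing direction keeps $\|p-q\| \le \alpha$ (by continuity of the Euclidean distance). This strictly increases $|pq|_{\partial P}$, contradicting that $(p,q)$ was a witness (i.e., that it maximized perimeter distance subject to $\|p-q\|\le\alpha$). Therefore at least one of $\|p-q\| = \alpha$ or $|pq|_{\partial P} = |P|/2$ must hold. I would also note the boundary case where $q$ is already the antipodal point of $p$: then $|pq|_{\partial P} = |P|/2$ already and the second alternative holds, so no perturbation argument is needed.

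The main obstacle is being careful about the monotonicity claim for $f_p$ — it relies on the fact that $\partial P$ is a simple closed curve, so that perimeter distance is well defined as the shorter of the two arc lengths and behaves as a ``tent'' function with a single peak at the antipode. Once that structural fact is in hand (it is essentially immediate, and is the content of the preceding sentence in the text), the rest is a routine continuity-and-perturbation argument. A minor additional subtlety is ensuring the perturbation of $q$ stays on $\partial P$ and that $q$ is chosen as the point to move (rather than $p$) — but by symmetry of the roles of $p$ and $q$ this is no loss of generality.
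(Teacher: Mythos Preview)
Your proposal is correct and follows exactly the reasoning the paper sketches: the paper simply notes that for fixed $p$ the function $q \mapsto |pq|_{\partial P}$ has a single maximum $|P|/2$ and declares that this implies the observation, without spelling out the continuity/perturbation step. Your argument is just a more explicit version of that same idea.
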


\begin{lemma}
	\label{lem-obesity-maxbeta}
	Let $e = (t,u)$ and $e' = (v,w)$ be two edges of $P$, and assume $|uv|_{\partial P}$ is known.
	We can determine in constant time whether there is a $p \in e$ and $q \in e'$ such that $\|p-q\| \leq \alpha$ and $|pq|_{\partial P} = |P|/2$.
\end{lemma}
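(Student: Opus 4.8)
The plan is to set up explicit parametrizations of the two edges and then reduce the existence question to checking whether a curve (the locus of pairs at perimeter-distance exactly $|P|/2$) meets a half-plane-like region (the set of pairs at Euclidean distance at most $\alpha$) — a comparison between two one-parameter families, which can be decided in constant time.

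First I would parametrize $e=(t,u)$ as $p(s) = t + s\,(u-t)$ for $s\in[0,1]$ and $e'=(v,w)$ as $q(s') = v + s'\,(w-v)$ for $s'\in[0,1]$. Walking along $\partial P$ from $p(s)$ toward $q(s')$ in the direction that goes through $u$ and then $v$, the perimeter distance is $|pq|_{\partial P}$-candidate value $f(s,s') = (1-s)\,\|u-t\| + |uv|_{\partial P} + s'\,\|w-v\|$, which is an affine function of $(s,s')$ (and one must also keep in mind the other way around $\partial P$, handled symmetrically, or argue by Observation~\ref{obs-obesity} that the relevant direction is the one just described). Setting $f(s,s') = |P|/2$ therefore defines a line segment $\ell$ in the unit square $[0,1]^2$ of parameter pairs; computing its endpoints is a constant-time calculation (it may of course be empty, in which case we answer ``no'').

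Next I would handle the constraint $\|p(s) - q(s')\| \leq \alpha$. The function $(s,s') \mapsto \|p(s)-q(s')\|^2$ is a (convex) quadratic in two variables, so the sublevel set $\{\|p-q\|\leq\alpha\}$ is a convex region — an ellipse intersected with $[0,1]^2$, or a slab if the edges are parallel. The question then becomes: does the segment $\ell$ intersect this convex region? Restricting the quadratic $\|p(s)-q(s')\|^2$ to the affine segment $\ell$ gives a univariate quadratic in the segment parameter; we just check whether it is $\leq \alpha^2$ somewhere on the relevant parameter interval (again by looking at its minimum and the endpoint values), which is constant time. Finally I would combine: there exists such a pair iff $\ell \cap [0,1]^2 \neq \emptyset$ and the restricted quadratic drops to at most $\alpha^2$ on that portion of $\ell$.

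The main obstacle I anticipate is bookkeeping rather than depth: one must be careful that $|pq|_{\partial P}$ is the \emph{shorter} of the two ways around $\partial P$, so the affine formula $f$ is only correct when it does not exceed $|P|/2$, and the condition $|pq|_{\partial P} = |P|/2$ is exactly the boundary case where the two directions agree — so $f(s,s') = |P|/2$ is the right equation to impose, and no min-of-two-affine-functions subtlety actually bites. A second minor point is the parallel-edges degeneracy, where the ellipse degenerates to a slab; this is still convex and the same segment-vs-convex-region test applies. Everything else is a finite sequence of closed-form computations, giving the claimed constant-time bound.
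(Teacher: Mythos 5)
Your proposal is correct and follows essentially the same route as the paper: both parametrize the two edges linearly, reduce the condition $|pq|_{\partial P} = |P|/2$ to an affine relation between the two parameters (a line segment in the parameter square), substitute into $\|p-q\|\leq\alpha$ to obtain a univariate quadratic inequality, and finish with a constant number of interval intersections with $[0,1]$. Your remark that the equality $|pq|_{\partial P}=|P|/2$ makes the min-of-two-directions issue vacuous is a point the paper leaves implicit, but otherwise the arguments coincide.
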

\begin{proof}
	We express $p \in e$ as a function of $\lambda_p \in [0,1]$: $p = t (1 - \lambda_p) + u \lambda_p$.
	Analogously, for $\lambda_q \in [0,1]$ we have $q =  v (1 - \lambda_q) + w \lambda_q$.

	The condition $|pq|_{\partial P} = |P|/2$ can now be written as $\|p - u\| + |uv|_{\partial P} + \| v - q \| = |P|/2$, which, using the above expressions leads us to $(1 - \lambda_p) \| t - u \| + |uv|_{\partial P} + \lambda_q \| v - w \| = |P|/2$.
	We may rewrite this to $\lambda_q = (|P|/2 - (1 - \lambda_p) \| t - u \| - |uv|_{\partial P}) /  \| v - w \| = (|P|/2 - \| t - u \| - |uv|_{\partial P}) / \| v - w \| + \lambda_p \| t - u \| / \| v - w \|$.
	To make this expression simpler, we introduce $R = \| t - u \| / \| v - w \|$ and $C = (|P|/2 - \| t - u \| - |uv|_{\partial P}) / \| v - w \|$.
	Then, we find $\lambda_q = C + R \lambda_p$.

	As above, we can write $\|p-q\| \leq \alpha$ to $\| t (1 - \lambda_p) + u \lambda_p - (v (1 - \lambda_q) + w \lambda_q)  \| \leq \alpha$. Substituting the expression for $\lambda_q$, we obtain $\| t (1 - \lambda_p) + u \lambda_p - (v (1 - C - R \lambda_p) + w (C + R \lambda_p)) \| = \| (t - v + v C - w C) + (u + v R - t - w R) \lambda_p \| \leq \alpha$.
	Introducing $c = t - v + v C - w C$ and $r = u + v R - t - w R$, we get $\| c + r \lambda_p \| \leq \alpha$.
	Writing out the Euclidean distance, this becomes $(c_x + r_x \lambda_p)^2 + (c_y + r_y \lambda_p)^2 \leq \alpha^2$ which is a quadratic equation: $(r_x^2 + r_y^2) \lambda_p^2 + 2 (c_x r_x + c_y r_y) \lambda_p + (c_x^2 + c_y^2 - \alpha^2) \leq 0$. 

	Solving this quadratic equation, yields us an interval describing the points on the line spanned by $e$ that satisfy the two conditions.
	If this interval does not overlap $[0,1]$, then there are no solutions. 
	Otherwise, let $[\lambda_{p,1}, \lambda_{p,2}]$ denote the intersection of the computed interval with $[0,1]$.
	Via $\lambda_q = C + R \lambda_p$, this projects an interval on the line spanned by $e'$ containing the points corresponding to the points described by $[\lambda_{p,1}, \lambda_{p,2}]$.
	There is now a solution satisfying both equations if this projected interval intersects $[0,1]$.
\end{proof}

\begin{lemma}
	\label{lem-obesity}
	Let $P$ be a $(\alpha, \beta)$-narrow polygon with $\beta < |P|/2$.
	$P$ has an $\alpha$-narrow witness $(p,q)$ such that $\|p-q\| = \alpha$ and at least one of the following holds:
	(1) $p$ is a vertex of $P$; (2) $p$ and $q$ lie on nonparallel edges of $P$ and are equidistant from the intersection of the lines spanned by these edges.
\end{lemma}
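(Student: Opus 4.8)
The plan is to characterize the $\alpha$-narrow witnesses by a perturbation (Lagrange-multiplier-style) argument. First I would note that $\{(p,q)\in\partial P\times\partial P : \|p-q\|\le\alpha\}$ is compact and $|pq|_{\partial P}$ is continuous on it, so an $\alpha$-narrow witness exists; moreover, since the $\alpha$-narrowness $\beta$ satisfies $\beta<|P|/2$, Observation~\ref{obs-obesity} already forces $\|p-q\|=\alpha$ for \emph{every} witness, so only assertions (1)/(2) remain. I split into two cases. If some witness has an endpoint at a vertex of $P$, relabel so that endpoint is $p$ and we are in case~(1). Otherwise every witness $(p,q)$ has $p$ and $q$ in the relative interiors of two edges $e$ and $e'$, and I will show that such a witness satisfies~(2) once parallel edges are excluded. (Note $e\ne e'$ at a witness, since two points on one edge have boundary distance at most $\alpha<\beta$.)

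So fix a witness $(p,q)$ with $p,q$ in the relative interiors of edges $e$ and $e'$, and suppose first that $e$ and $e'$ are not parallel. I would vary $p$ and $q$ along the supporting lines of $e$ and $e'$ keeping $\|p-q\|=\alpha$. Near $(p,q)$ this constraint set is a smooth $1$-manifold by the implicit function theorem, except in the degenerate configuration where $q$ is the foot of the perpendicular from $p$ onto the line of $e'$ (circle of radius $\alpha$ around $p$ tangent to that line); I would dispose of this configuration directly, by observing that moving $p$ a distance $\delta$ along $e$ toward the intersection of the two lines lets the admissible position of $q$ range over a sub-edge of length $\Theta(\sqrt\delta)\gg\delta$, which strictly increases $|pq|_{\partial P}$ and contradicts that $(p,q)$ is a witness. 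Away from that configuration, $|p'q'|_{\partial P}$ is smooth along the $1$-manifold, because $\beta<|P|/2$ makes the realizing boundary arc strictly shorter than the other one, hence locally fixed, so on the fixed edges $e,e'$ it equals $\|p'-u\|+|uv|_{\partial P}+\|v-q'\|$ with $u\in e,\,v\in e'$ the first vertices along that arc. Since $(p,q)$ maximizes this smooth function over the manifold and lies in its interior, the derivative vanishes. Writing $\hat n_p$ and $\hat n_q$ for the unit vectors along $e$ and $e'$ pointing from $u$ to $p$ and from $v$ to $q$, stationarity reads $\hat n_p\cdot\dot p+\hat n_q\cdot\dot q=0$ and differentiating $\|p-q\|=\alpha$ reads $(p-q)\cdot(\dot p-\dot q)=0$; since $\dot p$ is a multiple of $\hat n_p$, $\dot q$ a multiple of $\hat n_q$, and the (nonzero) tangent cannot vanish, eliminating $(\dot p,\dot q)$ yields $(p-q)\cdot(\hat n_p+\hat n_q)=0$. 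Letting $x$ be the intersection of the supporting lines and writing $p=x+\alpha_p\hat n_p$, $q=x+\alpha_q\hat n_q$, this becomes $(\alpha_p-\alpha_q)(1+\hat n_p\cdot\hat n_q)=0$; as $\hat n_p\cdot\hat n_q=-1$ would make $e\parallel e'$, we get $\alpha_p=\alpha_q$, i.e.\ $\|p-x\|=\|q-x\|$, which is case~(2).

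It remains to rule out parallel edges in this case. If $e\parallel e'$ (necessarily distinct), the constraint set on the supporting lines is a line (or a pair of lines) in parameter space, giving a nontrivial family along which $p$ and $q$ translate by the same vector; stationarity $\hat n_p\cdot\dot p+\hat n_q\cdot\dot q=0$ then forces $\hat n_p=-\hat n_q$, for $\hat n_p=\hat n_q$ would force $\dot p=0$, contradicting that the family is $1$-dimensional. With $\hat n_p=-\hat n_q$ I can slide $p$ along $e$ (with $q$ translating identically), and along the slide $\|p-u\|$ and $\|v-q\|$ change by exactly opposite amounts, so $|pq|_{\partial P}$ stays constant; sliding until $p$ or $q$ reaches an edge endpoint produces a witness with a vertex endpoint, contradicting the standing assumption that every witness has both endpoints in edge interiors. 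Hence no witness has parallel edges, and every witness in this case falls under the argument of the previous paragraph and satisfies~(2).

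The step I expect to be the main obstacle is not the stationarity computation but the bookkeeping around degeneracies: verifying that the constraint set is genuinely a smooth curve and that $|pq|_{\partial P}$ is smooth along it, and carefully discharging the perpendicular-foot configuration and the parallel- (and collinear-) edge configurations so that the case split is exhaustive. Each is routine, but all must be handled for the conclusion to be airtight.
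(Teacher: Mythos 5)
Your proposal is correct, and it reaches the lemma by a genuinely different route than the paper. You derive a necessary first-order (stationarity/KKT) condition that any witness with both points in edge interiors must satisfy: along the constraint curve $\|p-q\|=\alpha$ the derivative of the locally affine arc-length function vanishes, and eliminating the tangent vector gives $(p-q)\cdot(\hat n_p+\hat n_q)=0$, hence equidistance from the intersection point for nonparallel edges; for parallel edges stationarity forces a neutral slide that terminates at a vertex witness, contradicting the case hypothesis. The paper instead argues constructively: it takes an arbitrary witness violating (1) and (2) and deforms it into a compliant one by sliding $p$ and $q$ simultaneously at equal speed along their supporting lines --- toward the intersection $c$ until a vertex is hit when at least one increasing direction points that way (or when the lines are parallel), and otherwise toward the ``inverse'' configuration with swapped distances to $c$, using convexity of the Euclidean norm to see that the pair stays within distance $\alpha$ and that the midpoint of this deformation is an equidistant witness. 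So your stationarity computation replaces the paper's convexity/reflection trick, and your proof establishes that \emph{every} non-vertex, non-parallel witness satisfies (2), whereas the paper only exhibits \emph{some} compliant witness. Your version is also more explicit about the degeneracies (the perpendicular-foot tangency, smoothness of the constraint curve and of the arc-length function via $\beta<|P|/2$, and the parallel/collinear edges), which the paper's informal sliding argument largely glosses over; the paper's version, in exchange, avoids any appeal to manifolds or implicit-function arguments and stays entirely elementary.
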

\begin{proof}
	Observation~\ref{obs-obesity} implies that a witness satisfies $\|p-q\| = \alpha$.
	Consider such a witness $(p,q)$ that does not adhere to either (1) or (2). 
	Without reducing $|pq|_{\partial P}$ we transform the witness into one that does (refer to Fig.~\ref{fig:obesity-sliding}).
	Let $\ell_p$ and $\ell_q$ denote the lines spanned by the edges containing $p$ and $q$ respectively; let $c$ denote their intersection.
	Both $p$ and $q$ have a direction along their respective lines for which $|pq|_{\partial P}$ increases if we keep the other fixed.
	Note that if the direction would change at any point, we would have a maximum value and thus a witness showing that $P$ is $(\alpha, |P|/2)$-narrow, contradicting the assumption.

	\begin{figure}[h]
	\centering
	\includegraphics{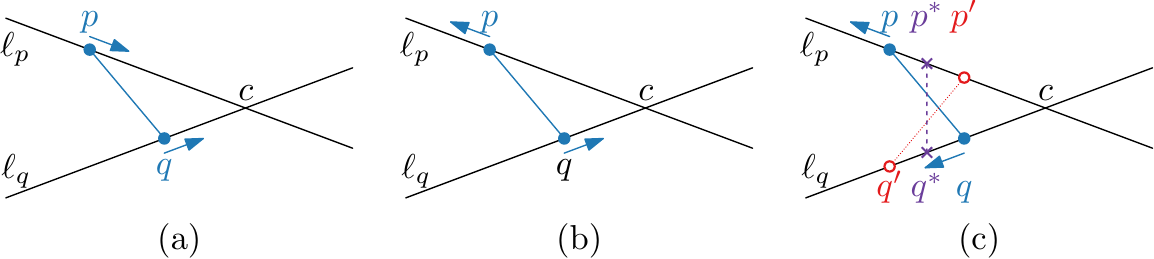}
	\caption{(a-b) Moving $p$ and $q$ towards $c$ does not decrease $|pq|_{\partial P}$ nor increase $\|p-q\|$; we can do so until we hit a vertex. (c) Inverse case $(p',q')$ has distance at most $\alpha$; the midpoint $(p^*,q^*)$ thus defines an equidistant witness.}
	\label{fig:obesity-sliding}
	\end{figure}

	If at least one direction points to $c$ or if the lines are parallel, we simultaneously slide $p$ and $q$ (at the same speed) into that direction without increasing $\|p-q\|$ or decreasing $|pq|_{\partial P}$. At some point, a vertex of $P$ must be found and thus we have a witness that adheres to (1).

	If the lines are not parallel and both directions point away from $c$, we argue as follows. 
	Consider the inverse case $(p',q')$, where we place $p'$ on $\ell_p$ such that $\|p' - c\| = \|q - c\|$ and $q'$ on $\ell_q$ such that $\|q' - c\| = \|p - c\|$. 
	Since $\|p' - p\| = \|q' - q\|$, $\|p' - q'\| \leq \alpha$; note that $p'$ and $q'$ need not lie on the edges containing $p$ and $q$. 
	Convexity of the Euclidean distance implies that, as we move $p$ to $p'$ and $q$ to $q'$ simultaneously at the same speed the points remain within distance $\alpha$ and $|pq|_{\partial P}$ does not change. 
	In particular, halfway, we find a witness $(p^*,q^*)$ that are equidistant to $c$.
	If either $p$ or $q$ hits before reaching $(p^*,q^*)$, we are done and found a witness adhering to (1).
	Otherwise, if $\|p^* - q^*\| < \alpha$, we may move them again simultaneously away from $c$ (thus increasing $|p^*q^*|_{\partial P}$), until we either hit a vertex---adhering to (1)---or until the distance is exactly $\alpha$---adhering to (2).
\end{proof}

\clearpage
\section{Experiments}
\label{app:experiments}

In this appendix we present more details for our experiments.

\subsection{Experimental inputs}
\label{sec:app-inputs}

We used a total of 34 simple polygons to fuel our experiments.
They can be categorized as territorial outlines, building footprints and animal contours.
The former two target schematization purposes and the latter the construction of nonograms.

\subparagraph{Territorial outlines}
We used 14 territorial outlines (e.g. countries, provinces, islands), as illustrated in Fig.~\ref{fig:maps}.
We tried to vary the geographic extent and nature of the outlines.
In particular, we have two continents (Africa and Antarctica), three provinces (Noord Brabant, Limburg and Languedoc-Roussillon), and the largest contiguous part of 9 countries (Australia, Brazil, China, France, Greece, Italy, Switzerland, Great Britain, Vietnam).
With this choice, the outlines contain a mix of regions defined by shorelines, territorial borders or a combination of these. Moreover, the geographic extent varies from small to large-scale regions.

\subparagraph{Building footprints}
Our data set contains 11 buildings (Fig.~\ref{fig:buildings}).
It contains a mix of complex buildings (Bld 1, terminal of Logan Airport in Boston (MA); Bld 5, high school in Chicago; Bld 6, stadium in Chicago; Bld 11, university building at TU Eindhoven), castles (Bld 2--4) and low complexity buildings (Bld 7--10).

\subparagraph{Animal contours}
Finally, our experiments also consider 9 animal contours (Fig.~\ref{fig:animals}).
In particular, we chose these inputs to achieve an expected variety of narrowness. 
For example, the spider intuitively is more narrow than the turtle.


\begin{figure}[p]
\vspace{-4\baselineskip}
	\centering
	\begin{tabular}{ccc}
		\includegraphics[width=0.21\linewidth]{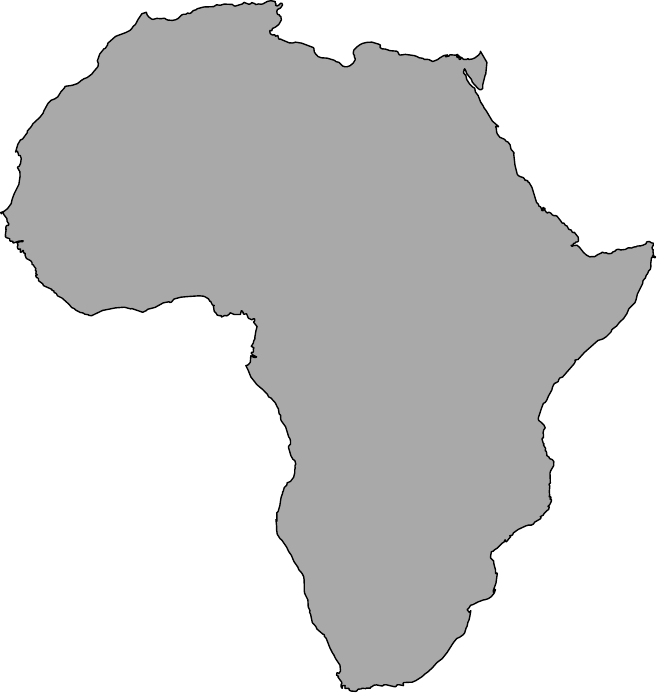}
		&
		\includegraphics[width=0.25\linewidth]{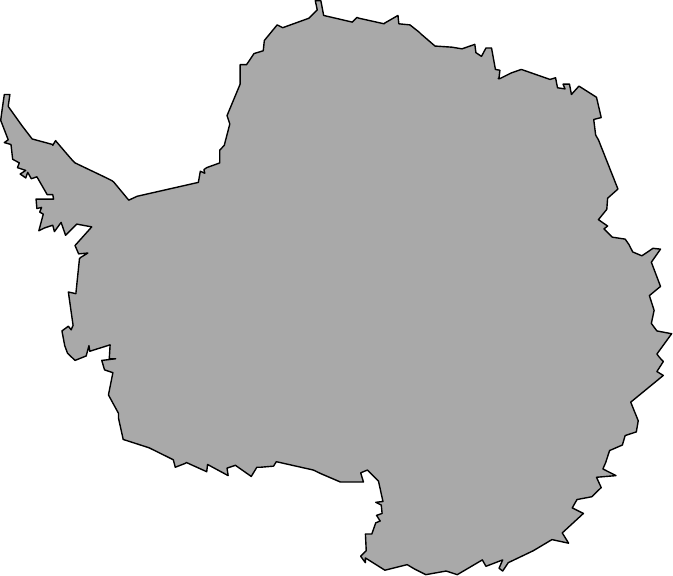}
		&
		\includegraphics[width=0.25\linewidth]{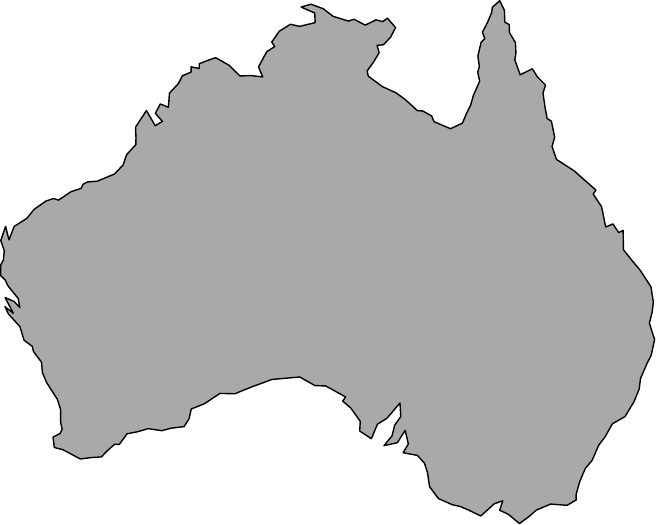}
		\\
		Africa & Antarctica & Australia \\

		\includegraphics[width=0.25\linewidth]{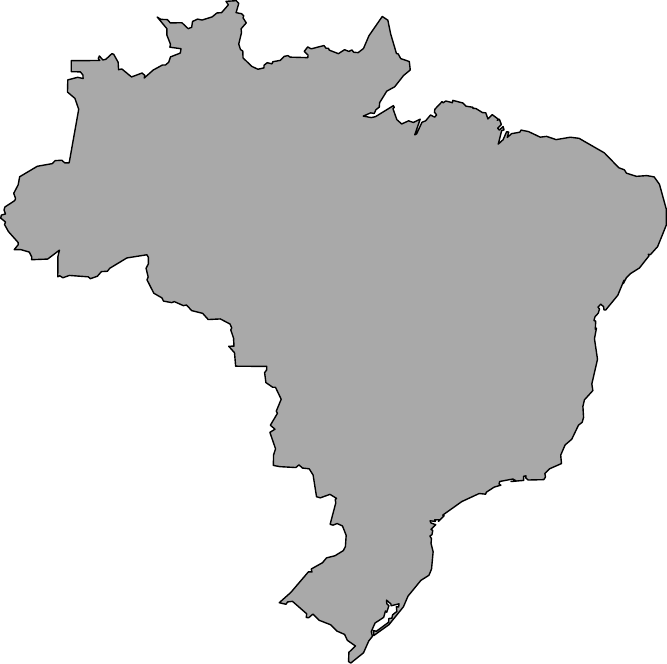}
		&
		\includegraphics[width=0.25\linewidth]{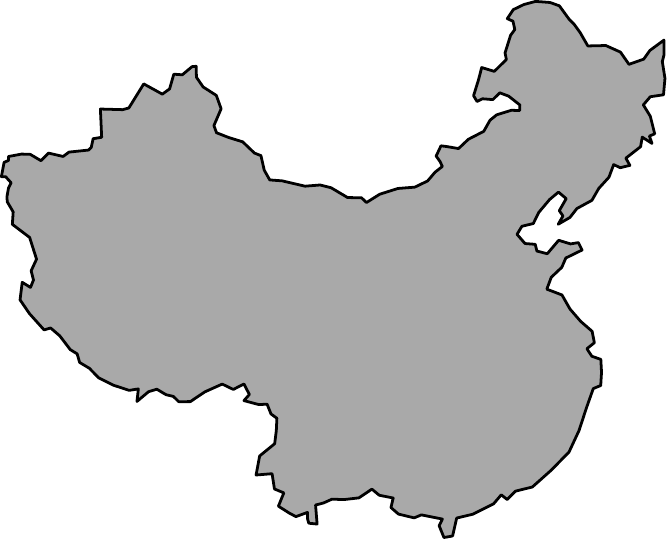}
		&
		\includegraphics[width=0.25\linewidth]{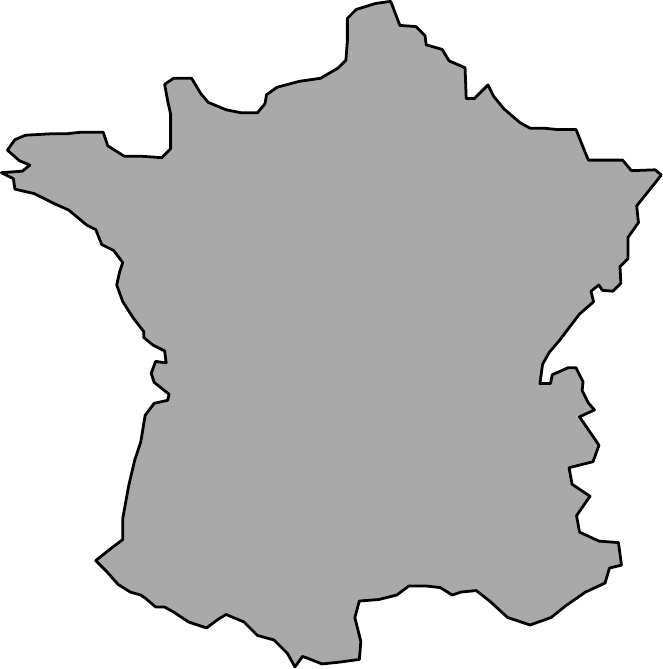}
		\\
		Brazil & China & France \\

		\includegraphics[width=0.21\linewidth]{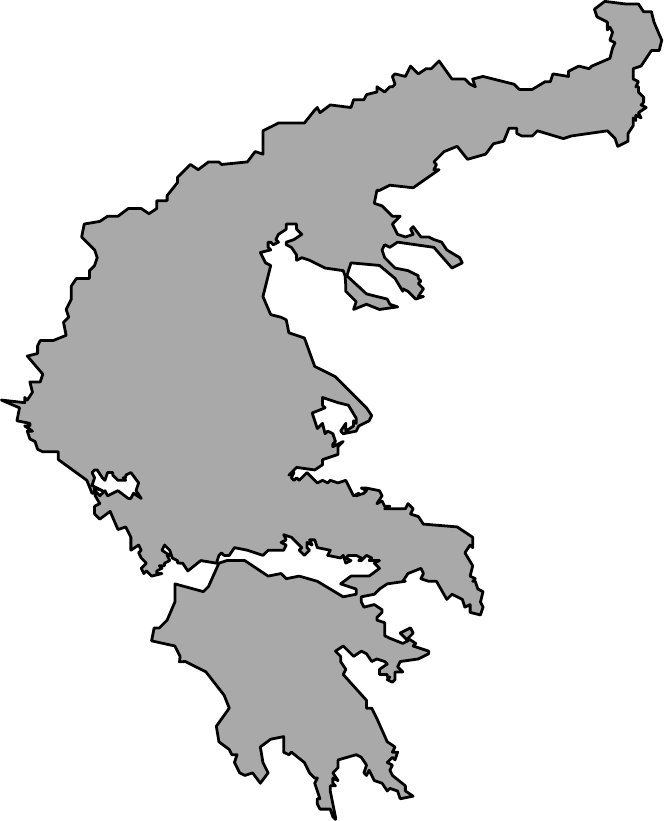}
		&
		\includegraphics[width=0.25\linewidth]{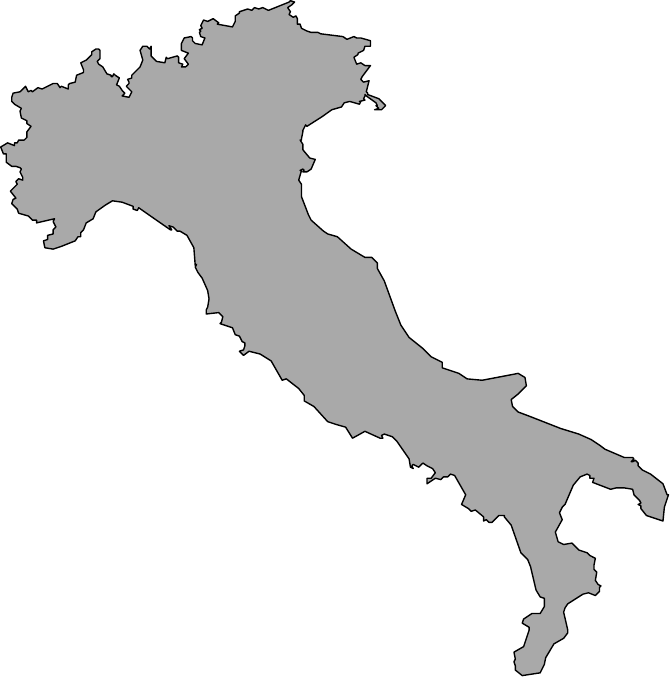}
		&
		\includegraphics[width=0.23\linewidth]{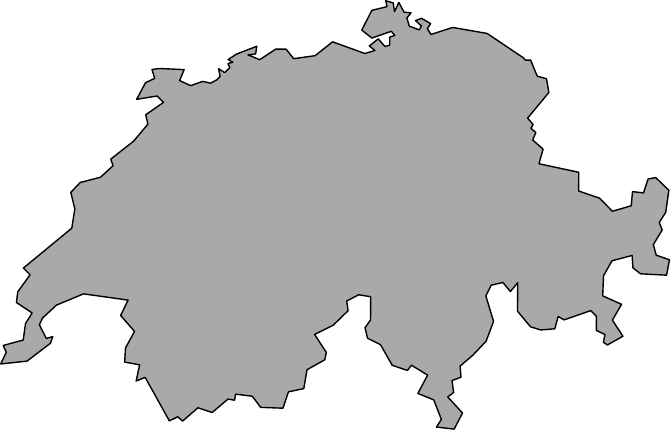}
		\\
		Greece & Italy & Switzerland \\

		\includegraphics[width=0.2\linewidth]{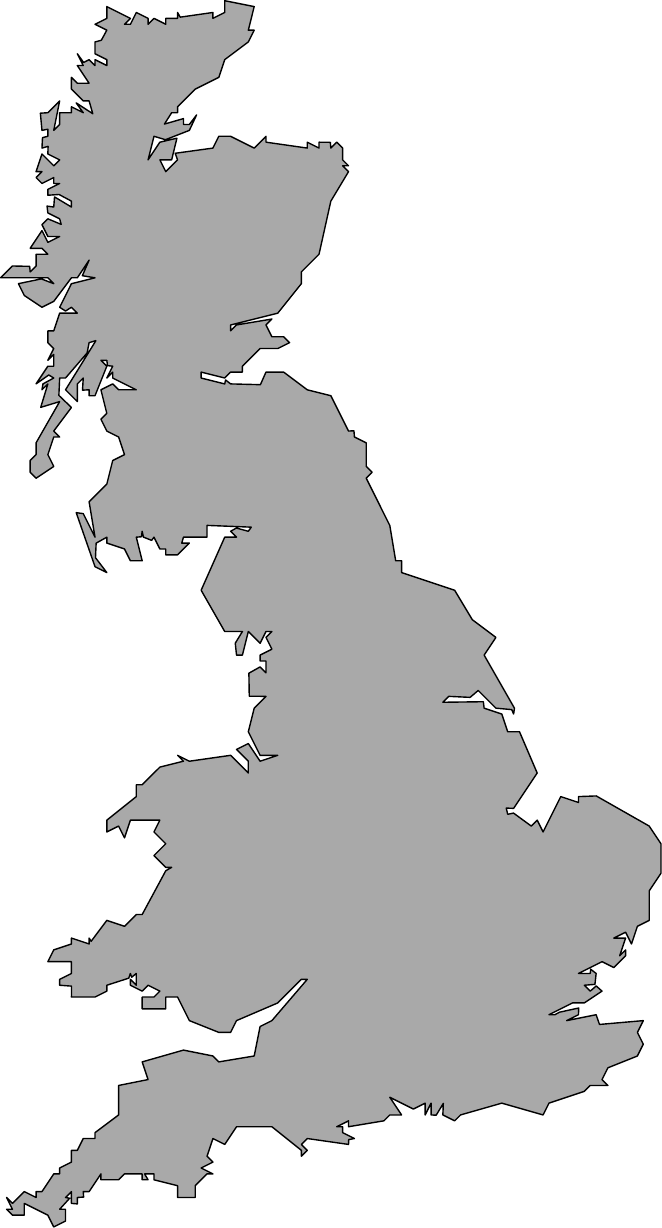}
		&
		\includegraphics[width=0.18\linewidth]{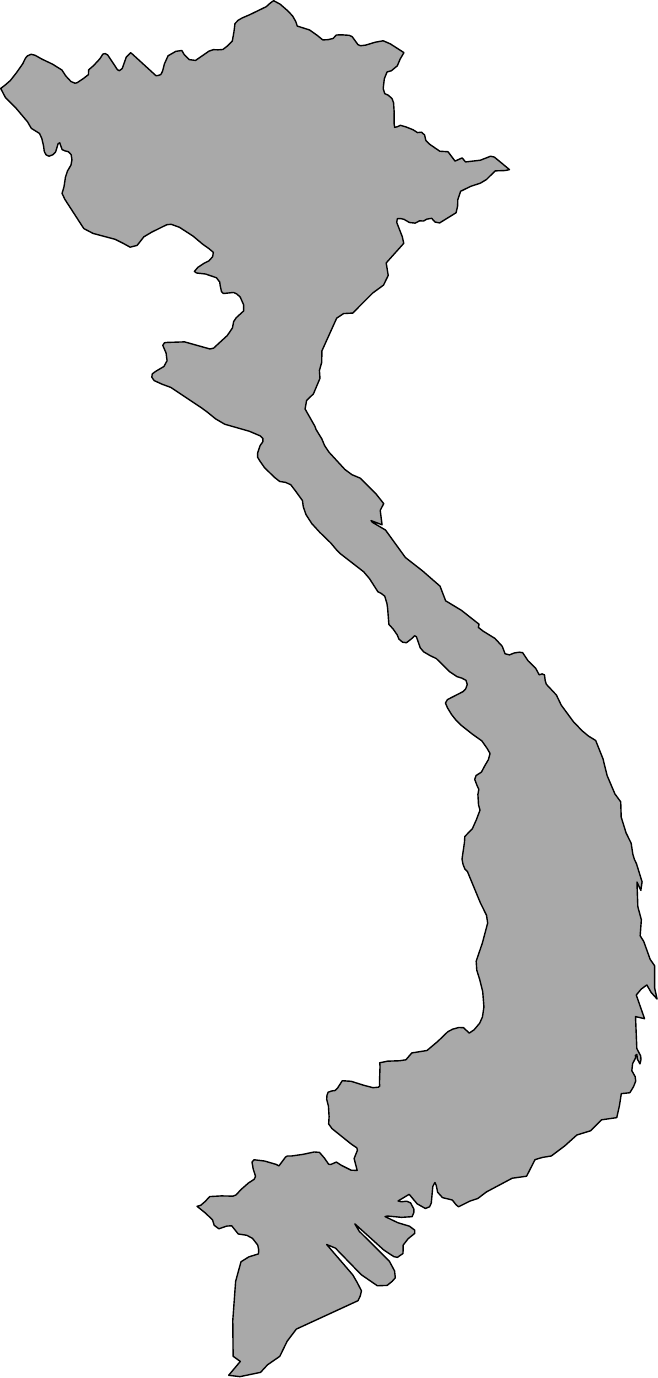}
		&
		\includegraphics[width=0.15\linewidth]{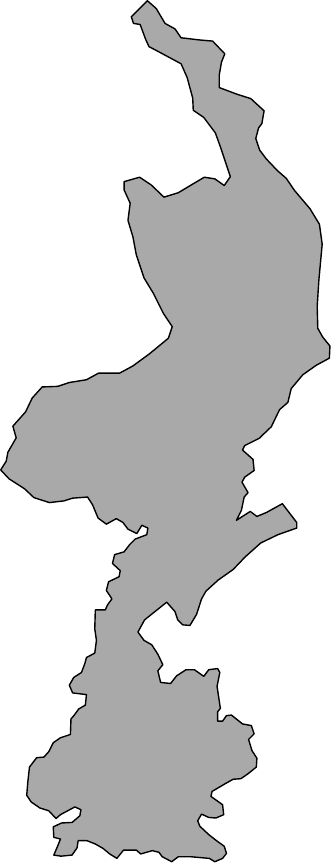}
		\\
		Great Britain & Vietnam & Limburg (NL) \\

		\includegraphics[width=0.25\linewidth]{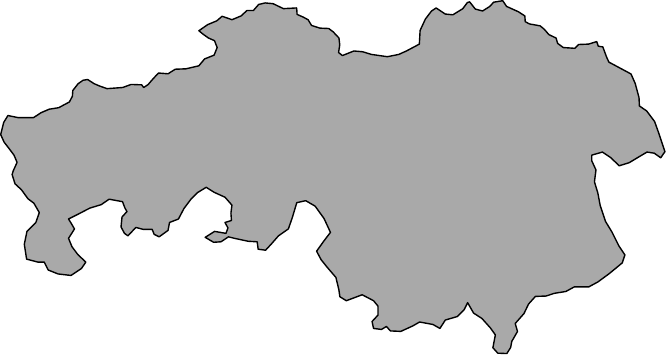}
		&
		\includegraphics[width=0.21\linewidth]{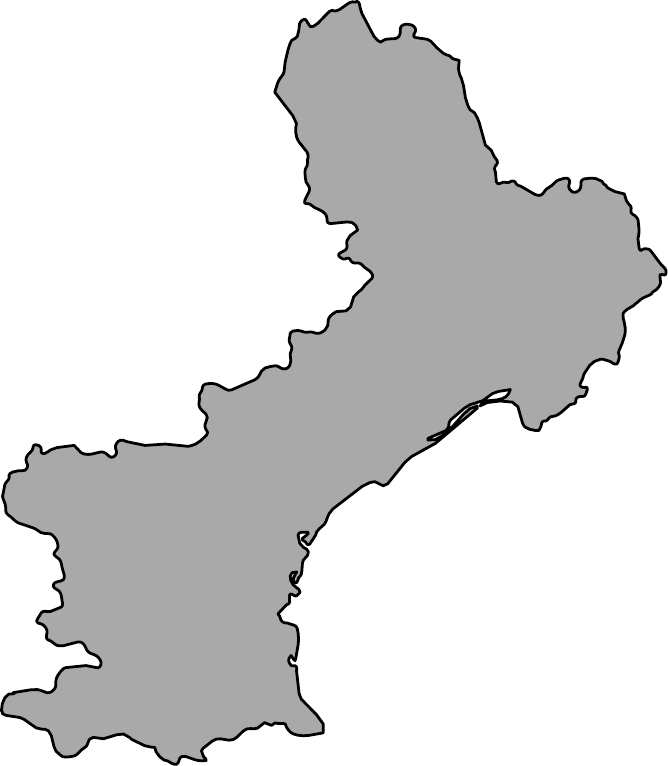}
		\\
		Noord Brabant (NL) & Languedoc-Roussillon (FR) \\
	\end{tabular}
	\caption{The 14 territorial outlines used in our experiments.}
	\label{fig:maps}
\end{figure}

\begin{figure}[p]
	\centering
	\begin{tabular}{ccc}
		\includegraphics[width=0.21\linewidth]{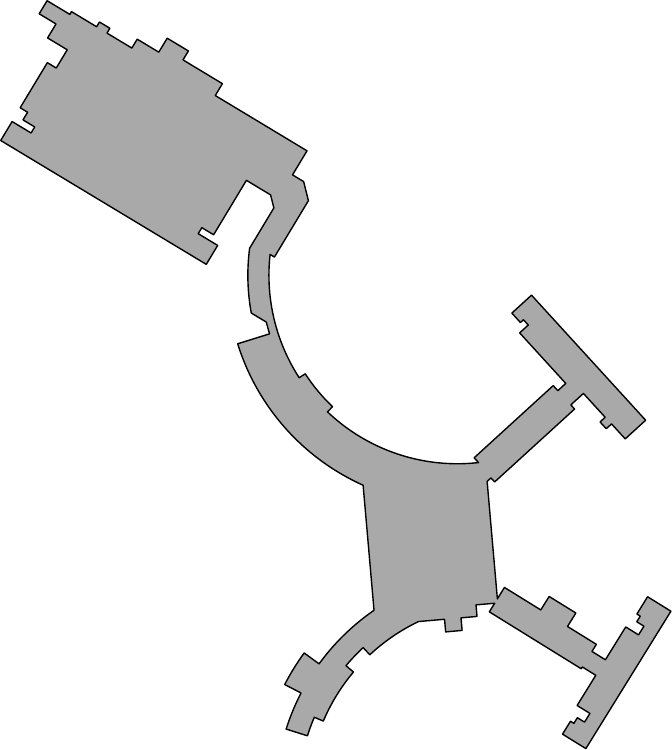}
		&
		\includegraphics[width=0.25\linewidth]{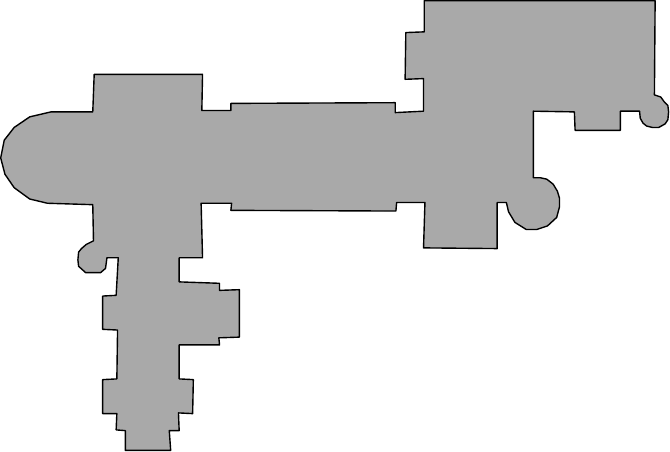}
		&
		\includegraphics[width=0.25\linewidth]{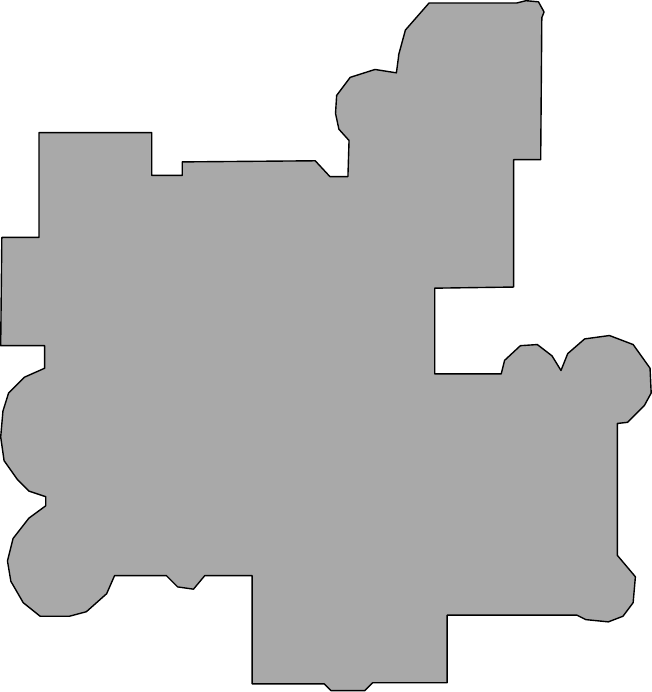}
		\\
		Bld 1 & Bld 2 & Bld 3 \\

		\includegraphics[width=0.21\linewidth]{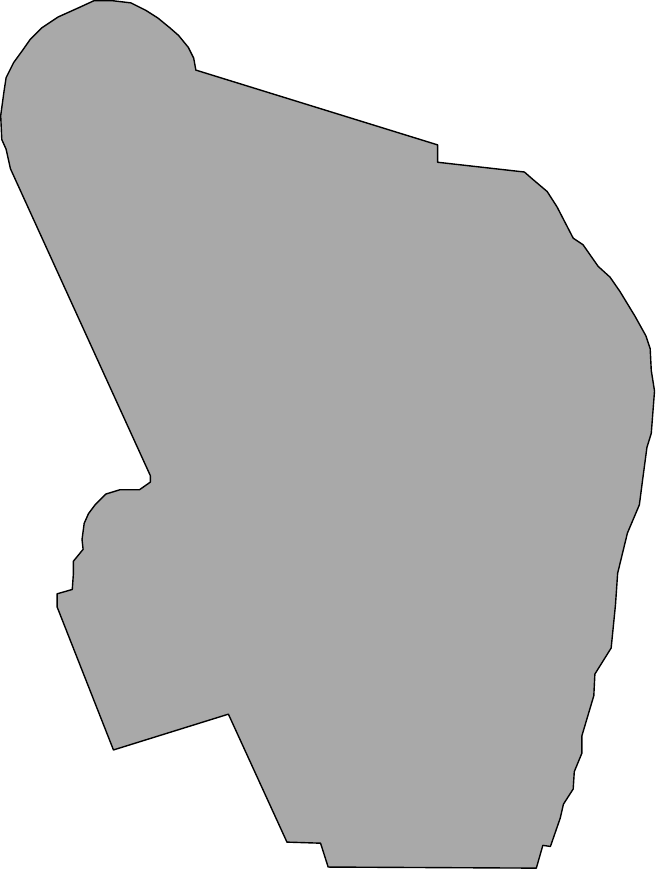}
		&
		\includegraphics[width=0.25\linewidth]{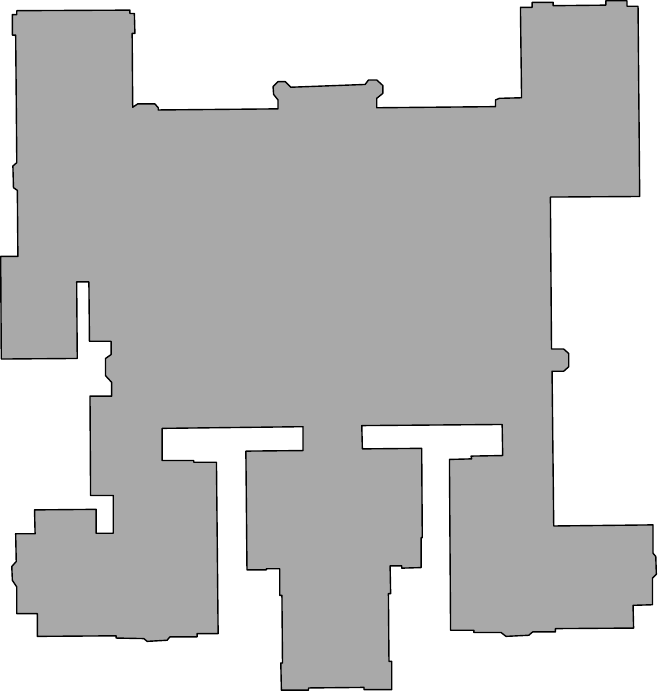}
		&
		\includegraphics[width=0.23\linewidth]{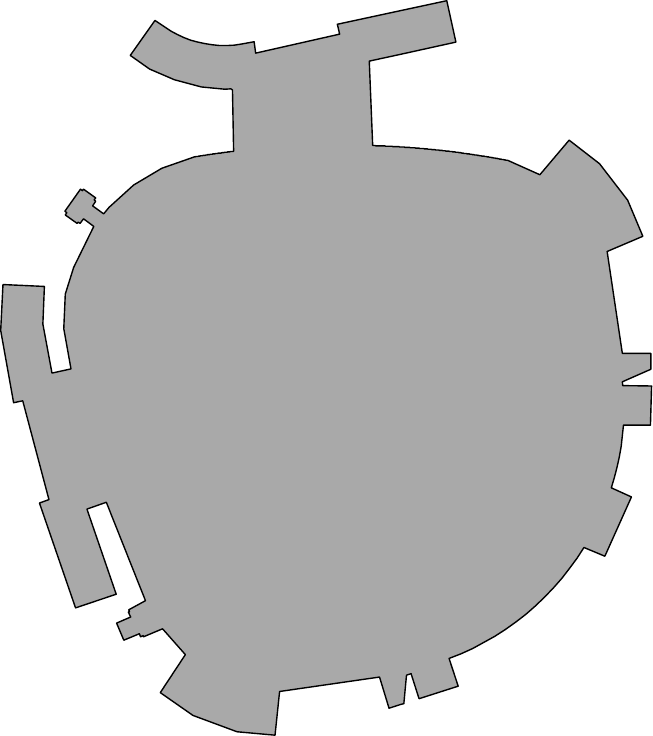}
		\\
		Bld 4 & Bld 5 & Bld 6 \\

		\includegraphics[width=0.21\linewidth]{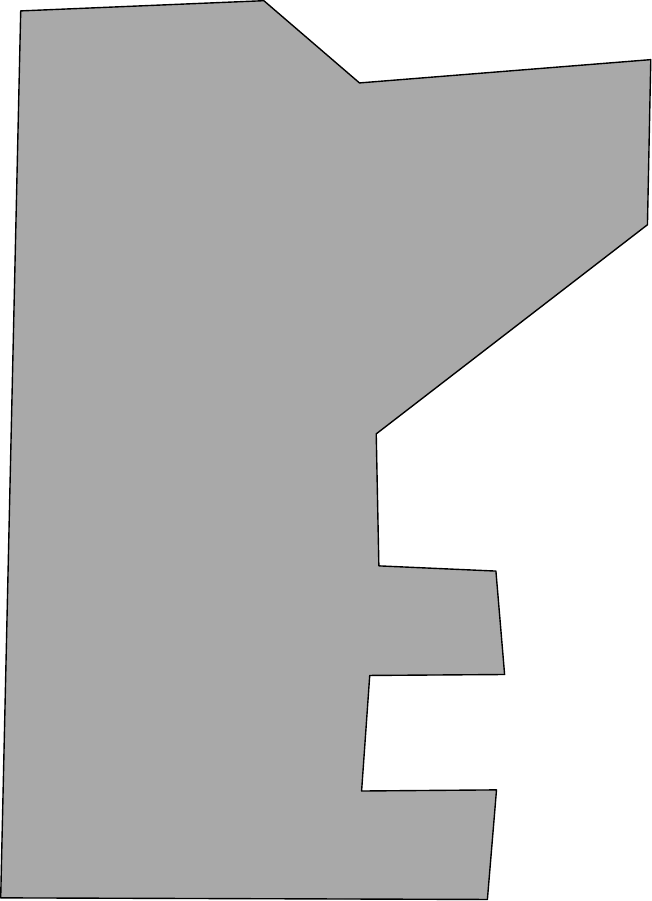}
		&
		\includegraphics[width=0.25\linewidth]{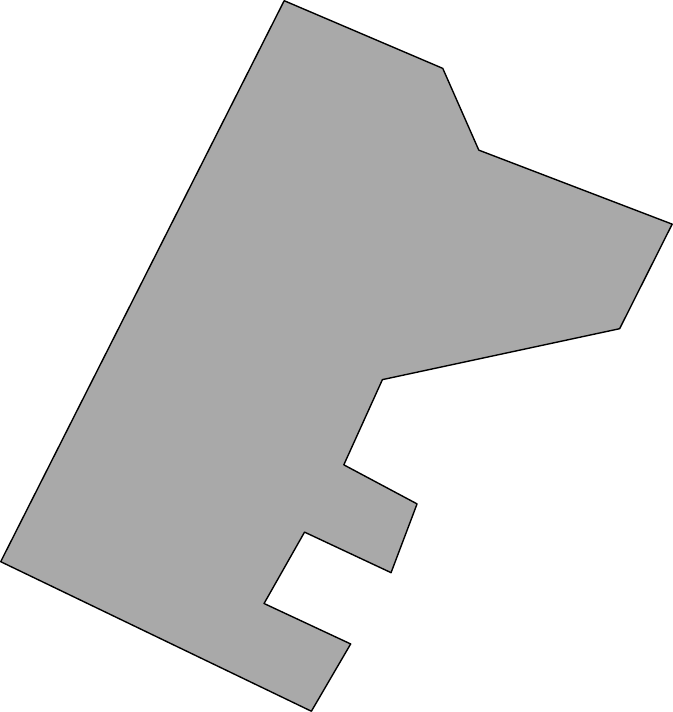}
		&
		\includegraphics[width=0.25\linewidth]{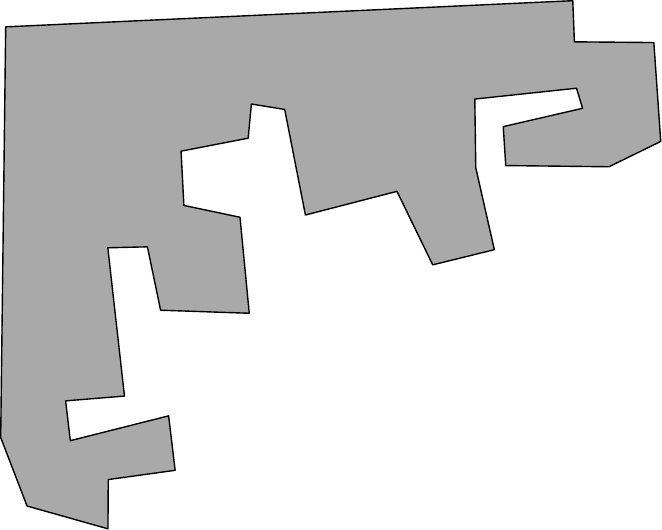}
		\\
		Bld 7 & Bld 8 & Bld 9 \\

		\includegraphics[width=0.21\linewidth]{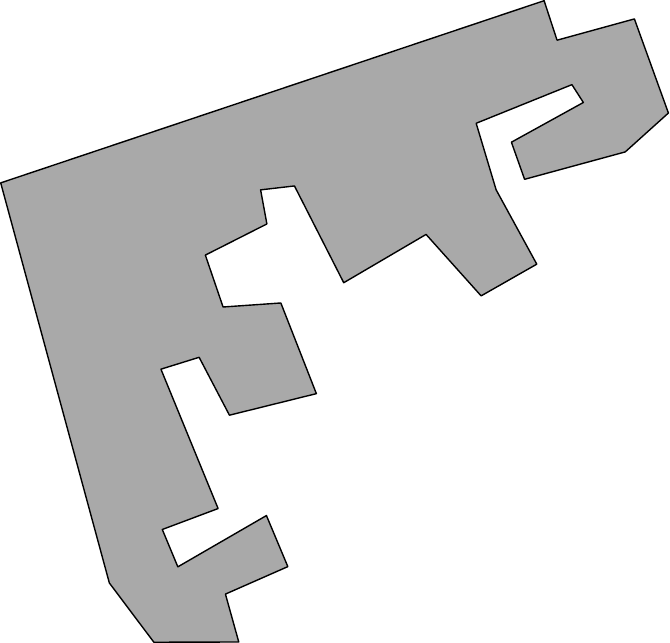}
		&
		\makebox[4em][l]{\includegraphics[width=0.45\linewidth]{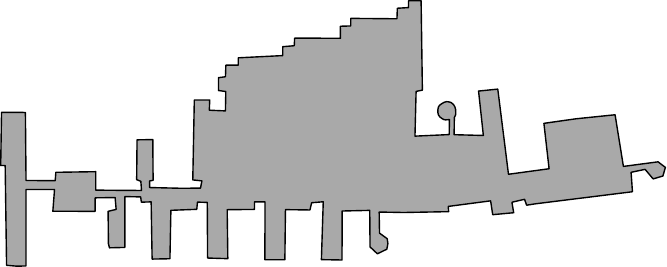}}
		\\
		Bld 10 & \multicolumn{2}{c}{Bld 11}  \\
		\end{tabular}
	\caption{The 11 building footprints used in our experiments.}
	\label{fig:buildings}
\end{figure}

\begin{figure}[p]
	\centering
	\begin{tabular}{ccc}
		\includegraphics[width=0.25\linewidth]{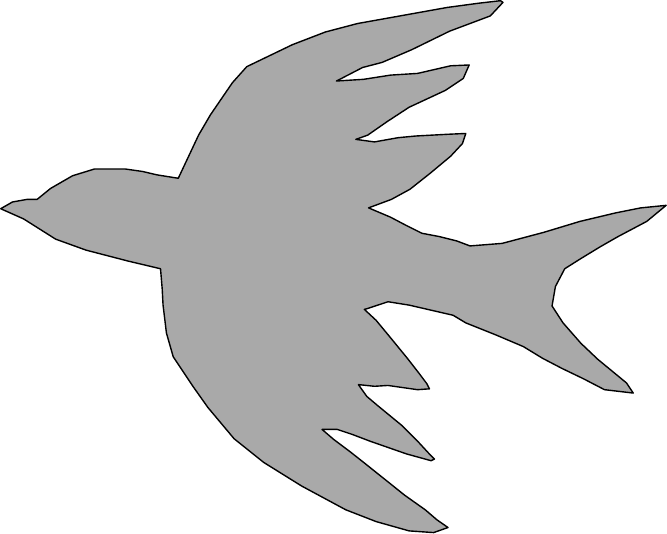}
		&
		\includegraphics[width=0.25\linewidth]{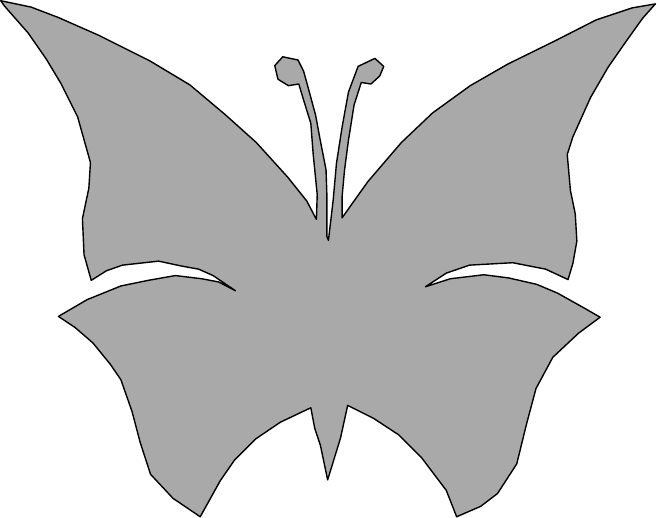}
		&
		\includegraphics[width=0.25\linewidth]{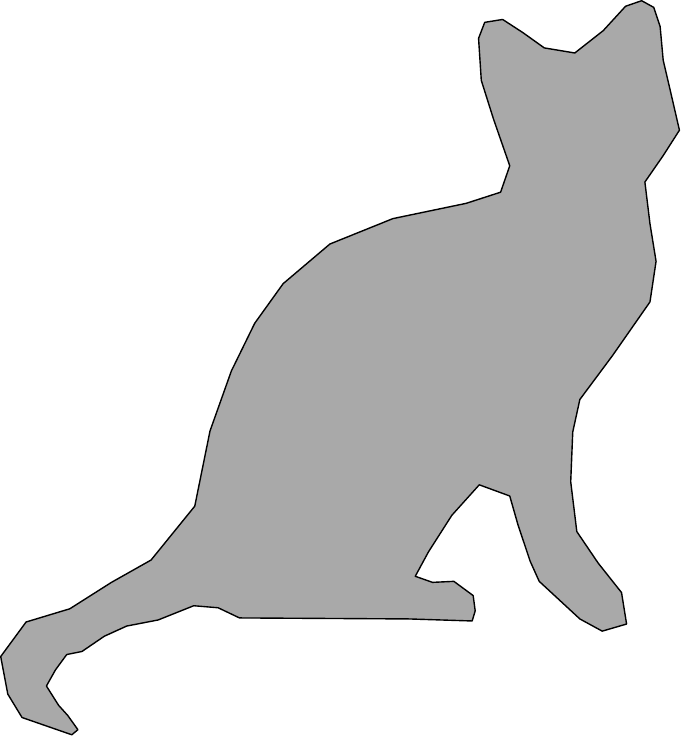}
		\\
		bird & butterfly & cat \\

		\includegraphics[width=0.25\linewidth]{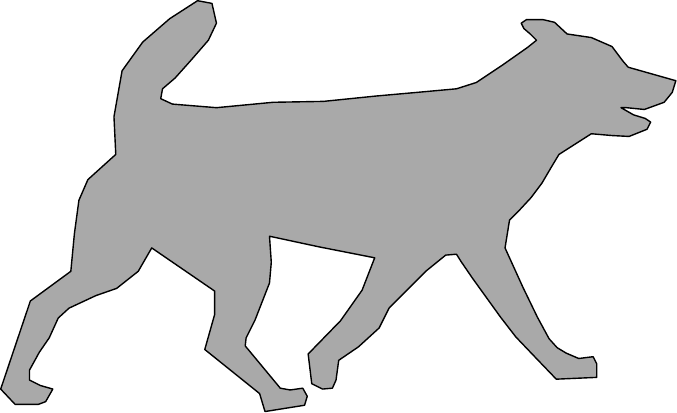}
		&
		\includegraphics[width=0.25\linewidth]{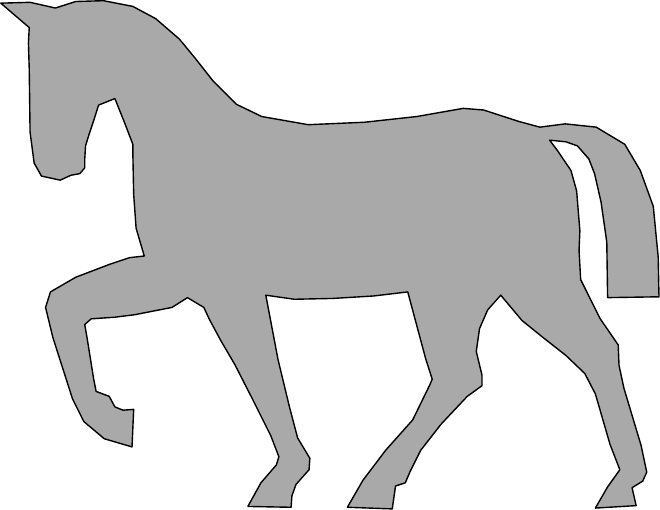}
		&
		\includegraphics[width=0.23\linewidth]{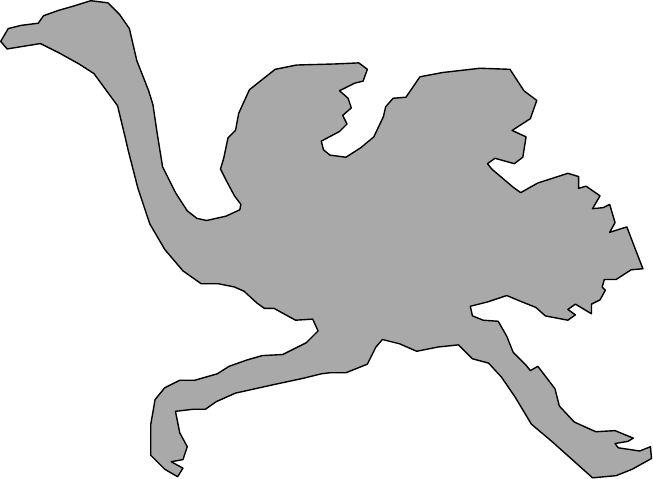}
		\\
		dog & horse & ostrich \\

		\includegraphics[width=0.25\linewidth]{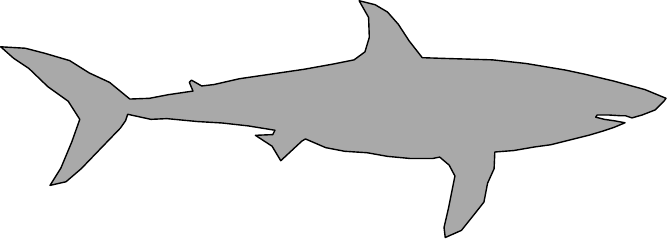}
		&
		\includegraphics[width=0.25\linewidth]{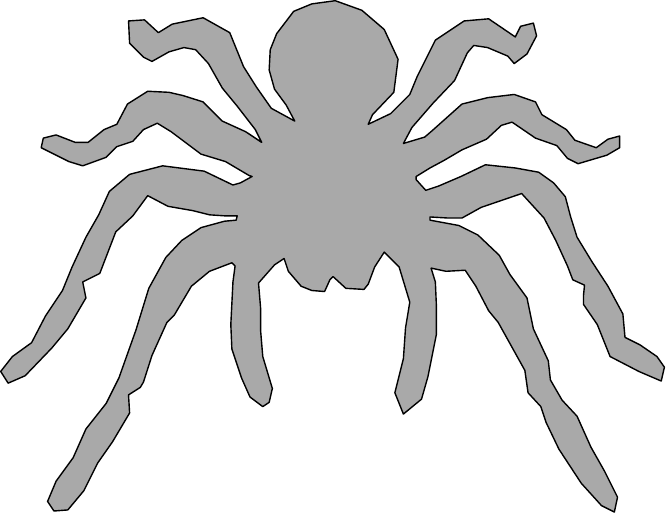}
		&
		\includegraphics[width=0.25\linewidth]{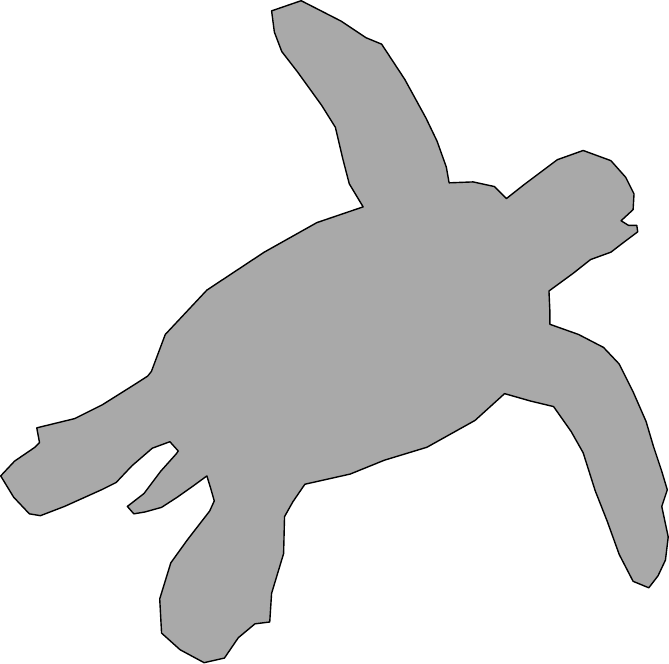}
		\\
		shark & spider & turtle \\
	\end{tabular}
	\caption{The 9 animal contours used in our experiments.}
	\label{fig:animals}
\end{figure}

\subsection{Fr\'echet analysis}
\label{app-exp-frechet}

\subparagraph{Procedure}
We use the 34 polygon described in Appendix~\ref{sec:app-inputs}.
As we may expect the grid resolution to significantly affect results, we used 20 different resolutions.
In particular, we use resolutions varying from $10\,000$ to $25$, using $(100/s)^2$ with scale $s \in \{ 1, \ldots, 20 \}$.

For each resolution-polygon combination (case), we measure its $\sqrt{2}$-narrowness (see Appendix~\ref{app-measure-narrow} for the algorithm) and derive the predicted upper bound.
Then, we run the Fr\'echet algorithm, using the 25 possible offsets in $\{0, 0.2, 0.4, 0.6, 0.8 \}^2$, and measure the precise Fr\'echet distance between input and output.
We keep track of three summary statistics for each case: the minimum (best), average (``expected'') and maximum (worst)measured Fr\'echet distance.

\subparagraph{Effect of placement}
We define the placement effect as the difference between between the maximal and minimal Fr\'echet distance.
We consider the effect significant if it is at least $2$.
These differences are listed in Table~\ref{tbl-frechet-placement}.
Almost $30\%$ of cases exhibit such a significant effect, with the animal contours being particularly affected ($35\%$ significant).
Eight cases even have a difference over $10$, with the maximal difference slightly under $55$.

If we look at the difference with the average Fr\'echet distance, we find a much less pronounced effect (see Table~\ref{tbl-frechet-placement-avg}).
Only $6.5\%$ of cases have a significant difference, most of these buildings ($9.1\%$) and the least with territorial outlines ($3.6\%$).

We conclude that a bad placement is quite likely to have a drastic effect on the algorithm's performance, but the potential gain when optimizing the placement instead of picking a random one is limited.

\subparagraph{Upper bound quality}
Let us now turn to the relation between the predicted upper bound and the actual Fr\'echet distance achieved by the algorithm.
We define the performance as the measured Fr\'echet distance expressed as a percentage of the upper bound.
We consider the algorithm's performance significantly better than the upper bound, if its performance is less than a very conservative $40\%$.

Using the best placement (Table~\ref{tbl-frechet-min-ub}), over $95\%$ of cases perform significantly better than the upper bound predicts.
The worst case of the best placements is even only only $63$ percent.

Averaging performance over placement (Table~\ref{tbl-frechet-avg-ub}), we still find such a majority (over $81\%$) to have a significantly better performance.
Interestingly, this drop is mostly due to the animal contours, of which only $63\%$ now perform significantly better.
The average performance over all cases is around 30 percent.

Only when we look at the worst performing offset for each case (Table~\ref{tbl-frechet-max-ub}), we see that a large number of cases start to fail to drop below the 40-percent significant threshold.
The average performance of these cases is around $44$ percent. 
We also note that we never achieve the actual upper bound. 
Though we can likely construct a polygon that may come close to the upper bound, this suggests that still this might not be realistic for the types of inputs considered here.

Thus, although we have a provable upper bound, we may typically expect our simple algorithm to perform significantly better than the upper bound. 
This holds even without any postprocessing to further optimize the result and when taking a random offset.

\subparagraph{Effect of resolution}
Resolution does not exhibit a clear pattern in the two analyses above. 
For placement, we may see that effects may occur not all (e.g. Australia), at high resolution (e.g. Switzerland), at noncontiguous resolution (e.g. Bld 5) or at most resolution (e.g. Bld 11).
Similar patterns can be seen for the performance with respect to the upper bound.

Nonetheless, resolution likely plays an important role in these results, but not as straightforward as either low or high resolution being more problematic.
Instead, it is likely the most problematic resolutions are those at which the $\sqrt{2}$-narrowness of the polygon jumps as a new pair of edges comes within distance $\sqrt{2}$ of each other. 
However, an in-depth investigation of this is beyond the scope of this paper.

\begin{sidewaystable}
	\setlength{\tabcolsep}{2.5pt}
	\caption{Difference between maximum and minimum Fr\'echet distance over the 25 runs for all cases. Significant differences (at least 2) are marked in a bold font.}
	\label{tbl-frechet-placement}
	\begin{tabular}{l|rrrrrrrrrrrrrrrrrrrr}
		Scale & 1 & 2 & 3 & 4 & 5 & 6 & 7 & 8 & 9 & 10 & 11 & 12 & 13 & 14 & 15 & 16 & 17 & 18 & 19 & 20 \\

		\hline
		Africa & 0.68 & 0.85 & 0.62 & 0.55 & 0.46 & 0.41 & 0.51 & 0.30 & 0.23 & 0.23 & 0.25 & 0.32 & 0.29 & 0.42 & 0.30 & 0.50 & 0.56 & 0.45 & 0.61 & 0.30 \\
		Antarctica & 1.21 & 1.15 & 1.87 & 1.87 & \textbf{2.28} & \textbf{3.41} & \textbf{2.85} & \textbf{2.45} & \textbf{2.14} & \textbf{2.00} & 1.81 & 1.62 & 1.43 & 1.27 & 1.22 & 1.22 & 1.19 & 1.05 & 1.03 & 0.90 \\
		Australia & 1.68 & 1.12 & 0.99 & 0.94 & 0.99 & 0.83 & 0.93 & 1.33 & 1.10 & 1.09 & 0.89 & 0.84 & 0.88 & 0.85 & 0.94 & 0.92 & 0.94 & 0.90 & 0.86 & 0.85 \\
		Brazil & 0.40 & 0.55 & 0.95 & 0.78 & 1.34 & 0.95 & 0.89 & 0.97 & 0.75 & 0.86 & 0.80 & 0.68 & 0.61 & 1.23 & 0.54 & 1.26 & 1.02 & 1.03 & 1.06 & 0.80 \\
		China & 0.74 & 1.04 & 1.08 & 1.04 & 1.36 & 1.27 & 1.10 & 0.87 & 0.93 & 0.69 & 0.50 & \textbf{2.06} & \textbf{2.16} & \textbf{2.14} & 1.76 & 1.92 & 1.70 & 1.56 & 1.46 & 1.36 \\
		France & 1.30 & 0.77 & 0.77 & 0.59 & 0.56 & 0.40 & 0.41 & 0.42 & 0.78 & 0.77 & 1.00 & 0.98 & 0.81 & 0.69 & 0.61 & 0.63 & 0.65 & 0.79 & 0.62 & 0.68 \\
		Greece & 1.21 & 1.41 & \textbf{2.20} & 1.60 & 1.61 & 1.31 & \textbf{2.81} & \textbf{2.37} & \textbf{2.08} & \textbf{3.50} & \textbf{2.56} & \textbf{3.26} & \textbf{3.12} & \textbf{2.79} & \textbf{2.92} & \textbf{2.96} & \textbf{2.54} & \textbf{2.49} & \textbf{2.65} & \textbf{2.03} \\
		Italy & 1.45 & 0.91 & 1.76 & \textbf{2.24} & \textbf{2.83} & \textbf{2.52} & \textbf{2.71} & \textbf{2.50} & \textbf{2.30} & \textbf{2.15} & \textbf{3.26} & \textbf{3.16} & \textbf{3.14} & \textbf{3.02} & \textbf{3.05} & \textbf{3.67} & \textbf{3.05} & \textbf{2.19} & \textbf{2.57} & \textbf{2.09} \\
		Switzerland & \textbf{3.02} & \textbf{4.47} & \textbf{2.71} & \textbf{2.23} & 1.52 & 1.61 & 1.38 & 1.77 & 1.34 & 1.21 & 1.25 & 1.16 & 1.10 & 1.11 & 0.92 & 0.72 & 0.71 & 0.74 & 0.61 & 0.48 \\
		Great Britain & \textbf{3.40} & \textbf{2.18} & \textbf{2.23} & 1.72 & 1.75 & 1.77 & 1.58 & 1.34 & 1.41 & 1.60 & 1.43 & 1.68 & 1.60 & \textbf{2.72} & \textbf{4.00} & \textbf{3.33} & \textbf{3.25} & \textbf{3.09} & \textbf{2.92} & \textbf{2.66} \\
		Vietnam & 1.66 & 1.27 & 0.72 & \textbf{17.49} & \textbf{15.15} & \textbf{4.63} & \textbf{2.68} & \textbf{2.44} & 1.49 & 1.49 & 0.81 & 0.78 & 0.95 & 0.82 & 0.78 & 0.78 & 0.85 & 0.93 & 0.88 & 0.91 \\
		Limburg & 1.70 & \textbf{2.24} & 1.37 & 1.80 & \textbf{3.42} & \textbf{4.95} & \textbf{5.77} & \textbf{5.06} & \textbf{4.56} & \textbf{4.53} & \textbf{4.21} & \textbf{2.92} & \textbf{3.58} & \textbf{3.29} & \textbf{2.97} & \textbf{2.73} & \textbf{2.64} & \textbf{2.46} & \textbf{3.18} & \textbf{2.70} \\
		Noord Brabant & \textbf{3.90} & 1.99 & 1.23 & 1.03 & 0.89 & 0.69 & 0.78 & 1.33 & 1.20 & 1.07 & 0.92 & 0.82 & 0.96 & 0.84 & 0.69 & 0.76 & 0.62 & 0.69 & 0.57 & 0.61 \\
		Lang.-Rouss. & 0.52 & 0.18 & 0.23 & 0.65 & 0.58 & 0.89 & 0.74 & 0.68 & 0.51 & 0.54 & 0.63 & 0.58 & 0.73 & 0.81 & 0.39 & \textbf{3.41} & \textbf{3.05} & \textbf{2.70} & \textbf{2.85} & \textbf{2.59} \\

		\hline
		Bld 1 & \textbf{54.52} & \textbf{26.57} & \textbf{14.30} & \textbf{7.84} & \textbf{10.93} & \textbf{9.93} & \textbf{4.22} & \textbf{6.72} & \textbf{2.67} & 1.70 & 1.60 & 1.44 & 1.49 & 1.28 & 1.28 & 1.32 & 1.28 & 1.28 & 1.60 & 1.77 \\
		Bld 2 & 0.06 & 0.09 & 1.00 & 0.80 & 0.67 & 0.82 & 0.55 & 0.50 & 0.46 & 0.39 & 0.34 & 0.51 & 0.49 & \textbf{2.49} & \textbf{2.25} & \textbf{2.08} & 1.90 & 1.93 & \textbf{2.72} & \textbf{2.52} \\
		Bld 3 & 0.53 & 0.63 & 0.59 & 0.44 & 0.45 & 0.48 & 0.41 & 0.43 & 0.92 & 0.73 & 1.23 & 1.08 & 0.93 & 0.88 & 1.02 & 0.99 & 0.83 & 1.74 & 1.58 & 1.48 \\
		Bld 4 & 0.05 & 0.05 & 0.06 & 0.09 & 0.07 & 0.10 & 0.12 & 0.17 & 0.22 & 0.18 & 0.22 & 0.14 & 0.14 & 0.26 & 0.17 & 0.31 & 0.20 & 0.17 & 0.12 & 0.23 \\
		Bld 5 & 0.32 & 0.35 & \textbf{2.92} & 1.97 & \textbf{3.82} & \textbf{3.63} & \textbf{2.82} & \textbf{3.33} & 1.88 & \textbf{3.02} & 1.67 & 1.41 & 1.52 & \textbf{2.06} & 1.87 & 1.72 & 0.95 & 1.02 & 0.98 & 1.19 \\
		Bld 6 & 0.62 & 1.75 & \textbf{3.14} & 1.97 & \textbf{2.05} & 1.67 & 1.41 & 0.97 & 0.80 & 0.67 & 0.98 & 0.73 & 0.66 & 0.76 & 0.66 & 0.78 & 0.63 & 0.62 & 0.45 & 0.52 \\
		Bld 7 & 0.02 & 0.07 & 0.03 & 0.05 & 0.12 & 0.09 & 0.08 & 0.12 & 0.11 & 0.19 & 0.15 & 0.19 & 0.25 & 0.22 & 0.66 & 0.98 & 0.90 & 0.84 & 0.90 & 0.80 \\
		Bld 8 & 0.14 & 0.16 & 0.14 & 1.04 & \textbf{2.32} & \textbf{2.03} & 1.60 & 1.70 & 1.78 & 1.25 & 1.14 & 1.23 & 1.24 & 0.88 & 1.04 & 1.02 & \textbf{3.82} & \textbf{3.54} & \textbf{3.37} & \textbf{3.38} \\
		Bld 9 & 0.20 & 0.27 & 0.23 & 0.23 & 0.21 & 0.22 & 0.26 & 0.24 & 1.01 & 1.05 & 0.80 & 0.81 & 0.66 & 0.66 & 0.53 & 0.62 & 0.60 & 0.46 & 0.47 & 0.48 \\
		Bld 10 & 0.21 & 0.24 & 1.24 & \textbf{2.82} & 1.94 & 1.46 & 1.11 & 0.97 & 1.34 & 1.13 & 1.07 & 1.80 & \textbf{4.31} & \textbf{4.19} & \textbf{3.87} & \textbf{3.66} & \textbf{3.24} & \textbf{3.01} & \textbf{2.69} & \textbf{2.80} \\
		Bld 11 & 0.10 & \textbf{18.31} & \textbf{12.58} & \textbf{7.51} & \textbf{5.95} & \textbf{5.11} & \textbf{4.54} & \textbf{4.74} & \textbf{3.98} & \textbf{4.06} & \textbf{3.54} & \textbf{3.38} & \textbf{3.44} & \textbf{2.35} & \textbf{2.36} & \textbf{2.55} & \textbf{2.67} & \textbf{2.02} & \textbf{2.31} & \textbf{2.20} \\

		\hline
		bird & 1.86 & \textbf{2.54} & \textbf{2.51} & 1.13 & 1.46 & 1.47 & 1.83 & 1.37 & 0.97 & 1.21 & 1.22 & 1.04 & 0.78 & \textbf{2.15} & \textbf{2.09} & \textbf{2.22} & \textbf{2.05} & 1.78 & 1.67 & 1.40 \\
		butterfly & \textbf{8.69} & 1.58 & \textbf{2.59} & \textbf{2.23} & 1.81 & 1.47 & 1.45 & 1.03 & 0.87 & 0.83 & 0.75 & 0.75 & 0.59 & 0.65 & 0.72 & 0.82 & \textbf{2.31} & \textbf{2.15} & 0.51 & 1.69 \\
		cat & 0.18 & 0.49 & 0.97 & 1.21 & \textbf{3.39} & \textbf{3.20} & \textbf{2.73} & \textbf{2.55} & 1.86 & 1.62 & 1.73 & 1.48 & 1.50 & 1.58 & 1.64 & 1.21 & 1.41 & 1.43 & 1.34 & 1.30 \\
		dog & 1.58 & \textbf{12.12} & \textbf{7.66} & \textbf{5.12} & \textbf{3.30} & \textbf{2.58} & 1.54 & 1.70 & 1.27 & 1.53 & 1.44 & 1.06 & 1.37 & 1.25 & 1.19 & 1.13 & 0.88 & 1.17 & 1.09 & 1.02 \\
		horse & 0.78 & 0.81 & \textbf{5.96} & \textbf{3.93} & \textbf{2.55} & 1.83 & 1.81 & 1.34 & 1.23 & 1.65 & 1.26 & 1.39 & 0.72 & 0.59 & 1.09 & 1.12 & 1.26 & 1.22 & 1.17 & 0.98 \\
		ostrich & 0.88 & 0.65 & 0.71 & \textbf{6.94} & \textbf{5.21} & \textbf{2.79} & \textbf{2.22} & 1.99 & 1.20 & 1.30 & 1.27 & 1.60 & 1.77 & 1.83 & 1.37 & 1.50 & 1.46 & 1.38 & 1.20 & 1.36 \\
		shark & 1.27 & 1.42 & 0.67 & 1.20 & \textbf{6.79} & \textbf{6.47} & \textbf{5.58} & \textbf{5.05} & \textbf{5.07} & \textbf{5.09} & \textbf{4.57} & \textbf{4.04} & \textbf{4.01} & \textbf{4.09} & \textbf{3.52} & \textbf{3.88} & \textbf{3.09} & \textbf{3.22} & \textbf{3.18} & \textbf{3.21} \\
		spider & \textbf{2.83} & \textbf{2.44} & \textbf{5.59} & \textbf{3.31} & \textbf{4.41} & \textbf{2.06} & \textbf{2.08} & \textbf{3.72} & \textbf{2.68} & \textbf{2.38} & \textbf{2.05} & \textbf{2.03} & \textbf{2.40} & 1.60 & 1.27 & \textbf{2.10} & 1.03 & 1.53 & 1.61 & 1.56 \\
		turtle & 0.39 & 0.42 & \textbf{2.06} & 1.55 & 1.15 & 0.65 & 0.68 & 1.87 & \textbf{2.80} & \textbf{2.59} & \textbf{2.51} & \textbf{2.33} & \textbf{2.07} & 1.55 & 1.31 & 1.51 & 1.20 & 1.29 & 1.29 & 1.00 \\

		\hline

	\end{tabular}
\end{sidewaystable}

\begin{sidewaystable}
	\setlength{\tabcolsep}{2.5pt}
	\caption{Difference between average and minimum Fr\'echet distance over the 25 runs for all cases. Significant differences (at least 2) are marked in a bold font.}
	\label{tbl-frechet-placement-avg}
	\begin{tabular}{l|rrrrrrrrrrrrrrrrrrrr}
		Scale & 1 & 2 & 3 & 4 & 5 & 6 & 7 & 8 & 9 & 10 & 11 & 12 & 13 & 14 & 15 & 16 & 17 & 18 & 19 & 20 \\

		\hline
		Bld 1 & \textbf{4.83} & \textbf{20.33} & \textbf{12.42} & \textbf{6.55} & \textbf{7.49} & \textbf{8.00} & \textbf{3.07} & \textbf{5.65} & 1.87 & 0.95 & 0.78 & 0.73 & 0.76 & 0.59 & 0.63 & 0.72 & 0.67 & 0.66 & 0.93 & 1.05 \\
		Bld 2 & 0.03 & 0.06 & 0.27 & 0.39 & 0.36 & 0.33 & 0.23 & 0.27 & 0.19 & 0.14 & 0.16 & 0.22 & 0.18 & 0.39 & 0.34 & 0.38 & 0.31 & 0.39 & 0.69 & 0.57 \\
		Bld 3 & 0.09 & 0.10 & 0.09 & 0.12 & 0.16 & 0.17 & 0.17 & 0.16 & 0.17 & 0.22 & 0.35 & 0.31 & 0.29 & 0.26 & 0.36 & 0.37 & 0.28 & 0.29 & 0.35 & 0.29 \\
		Bld 4 & 0.03 & 0.03 & 0.03 & 0.05 & 0.04 & 0.05 & 0.05 & 0.08 & 0.06 & 0.06 & 0.07 & 0.08 & 0.05 & 0.08 & 0.09 & 0.10 & 0.08 & 0.08 & 0.06 & 0.08 \\
		Bld 5 & 0.07 & 0.08 & 1.37 & 1.28 & 1.24 & 1.60 & \textbf{2.04} & 1.66 & 0.61 & 1.83 & 0.85 & 0.67 & 0.57 & 1.27 & 1.23 & 1.16 & 0.52 & 0.52 & 0.48 & 0.58 \\
		Bld 6 & 0.12 & 0.53 & 1.47 & 1.34 & 1.08 & 0.89 & 0.73 & 0.50 & 0.39 & 0.36 & 0.52 & 0.36 & 0.31 & 0.35 & 0.30 & 0.37 & 0.33 & 0.37 & 0.20 & 0.27 \\
		Bld 7 & 0.01 & 0.02 & 0.02 & 0.03 & 0.05 & 0.06 & 0.04 & 0.08 & 0.08 & 0.14 & 0.09 & 0.09 & 0.13 & 0.13 & 0.14 & 0.28 & 0.24 & 0.26 & 0.36 & 0.33 \\
		Bld 8 & 0.05 & 0.05 & 0.06 & 0.07 & 0.08 & 0.08 & 0.09 & 0.09 & 0.16 & 0.23 & 0.23 & 0.25 & 0.18 & 0.25 & 0.19 & 0.23 & 0.24 & 0.17 & 0.19 & 0.18 \\
		Bld 9 & 0.06 & 0.05 & 0.05 & 0.16 & 0.98 & 1.08 & 0.90 & 0.78 & 0.87 & 0.56 & 0.65 & 0.71 & 0.58 & 0.30 & 0.27 & 0.48 & 0.71 & 0.56 & 0.63 & 0.68 \\
		Bld 10 & 0.06 & 0.06 & 0.17 & 1.31 & 1.14 & 0.83 & 0.55 & 0.48 & 0.65 & 0.53 & 0.51 & 0.69 & 0.63 & 0.91 & 0.77 & 0.86 & 0.68 & 0.62 & 0.52 & 0.84 \\
		Bld 11 & 0.06 & \textbf{5.10} & \textbf{8.74} & \textbf{3.18} & \textbf{2.88} & \textbf{3.07} & \textbf{2.83} & \textbf{3.08} & \textbf{2.78} & \textbf{3.05} & \textbf{2.49} & \textbf{2.10} & 1.93 & 1.43 & 1.40 & 1.51 & 1.90 & 1.48 & 1.37 & 1.22 \\
		 
		\hline
		bird & 0.62 & 1.00 & 0.98 & 0.60 & 0.64 & 0.65 & 0.99 & 0.60 & 0.43 & 0.60 & 0.62 & 0.52 & 0.34 & 0.69 & 0.68 & 0.72 & 0.63 & 0.60 & 0.61 & 0.53 \\
		butterfly & 1.89 & 1.09 & 1.36 & 1.28 & 1.15 & 0.91 & 0.97 & 0.60 & 0.56 & 0.46 & 0.43 & 0.41 & 0.31 & 0.34 & 0.41 & 0.43 & 0.50 & 0.48 & 0.24 & 0.47 \\
		cat & 0.02 & 0.08 & 0.19 & 0.28 & 0.73 & 1.56 & 1.36 & 1.48 & 0.87 & 0.80 & 0.83 & 0.74 & 0.89 & 0.78 & 0.83 & 0.56 & 0.63 & 0.61 & 0.72 & 0.65 \\
		dog & 0.50 & \textbf{3.78} & \textbf{3.98} & 1.89 & 1.67 & 1.49 & 0.68 & 0.85 & 0.68 & 0.94 & 0.69 & 0.50 & 0.89 & 0.67 & 0.67 & 0.61 & 0.42 & 0.55 & 0.51 & 0.48 \\
		horse & 0.23 & 0.23 & 1.76 & \textbf{2.53} & 1.39 & 1.06 & 1.03 & 0.54 & 0.57 & 1.04 & 0.62 & 0.62 & 0.43 & 0.27 & 0.37 & 0.44 & 0.53 & 0.47 & 0.51 & 0.50 \\
		ostrich & 0.40 & 0.32 & 0.32 & \textbf{3.04} & \textbf{2.88} & 1.42 & 1.51 & 1.44 & 0.73 & 0.86 & 0.75 & 1.08 & 1.14 & 1.13 & 0.74 & 0.87 & 0.85 & 0.69 & 0.56 & 0.52 \\
		shark & 0.50 & 0.85 & 0.31 & 0.48 & 1.05 & \textbf{2.07} & \textbf{2.29} & \textbf{2.27} & \textbf{2.72} & \textbf{2.58} & \textbf{2.62} & 1.79 & 1.75 & \textbf{2.04} & 1.82 & \textbf{2.04} & 1.69 & 1.53 & 1.88 & 1.69 \\
		spider & 1.30 & 1.45 & \textbf{3.06} & 1.58 & 1.84 & 0.66 & 0.72 & 1.93 & 1.40 & 0.96 & 0.62 & 0.92 & 1.30 & 0.77 & 0.73 & 0.96 & 0.52 & 0.61 & 0.68 & 0.58 \\
		turtle & 0.08 & 0.12 & 0.47 & 0.68 & 0.40 & 0.35 & 0.31 & 0.40 & 1.05 & 1.20 & 1.45 & 1.43 & 1.15 & 0.90 & 0.65 & 0.85 & 0.70 & 0.76 & 0.81 & 0.60 \\

		\hline
		Africa & 0.44 & 0.41 & 0.23 & 0.19 & 0.11 & 0.11 & 0.12 & 0.09 & 0.06 & 0.07 & 0.08 & 0.08 & 0.07 & 0.13 & 0.07 & 0.10 & 0.13 & 0.11 & 0.18 & 0.09 \\
		Antarctica & 0.41 & 0.44 & 0.71 & 0.82 & 0.82 & 1.10 & 0.91 & 0.88 & 0.89 & 0.93 & 0.98 & 0.87 & 0.68 & 0.55 & 0.57 & 0.55 & 0.57 & 0.47 & 0.47 & 0.39 \\
		Australia & 0.52 & 0.53 & 0.36 & 0.35 & 0.35 & 0.30 & 0.29 & 0.34 & 0.29 & 0.32 & 0.24 & 0.25 & 0.23 & 0.28 & 0.35 & 0.30 & 0.31 & 0.34 & 0.31 & 0.29 \\
		Brazil & 0.24 & 0.19 & 0.23 & 0.29 & 0.35 & 0.32 & 0.36 & 0.30 & 0.28 & 0.27 & 0.23 & 0.20 & 0.25 & 0.29 & 0.26 & 0.33 & 0.22 & 0.22 & 0.27 & 0.19 \\
		China & 0.35 & 0.36 & 0.47 & 0.44 & 0.39 & 0.39 & 0.32 & 0.28 & 0.31 & 0.18 & 0.22 & 0.24 & 0.27 & 0.43 & 0.37 & 0.42 & 0.41 & 0.46 & 0.42 & 0.35 \\
		France & 0.47 & 0.35 & 0.31 & 0.21 & 0.22 & 0.17 & 0.16 & 0.13 & 0.17 & 0.19 & 0.24 & 0.30 & 0.23 & 0.24 & 0.18 & 0.20 & 0.21 & 0.20 & 0.19 & 0.26 \\
		Greece & 0.57 & 0.76 & 0.93 & 0.71 & 0.72 & 0.69 & \textbf{2.21} & 0.73 & 0.68 & 1.38 & 0.80 & 1.02 & 1.00 & 0.87 & 0.99 & 1.13 & 1.07 & 1.27 & 1.35 & 0.93 \\
		Italy & 0.85 & 0.43 & 0.53 & 0.74 & 1.16 & 1.22 & 1.10 & 1.04 & 1.06 & 0.97 & 1.06 & 1.25 & 1.39 & 1.56 & 1.63 & 1.88 & 1.47 & 0.89 & 1.40 & 1.07 \\
		Switzerland & 1.54 & 1.05 & 0.87 & 1.11 & 0.71 & 0.73 & 0.61 & 0.74 & 0.52 & 0.53 & 0.53 & 0.56 & 0.55 & 0.50 & 0.45 & 0.35 & 0.39 & 0.42 & 0.34 & 0.23 \\
		Great Britain & 1.24 & 1.33 & 1.05 & 0.77 & 0.82 & 0.59 & 0.71 & 0.62 & 0.62 & 0.84 & 0.65 & 0.59 & 0.70 & 0.55 & 0.97 & 0.68 & 0.64 & 0.93 & 0.98 & 1.05 \\
		Vietnam & 0.65 & 0.77 & 0.41 & \textbf{7.00} & \textbf{11.86} & \textbf{3.34} & 1.77 & 1.71 & 0.90 & 0.95 & 0.38 & 0.37 & 0.45 & 0.43 & 0.40 & 0.39 & 0.43 & 0.43 & 0.48 & 0.56 \\
		Limburg & 0.70 & 0.88 & 0.60 & 0.59 & 0.77 & 1.73 & 1.98 & \textbf{2.12} & \textbf{2.24} & \textbf{2.38} & \textbf{2.34} & 1.44 & \textbf{2.18} & \textbf{2.05} & 1.72 & 1.63 & 1.37 & 1.40 & 1.17 & 1.18 \\
		Noord Brabant & 0.48 & 0.83 & 0.54 & 0.46 & 0.35 & 0.32 & 0.33 & 0.28 & 0.28 & 0.30 & 0.26 & 0.24 & 0.31 & 0.32 & 0.25 & 0.24 & 0.21 & 0.20 & 0.19 & 0.18 \\
		Lang.-Rouss. & 0.36 & 0.05 & 0.05 & 0.11 & 0.21 & 0.26 & 0.30 & 0.24 & 0.21 & 0.21 & 0.21 & 0.20 & 0.16 & 0.20 & 0.13 & 0.41 & 0.72 & 0.55 & 0.92 & 1.08 \\

		\hline

	\end{tabular}
\end{sidewaystable}

\begin{sidewaystable}
	\setlength{\tabcolsep}{2.5pt}
	\caption{The minimum Fr\'echet distance over the 25 runs for all cases, as a percentage of the predicted upper bound. Significantly better performance (less than 40 percent) are marked in a bold font.}
	\label{tbl-frechet-min-ub}
	\begin{tabular}{l|rrrrrrrrrrrrrrrrrrrr}
		Scale & 1 & 2 & 3 & 4 & 5 & 6 & 7 & 8 & 9 & 10 & 11 & 12 & 13 & 14 & 15 & 16 & 17 & 18 & 19 & 20 \\
		\hline
		Africa & 45.70 & \textbf{18.50} & \textbf{16.71} & \textbf{17.97} & \textbf{20.10} & \textbf{22.21} & \textbf{23.02} & \textbf{23.64} & \textbf{25.32} & \textbf{25.81} & \textbf{26.07} & \textbf{25.59} & \textbf{26.34} & \textbf{24.12} & \textbf{27.17} & \textbf{27.02} & \textbf{26.06} & \textbf{26.14} & \textbf{24.33} & \textbf{25.89} \\
		Antarctica & \textbf{30.80} & \textbf{24.96} & \textbf{20.44} & \textbf{15.56} & \textbf{13.59} & \textbf{9.69} & \textbf{10.58} & \textbf{11.41} & \textbf{12.02} & \textbf{12.53} & \textbf{13.21} & \textbf{14.84} & \textbf{15.58} & \textbf{15.88} & \textbf{15.61} & \textbf{16.49} & \textbf{14.47} & \textbf{14.19} & \textbf{14.67} & \textbf{15.35} \\
		Australia & \textbf{31.53} & \textbf{20.78} & \textbf{25.06} & \textbf{22.76} & \textbf{19.27} & \textbf{20.50} & \textbf{20.92} & \textbf{21.43} & \textbf{20.40} & \textbf{20.99} & \textbf{21.32} & \textbf{20.09} & \textbf{20.62} & \textbf{20.84} & \textbf{17.99} & \textbf{21.55} & \textbf{21.73} & \textbf{20.74} & \textbf{22.34} & \textbf{24.07} \\
		Brazil & \textbf{29.61} & \textbf{27.71} & \textbf{20.08} & \textbf{17.71} & \textbf{19.07} & \textbf{19.16} & \textbf{16.81} & \textbf{18.42} & \textbf{13.52} & \textbf{13.23} & \textbf{13.65} & \textbf{14.55} & \textbf{14.23} & \textbf{14.00} & \textbf{13.81} & \textbf{14.23} & \textbf{15.94} & \textbf{15.94} & \textbf{15.62} & \textbf{17.64} \\
		China & \textbf{28.47} & \textbf{10.05} & \textbf{11.85} & \textbf{13.96} & \textbf{15.70} & \textbf{16.80} & \textbf{20.40} & \textbf{22.98} & \textbf{20.50} & \textbf{21.77} & \textbf{10.67} & \textbf{9.41} & \textbf{10.01} & \textbf{10.20} & \textbf{10.74} & \textbf{10.83} & \textbf{11.18} & \textbf{11.60} & \textbf{11.02} & \textbf{11.53} \\
		France & \textbf{23.10} & \textbf{24.88} & \textbf{24.62} & \textbf{28.84} & \textbf{25.93} & \textbf{27.99} & \textbf{24.47} & \textbf{23.70} & \textbf{18.30} & \textbf{19.31} & \textbf{18.83} & \textbf{20.02} & \textbf{19.73} & \textbf{19.71} & \textbf{21.20} & \textbf{20.87} & \textbf{20.52} & \textbf{23.00} & \textbf{20.80} & \textbf{18.71} \\
		Greece & \textbf{22.16} & \textbf{19.86} & \textbf{19.47} & \textbf{19.26} & \textbf{15.48} & \textbf{12.97} & \textbf{5.36} & \textbf{12.10} & \textbf{11.67} & \textbf{7.22} & \textbf{10.66} & \textbf{11.55} & \textbf{10.60} & \textbf{12.07} & \textbf{10.34} & \textbf{9.26} & \textbf{10.25} & \textbf{9.32} & \textbf{8.04} & \textbf{11.64} \\
		Italy & \textbf{32.11} & \textbf{25.22} & \textbf{14.00} & \textbf{10.48} & \textbf{13.90} & \textbf{10.72} & \textbf{12.23} & \textbf{13.42} & \textbf{6.78} & \textbf{6.75} & \textbf{6.70} & \textbf{7.04} & \textbf{8.19} & \textbf{7.54} & \textbf{7.82} & \textbf{7.95} & \textbf{13.58} & \textbf{21.77} & \textbf{14.79} & \textbf{20.94} \\
		Switzerland & \textbf{17.64} & \textbf{10.54} & \textbf{13.58} & \textbf{16.38} & \textbf{19.77} & \textbf{19.29} & \textbf{21.22} & \textbf{16.70} & \textbf{23.29} & \textbf{23.40} & \textbf{17.49} & \textbf{18.45} & \textbf{19.17} & \textbf{15.87} & \textbf{14.77} & \textbf{17.11} & \textbf{14.61} & \textbf{15.10} & \textbf{15.20} & \textbf{18.06} \\
		Great Britain & \textbf{26.95} & \textbf{22.09} & \textbf{20.56} & \textbf{19.63} & \textbf{17.07} & \textbf{5.04} & \textbf{4.64} & \textbf{4.91} & \textbf{5.68} & \textbf{4.32} & \textbf{4.83} & \textbf{5.89} & \textbf{5.30} & \textbf{6.91} & \textbf{4.88} & \textbf{6.95} & \textbf{6.89} & \textbf{7.02} & \textbf{7.15} & \textbf{7.79} \\
		Vietnam & 62.52 & 42.62 & \textbf{26.46} & \textbf{2.55} & \textbf{3.11} & \textbf{37.15} & 43.46 & 42.91 & 47.52 & 46.00 & 50.08 & 49.39 & 47.77 & 47.35 & 47.21 & 46.50 & 45.93 & 44.89 & 43.28 & 41.97 \\
		Limburg & \textbf{10.81} & \textbf{16.64} & \textbf{18.34} & \textbf{12.73} & \textbf{10.60} & \textbf{4.90} & \textbf{6.15} & \textbf{10.72} & \textbf{9.15} & \textbf{6.67} & \textbf{6.27} & \textbf{13.19} & \textbf{7.46} & \textbf{6.67} & \textbf{8.55} & \textbf{9.92} & \textbf{7.77} & \textbf{8.52} & \textbf{8.85} & \textbf{12.42} \\
		Noord Brabant & \textbf{11.14} & \textbf{17.78} & \textbf{20.39} & \textbf{22.47} & \textbf{14.99} & \textbf{16.74} & \textbf{17.00} & \textbf{19.21} & \textbf{19.07} & \textbf{21.49} & \textbf{21.32} & \textbf{22.14} & \textbf{20.98} & \textbf{21.86} & \textbf{23.18} & \textbf{18.79} & \textbf{16.69} & \textbf{12.68} & \textbf{12.37} & \textbf{12.83} \\
		Lang.-Rouss. & \textbf{24.91} & \textbf{25.00} & \textbf{23.00} & \textbf{21.20} & \textbf{19.04} & \textbf{17.69} & \textbf{17.92} & \textbf{19.51} & \textbf{21.33} & \textbf{21.75} & \textbf{21.15} & \textbf{21.17} & \textbf{23.67} & \textbf{6.65} & \textbf{7.09} & \textbf{6.40} & \textbf{7.98} & \textbf{7.46} & \textbf{8.08} & \textbf{8.27} \\

		\hline
		Bld 1 & \textbf{0.88} & \textbf{2.40} & \textbf{13.50} & \textbf{24.60} & \textbf{19.77} & \textbf{17.16} & \textbf{39.96} & \textbf{21.18} & 43.37 & 48.94 & 48.83 & 48.23 & 47.21 & 48.34 & 46.54 & 43.73 & 43.57 & 43.07 & \textbf{36.69} & \textbf{34.09} \\
		Bld 2 & \textbf{37.22} & \textbf{22.28} & \textbf{20.30} & \textbf{18.80} & \textbf{20.21} & \textbf{21.86} & \textbf{22.41} & \textbf{7.04} & \textbf{8.12} & \textbf{9.50} & \textbf{9.27} & \textbf{4.87} & \textbf{5.29} & \textbf{5.18} & \textbf{5.96} & \textbf{5.98} & \textbf{6.13} & \textbf{7.31} & \textbf{5.57} & \textbf{5.24} \\
		Bld 3 & \textbf{29.90} & \textbf{31.05} & \textbf{33.52} & \textbf{26.17} & \textbf{28.13} & \textbf{27.70} & \textbf{14.97} & \textbf{16.83} & \textbf{18.65} & \textbf{15.61} & \textbf{15.89} & \textbf{17.59} & \textbf{16.36} & \textbf{18.38} & \textbf{11.76} & \textbf{10.36} & \textbf{12.57} & \textbf{13.89} & \textbf{13.20} & \textbf{14.34} \\
		Bld 4 & \textbf{39.74} & \textbf{39.11} & \textbf{37.41} & \textbf{35.02} & \textbf{35.04} & \textbf{34.43} & \textbf{34.20} & \textbf{33.30} & \textbf{34.20} & \textbf{35.44} & \textbf{34.38} & \textbf{33.38} & \textbf{35.26} & \textbf{33.75} & \textbf{33.17} & \textbf{32.03} & \textbf{32.71} & \textbf{32.06} & \textbf{32.30} & \textbf{28.76} \\
		Bld 5 & \textbf{31.99} & \textbf{10.10} & \textbf{4.12} & \textbf{5.73} & \textbf{13.53} & \textbf{6.39} & \textbf{8.83} & \textbf{11.42} & \textbf{25.33} & \textbf{8.67} & \textbf{24.37} & \textbf{19.09} & \textbf{18.10} & \textbf{8.02} & \textbf{8.45} & \textbf{8.77} & \textbf{23.40} & \textbf{21.13} & \textbf{20.96} & \textbf{19.69} \\
		Bld 6 & \textbf{23.16} & \textbf{13.92} & \textbf{9.46} & \textbf{16.91} & \textbf{17.86} & \textbf{22.36} & \textbf{26.70} & \textbf{30.36} & \textbf{31.32} & \textbf{32.04} & \textbf{24.19} & \textbf{26.54} & \textbf{27.68} & \textbf{23.25} & \textbf{17.14} & \textbf{15.65} & \textbf{16.79} & \textbf{14.56} & \textbf{20.55} & \textbf{17.06} \\
		Bld 7 & \textbf{39.53} & \textbf{38.24} & \textbf{38.08} & \textbf{36.91} & \textbf{35.53} & \textbf{34.79} & \textbf{35.54} & \textbf{33.23} & \textbf{33.32} & \textbf{16.21} & \textbf{17.53} & \textbf{17.66} & \textbf{16.82} & \textbf{16.74} & \textbf{18.66} & \textbf{14.94} & \textbf{19.03} & \textbf{21.41} & \textbf{19.78} & \textbf{19.40} \\
		Bld 8 & \textbf{38.39} & \textbf{37.64} & \textbf{37.46} & \textbf{36.12} & \textbf{35.84} & \textbf{35.46} & \textbf{35.50} & \textbf{18.75} & \textbf{19.64} & \textbf{20.65} & \textbf{20.76} & \textbf{22.00} & \textbf{23.26} & \textbf{23.06} & \textbf{25.32} & \textbf{23.62} & \textbf{23.32} & \textbf{25.43} & \textbf{24.67} & \textbf{25.76} \\
		Bld 9 & \textbf{32.66} & \textbf{32.65} & \textbf{18.75} & \textbf{7.72} & \textbf{9.17} & \textbf{7.86} & \textbf{9.59} & \textbf{14.66} & \textbf{10.90} & \textbf{18.10} & \textbf{8.04} & \textbf{7.22} & \textbf{7.06} & \textbf{11.04} & \textbf{11.13} & \textbf{8.28} & \textbf{8.39} & \textbf{7.90} & \textbf{8.86} & \textbf{7.87} \\
		Bld 10 & \textbf{36.53} & \textbf{35.92} & \textbf{15.96} & \textbf{8.93} & \textbf{16.43} & \textbf{14.49} & \textbf{17.61} & \textbf{16.25} & \textbf{14.79} & \textbf{7.46} & \textbf{6.89} & \textbf{6.54} & \textbf{7.55} & \textbf{7.68} & \textbf{8.09} & \textbf{8.28} & \textbf{9.29} & \textbf{8.62} & \textbf{9.56} & \textbf{8.62} \\
		Bld 11 & \textbf{5.69} & \textbf{1.37} & \textbf{1.39} & \textbf{14.43} & \textbf{14.20} & \textbf{12.79} & \textbf{12.35} & \textbf{8.19} & \textbf{9.12} & \textbf{7.02} & \textbf{8.30} & \textbf{7.39} & \textbf{5.47} & \textbf{11.90} & \textbf{9.66} & \textbf{6.96} & \textbf{6.91} & \textbf{9.58} & \textbf{8.34} & \textbf{7.19} \\

		\hline
		bird & \textbf{24.49} & \textbf{16.75} & \textbf{25.22} & \textbf{36.18} & \textbf{35.56} & \textbf{29.57} & \textbf{28.27} & \textbf{35.72} & \textbf{20.98} & \textbf{18.48} & \textbf{18.14} & \textbf{16.28} & \textbf{18.49} & \textbf{16.88} & \textbf{16.95} & \textbf{17.17} & \textbf{17.08} & \textbf{19.76} & \textbf{20.16} & \textbf{18.54} \\
		butterfly & \textbf{19.51} & \textbf{35.99} & \textbf{38.37} & \textbf{21.74} & \textbf{21.02} & \textbf{21.08} & \textbf{18.91} & \textbf{20.58} & \textbf{19.71} & \textbf{20.19} & \textbf{20.16} & \textbf{20.04} & \textbf{15.44} & \textbf{15.37} & \textbf{13.43} & \textbf{10.56} & \textbf{10.63} & \textbf{10.88} & \textbf{13.35} & \textbf{13.05} \\
		cat & \textbf{38.82} & \textbf{27.07} & \textbf{24.72} & \textbf{18.96} & \textbf{9.40} & \textbf{9.75} & \textbf{19.47} & \textbf{15.94} & \textbf{25.61} & \textbf{27.28} & \textbf{24.83} & \textbf{26.24} & \textbf{21.03} & \textbf{20.46} & \textbf{20.31} & \textbf{25.27} & \textbf{21.06} & \textbf{21.63} & \textbf{17.48} & \textbf{19.09} \\
		dog & \textbf{16.25} & \textbf{3.11} & \textbf{5.80} & \textbf{10.25} & \textbf{18.36} & \textbf{23.92} & \textbf{35.60} & \textbf{31.63} & \textbf{33.54} & \textbf{27.42} & \textbf{29.52} & \textbf{33.45} & \textbf{24.59} & \textbf{28.22} & \textbf{26.14} & \textbf{28.39} & \textbf{32.46} & \textbf{13.36} & \textbf{13.20} & \textbf{13.63} \\
		horse & \textbf{26.18} & \textbf{3.18} & \textbf{4.62} & \textbf{15.85} & \textbf{27.75} & \textbf{31.76} & \textbf{30.33} & \textbf{37.48} & \textbf{30.34} & \textbf{22.66} & \textbf{27.40} & \textbf{27.32} & \textbf{28.27} & \textbf{31.85} & \textbf{31.74} & \textbf{25.45} & \textbf{23.19} & \textbf{22.02} & \textbf{21.13} & \textbf{21.75} \\
		ostrich & \textbf{23.07} & \textbf{30.02} & \textbf{11.40} & \textbf{5.17} & \textbf{23.01} & 42.87 & 41.34 & \textbf{38.80} & 43.93 & 40.29 & \textbf{39.00} & \textbf{33.42} & \textbf{29.92} & \textbf{26.53} & \textbf{31.99} & \textbf{27.34} & \textbf{25.27} & \textbf{27.71} & \textbf{26.49} & \textbf{22.33} \\
		shark & \textbf{13.84} & \textbf{16.37} & \textbf{6.40} & \textbf{5.92} & \textbf{7.07} & \textbf{5.55} & \textbf{6.20} & \textbf{8.96} & \textbf{7.34} & \textbf{7.21} & \textbf{6.81} & \textbf{13.24} & \textbf{12.91} & \textbf{10.36} & \textbf{12.81} & \textbf{8.87} & \textbf{14.39} & \textbf{14.13} & \textbf{9.81} & \textbf{9.52} \\
		spider & \textbf{14.79} & \textbf{9.17} & \textbf{18.12} & \textbf{37.17} & \textbf{39.61} & 50.32 & \textbf{38.07} & \textbf{28.64} & \textbf{27.43} & \textbf{18.44} & \textbf{20.15} & \textbf{17.97} & \textbf{15.41} & \textbf{15.06} & \textbf{14.22} & \textbf{10.54} & \textbf{13.74} & \textbf{12.61} & \textbf{11.29} & \textbf{11.33} \\
		turtle & \textbf{35.87} & \textbf{26.25} & \textbf{13.78} & \textbf{24.91} & \textbf{23.11} & \textbf{21.26} & \textbf{14.12} & \textbf{15.14} & \textbf{12.61} & \textbf{13.18} & \textbf{13.48} & \textbf{14.34} & \textbf{16.51} & \textbf{25.31} & \textbf{29.32} & \textbf{22.59} & \textbf{26.09} & \textbf{22.56} & \textbf{19.78} & \textbf{27.00} \\

		\hline

	\end{tabular}
\end{sidewaystable}

\begin{sidewaystable}
	\setlength{\tabcolsep}{2.5pt}
	\caption{The average Fr\'echet distance over the 25 runs for all cases, as a percentage of the predicted upper bound. Significantly better performance (less than 40 percent) are marked in a bold font.}
	\label{tbl-frechet-avg-ub}
		\begin{tabular}{l|rrrrrrrrrrrrrrrrrrrr}
		Scale & 1 & 2 & 3 & 4 & 5 & 6 & 7 & 8 & 9 & 10 & 11 & 12 & 13 & 14 & 15 & 16 & 17 & 18 & 19 & 20 \\

		\hline
		Africa & 53.77 & \textbf{26.55} & \textbf{22.08} & \textbf{23.12} & \textbf{23.43} & \textbf{25.89} & \textbf{27.64} & \textbf{27.28} & \textbf{27.79} & \textbf{28.70} & \textbf{29.50} & \textbf{29.18} & \textbf{29.36} & \textbf{30.11} & \textbf{30.29} & \textbf{31.67} & \textbf{32.10} & \textbf{31.25} & \textbf{32.61} & \textbf{30.00} \\
		Antarctica & 41.52 & \textbf{36.45} & \textbf{37.58} & \textbf{33.87} & \textbf{29.17} & \textbf{24.95} & \textbf{24.80} & \textbf{26.05} & \textbf{28.32} & \textbf{30.74} & \textbf{33.70} & \textbf{34.13} & \textbf{31.46} & \textbf{29.24} & \textbf{30.22} & \textbf{30.96} & \textbf{28.38} & \textbf{25.99} & \textbf{26.69} & \textbf{25.56} \\
		Australia & 45.21 & \textbf{31.71} & \textbf{34.51} & \textbf{33.16} & \textbf{28.85} & \textbf{29.64} & \textbf{30.29} & \textbf{32.43} & \textbf{29.87} & \textbf{31.44} & \textbf{29.34} & \textbf{28.35} & \textbf{28.51} & \textbf{30.84} & \textbf{30.11} & \textbf{32.33} & \textbf{33.17} & \textbf{33.64} & \textbf{34.14} & \textbf{35.31} \\
		Brazil & \textbf{32.13} & \textbf{31.13} & \textbf{24.39} & \textbf{24.41} & \textbf{27.75} & \textbf{27.87} & \textbf{26.19} & \textbf{26.72} & \textbf{19.34} & \textbf{18.91} & \textbf{18.64} & \textbf{19.09} & \textbf{20.02} & \textbf{21.07} & \textbf{20.47} & \textbf{23.00} & \textbf{21.90} & \textbf{22.13} & \textbf{23.57} & \textbf{23.30} \\
		China & 40.92 & \textbf{14.88} & \textbf{19.79} & \textbf{22.82} & \textbf{24.40} & \textbf{26.82} & \textbf{29.37} & \textbf{31.24} & \textbf{30.23} & \textbf{27.62} & \textbf{14.31} & \textbf{12.98} & \textbf{14.24} & \textbf{17.21} & \textbf{17.10} & \textbf{18.46} & \textbf{18.81} & \textbf{20.55} & \textbf{19.07} & \textbf{18.51} \\
		France & \textbf{38.38} & \textbf{37.84} & \textbf{35.57} & \textbf{36.69} & \textbf{34.93} & \textbf{35.14} & \textbf{31.02} & \textbf{28.66} & \textbf{23.22} & \textbf{25.09} & \textbf{26.29} & \textbf{29.59} & \textbf{27.40} & \textbf{27.87} & \textbf{27.57} & \textbf{28.15} & \textbf{28.45} & \textbf{30.74} & \textbf{27.53} & \textbf{28.05} \\
		Greece & \textbf{22.74} & \textbf{21.33} & \textbf{22.08} & \textbf{21.87} & \textbf{18.21} & \textbf{15.62} & \textbf{15.09} & \textbf{15.76} & \textbf{15.40} & \textbf{15.59} & \textbf{15.74} & \textbf{18.42} & \textbf{17.80} & \textbf{18.74} & \textbf{17.97} & \textbf{18.45} & \textbf{19.42} & \textbf{20.35} & \textbf{20.34} & \textbf{20.49} \\
		Italy & 45.99 & \textbf{35.65} & \textbf{22.99} & \textbf{20.06} & \textbf{30.52} & \textbf{29.95} & \textbf{31.57} & \textbf{33.48} & \textbf{17.07} & \textbf{16.77} & \textbf{18.36} & \textbf{21.44} & \textbf{23.00} & \textbf{24.47} & \textbf{26.65} & \textbf{30.96} & \textbf{32.57} & \textbf{33.81} & \textbf{34.78} & \textbf{36.92} \\
		Switzerland & 40.21 & \textbf{22.78} & \textbf{27.38} & \textbf{38.15} & \textbf{35.05} & \textbf{35.57} & \textbf{35.47} & \textbf{33.41} & \textbf{35.68} & \textbf{36.49} & \textbf{31.42} & \textbf{33.70} & \textbf{34.66} & \textbf{30.13} & \textbf{25.25} & \textbf{25.11} & \textbf{23.54} & \textbf{25.06} & \textbf{23.60} & \textbf{23.88} \\
		Great Britain & \textbf{33.29} & \textbf{32.10} & \textbf{30.84} & \textbf{27.71} & \textbf{26.60} & \textbf{7.01} & \textbf{7.37} & \textbf{7.58} & \textbf{8.62} & \textbf{7.94} & \textbf{7.89} & \textbf{8.85} & \textbf{9.07} & \textbf{10.03} & \textbf{10.80} & \textbf{11.37} & \textbf{11.29} & \textbf{13.76} & \textbf{14.66} & \textbf{16.21} \\
		Vietnam & 71.93 & 60.45 & \textbf{34.75} & \textbf{22.68} & 45.54 & 51.42 & 52.26 & 52.54 & 53.23 & 52.65 & 52.97 & 52.47 & 51.79 & 51.46 & 51.34 & 50.76 & 50.91 & 50.11 & 49.32 & 49.37 \\
		Limburg & \textbf{21.37} & \textbf{37.73} & \textbf{34.25} & \textbf{21.07} & \textbf{19.02} & \textbf{14.91} & \textbf{18.96} & \textbf{26.06} & \textbf{26.82} & \textbf{26.67} & \textbf{26.79} & \textbf{25.19} & \textbf{26.77} & \textbf{25.83} & \textbf{25.63} & \textbf{26.95} & \textbf{22.73} & \textbf{24.56} & \textbf{22.97} & \textbf{27.41} \\
		Noord Brabant & \textbf{18.96} & \textbf{38.72} & \textbf{36.27} & \textbf{37.75} & \textbf{22.96} & \textbf{24.68} & \textbf{25.94} & \textbf{27.42} & \textbf{28.01} & \textbf{31.36} & \textbf{30.35} & \textbf{30.48} & \textbf{31.83} & \textbf{33.41} & \textbf{32.48} & \textbf{26.03} & \textbf{22.23} & \textbf{16.84} & \textbf{16.51} & \textbf{16.78} \\
		Lang.-Rouss. & \textbf{27.25} & \textbf{25.58} & \textbf{23.84} & \textbf{23.46} & \textbf{23.73} & \textbf{24.11} & \textbf{26.18} & \textbf{26.65} & \textbf{27.86} & \textbf{28.78} & \textbf{27.86} & \textbf{28.10} & \textbf{29.49} & \textbf{8.87} & \textbf{8.61} & \textbf{11.40} & \textbf{17.26} & \textbf{14.92} & \textbf{21.19} & \textbf{24.33} \\

		\hline
		Bld 1 & \textbf{5.82} & 43.28 & 50.34 & 50.07 & 48.49 & 53.46 & 56.00 & 54.42 & 55.56 & 55.75 & 54.92 & 54.32 & 54.00 & 53.98 & 52.86 & 51.42 & 51.07 & 50.80 & 48.05 & 47.49 \\
		Bld 2 & \textbf{39.15} & \textbf{24.47} & \textbf{29.13} & \textbf{31.91} & \textbf{31.78} & \textbf{33.51} & \textbf{31.14} & \textbf{10.40} & \textbf{10.65} & \textbf{11.53} & \textbf{11.71} & \textbf{6.88} & \textbf{7.08} & \textbf{9.16} & \textbf{9.59} & \textbf{10.33} & \textbf{9.85} & \textbf{12.28} & \textbf{14.83} & \textbf{13.26} \\
		Bld 3 & \textbf{33.96} & \textbf{35.74} & \textbf{38.13} & \textbf{31.22} & \textbf{34.73} & \textbf{35.45} & \textbf{19.34} & \textbf{21.30} & \textbf{23.79} & \textbf{21.81} & \textbf{26.53} & \textbf{27.19} & \textbf{25.73} & \textbf{27.19} & \textbf{20.03} & \textbf{19.04} & \textbf{19.55} & \textbf{21.28} & \textbf{22.25} & \textbf{22.26} \\
		Bld 4 & 41.55 & 40.88 & \textbf{38.92} & \textbf{37.62} & \textbf{37.48} & \textbf{37.16} & \textbf{37.22} & \textbf{37.63} & \textbf{37.91} & \textbf{38.68} & \textbf{38.43} & \textbf{38.21} & \textbf{38.13} & \textbf{38.58} & \textbf{38.53} & \textbf{38.04} & \textbf{37.23} & \textbf{36.39} & \textbf{35.69} & \textbf{32.77} \\
		Bld 5 & \textbf{35.24} & \textbf{11.42} & \textbf{14.87} & \textbf{17.18} & \textbf{26.88} & \textbf{21.41} & \textbf{28.41} & \textbf{29.28} & \textbf{32.55} & \textbf{32.02} & \textbf{36.01} & \textbf{27.76} & \textbf{25.88} & \textbf{26.47} & \textbf{27.26} & \textbf{27.40} & \textbf{32.26} & \textbf{30.33} & \textbf{29.77} & \textbf{30.58} \\
		Bld 6 & \textbf{27.02} & \textbf{24.68} & \textbf{29.32} & \textbf{38.93} & \textbf{38.72} & 41.61 & 44.13 & 43.33 & 41.95 & 42.54 & \textbf{39.37} & \textbf{37.37} & \textbf{37.19} & \textbf{34.37} & \textbf{23.88} & \textbf{24.53} & \textbf{25.04} & \textbf{24.15} & \textbf{25.87} & \textbf{24.39} \\
		Bld 7 & 40.01 & \textbf{39.62} & \textbf{39.19} & \textbf{38.81} & \textbf{38.60} & \textbf{37.95} & \textbf{38.04} & \textbf{37.72} & \textbf{37.65} & \textbf{20.51} & \textbf{20.49} & \textbf{20.56} & \textbf{21.08} & \textbf{21.19} & \textbf{23.68} & \textbf{24.98} & \textbf{27.72} & \textbf{31.24} & \textbf{33.46} & \textbf{32.40} \\
		Bld 8 & 41.09 & 40.70 & 40.72 & 40.26 & 40.26 & 40.06 & 40.63 & \textbf{21.67} & \textbf{24.60} & \textbf{28.30} & \textbf{28.47} & \textbf{30.92} & \textbf{29.81} & \textbf{32.47} & \textbf{32.56} & \textbf{32.52} & \textbf{32.86} & \textbf{32.29} & \textbf{32.44} & \textbf{33.20} \\
		Bld 9 & \textbf{35.93} & \textbf{35.14} & \textbf{20.28} & \textbf{9.77} & \textbf{24.07} & \textbf{21.83} & \textbf{22.63} & \textbf{27.08} & \textbf{25.79} & \textbf{28.42} & \textbf{13.28} & \textbf{13.42} & \textbf{12.43} & \textbf{13.96} & \textbf{13.95} & \textbf{13.31} & \textbf{16.29} & \textbf{14.45} & \textbf{16.69} & \textbf{16.62} \\
		Bld 10 & \textbf{39.55} & \textbf{39.09} & \textbf{20.15} & \textbf{27.07} & \textbf{34.99} & \textbf{26.00} & \textbf{26.07} & \textbf{24.34} & \textbf{26.62} & \textbf{11.74} & \textbf{11.35} & \textbf{13.02} & \textbf{13.84} & \textbf{17.18} & \textbf{16.44} & \textbf{18.11} & \textbf{17.60} & \textbf{16.54} & \textbf{16.60} & \textbf{20.44} \\
		Bld 11 & \textbf{6.25} & \textbf{12.46} & \textbf{20.98} & \textbf{23.80} & \textbf{24.67} & \textbf{25.00} & \textbf{25.32} & \textbf{24.15} & \textbf{25.17} & \textbf{26.24} & \textbf{25.36} & \textbf{22.90} & \textbf{20.71} & \textbf{23.95} & \textbf{21.42} & \textbf{20.41} & \textbf{24.67} & \textbf{23.97} & \textbf{22.25} & \textbf{19.50} \\

		\hline
		bird & \textbf{39.38} & \textbf{34.34} & 42.94 & 46.62 & 47.27 & 40.30 & 45.99 & 47.38 & \textbf{26.20} & \textbf{26.34} & \textbf{26.72} & \textbf{23.86} & \textbf{23.71} & \textbf{28.04} & \textbf{28.45} & \textbf{29.89} & \textbf{28.66} & \textbf{31.21} & \textbf{32.16} & \textbf{28.45} \\
		butterfly & \textbf{29.66} & 41.86 & 48.55 & \textbf{29.09} & \textbf{29.05} & \textbf{28.48} & \textbf{27.91} & \textbf{26.78} & \textbf{26.07} & \textbf{25.89} & \textbf{25.85} & \textbf{25.78} & \textbf{18.88} & \textbf{19.40} & \textbf{18.57} & \textbf{15.15} & \textbf{16.22} & \textbf{16.51} & \textbf{16.31} & \textbf{19.05} \\
		cat & \textbf{39.91} & \textbf{30.46} & \textbf{31.83} & \textbf{27.24} & \textbf{19.86} & \textbf{32.13} & 40.72 & 41.08 & 41.58 & 42.88 & 41.92 & 42.46 & 41.39 & \textbf{38.98} & 40.78 & \textbf{39.59} & \textbf{37.52} & \textbf{38.19} & \textbf{37.61} & \textbf{37.65} \\
		dog & \textbf{27.76} & \textbf{19.66} & \textbf{31.00} & \textbf{25.73} & \textbf{34.89} & 41.07 & 44.41 & 43.63 & 44.14 & 43.24 & 41.79 & 42.79 & 42.10 & 42.08 & 40.58 & 42.06 & 42.10 & \textbf{19.83} & \textbf{19.33} & \textbf{19.52} \\
		horse & \textbf{34.92} & \textbf{4.26} & \textbf{16.30} & \textbf{37.60} & 42.27 & 44.65 & 42.75 & 44.71 & \textbf{37.01} & \textbf{35.78} & \textbf{35.76} & \textbf{36.24} & \textbf{34.78} & \textbf{36.25} & \textbf{37.90} & \textbf{32.15} & \textbf{31.66} & \textbf{29.90} & \textbf{29.95} & \textbf{30.76} \\
		ostrich & \textbf{35.41} & 40.52 & \textbf{16.46} & \textbf{29.04} & 46.98 & 56.10 & 57.09 & 55.25 & 52.58 & 51.32 & 49.38 & 49.33 & 47.77 & 45.13 & 44.75 & 42.94 & 41.21 & 41.19 & \textbf{37.14} & \textbf{30.60} \\
		shark & \textbf{22.10} & \textbf{32.91} & \textbf{7.99} & \textbf{8.89} & \textbf{14.32} & \textbf{21.73} & \textbf{25.23} & \textbf{29.30} & \textbf{33.11} & \textbf{32.95} & \textbf{34.61} & \textbf{33.47} & \textbf{31.90} & \textbf{33.63} & \textbf{34.40} & \textbf{33.85} & \textbf{36.33} & \textbf{35.07} & \textbf{36.71} & \textbf{34.91} \\
		spider & \textbf{35.27} & \textbf{20.05} & \textbf{37.63} & 46.80 & 53.24 & 56.06 & 43.48 & 44.97 & 40.53 & \textbf{24.50} & \textbf{24.44} & \textbf{24.77} & \textbf{25.79} & \textbf{20.33} & \textbf{19.54} & \textbf{16.83} & \textbf{17.32} & \textbf{17.02} & \textbf{16.38} & \textbf{15.78} \\
		turtle & 40.14 & \textbf{30.88} & \textbf{23.35} & 41.32 & \textbf{33.95} & \textbf{31.69} & \textbf{20.65} & \textbf{23.97} & \textbf{31.44} & \textbf{36.39} & 43.34 & 45.58 & 43.01 & 46.99 & 45.79 & 44.87 & 45.31 & 44.10 & 43.30 & 45.03 \\

		\hline

	\end{tabular}
\end{sidewaystable}

\begin{sidewaystable}
	\setlength{\tabcolsep}{2.5pt}
	\caption{The maximum Fr\'echet distance over the 25 runs for all cases, as a percentage of the predicted upper bound. Significantly better performance (less than 40 percent) are marked in a bold font.}
	\label{tbl-frechet-max-ub}
	\begin{tabular}{l|rrrrrrrrrrrrrrrrrrrr}
		Scale & 1 & 2 & 3 & 4 & 5 & 6 & 7 & 8 & 9 & 10 & 11 & 12 & 13 & 14 & 15 & 16 & 17 & 18 & 19 & 20 \\

		\hline 
		Africa & 58.20 & \textbf{35.10} & \textbf{31.39} & \textbf{33.25} & \textbf{34.51} & \textbf{36.36} & 41.84 & \textbf{35.55} & \textbf{34.69} & \textbf{35.41} & \textbf{36.92} & \textbf{39.68} & \textbf{39.33} & 42.84 & 40.62 & 49.95 & 52.09 & 46.97 & 52.87 & \textbf{39.33} \\
		Antarctica & 62.35 & 54.79 & 65.29 & 57.31 & 56.62 & 57.09 & 55.25 & 52.44 & 51.07 & 51.80 & 51.22 & 50.95 & 48.92 & 46.78 & 46.55 & 48.50 & 43.69 & 40.65 & 41.19 & \textbf{39.12} \\
		Australia & 75.82 & 43.71 & 51.38 & 50.92 & 46.66 & 45.43 & 51.10 & 64.95 & 55.90 & 56.44 & 50.65 & 47.95 & 50.53 & 50.98 & 50.92 & 54.62 & 56.17 & 54.39 & 55.40 & 56.77 \\
		Brazil & \textbf{33.73} & \textbf{37.60} & \textbf{38.11} & \textbf{35.60} & 52.65 & 45.20 & 40.14 & 45.50 & \textbf{29.00} & \textbf{31.43} & \textbf{30.83} & \textbf{29.79} & \textbf{28.58} & 44.33 & \textbf{27.42} & 47.43 & 43.59 & 45.02 & 46.39 & 41.06 \\
		China & 54.52 & \textbf{24.00} & \textbf{30.11} & \textbf{34.70} & 46.45 & 49.11 & 50.88 & 48.86 & 49.47 & 44.20 & \textbf{19.08} & 40.01 & 43.45 & 45.16 & 41.04 & 45.50 & 43.00 & 41.93 & \textbf{38.88} & \textbf{38.32} \\
		France & 65.49 & 53.33 & 51.68 & 51.00 & 48.27 & 45.11 & 40.91 & \textbf{39.48} & 41.11 & 42.74 & 49.35 & 50.89 & 46.25 & 43.36 & 42.81 & 43.60 & 44.60 & 53.10 & 43.08 & 42.92 \\
		Greece & \textbf{23.40} & \textbf{22.57} & \textbf{25.62} & \textbf{25.11} & \textbf{21.56} & \textbf{17.99} & \textbf{17.76} & \textbf{23.91} & \textbf{23.15} & \textbf{28.39} & \textbf{26.93} & \textbf{33.45} & \textbf{33.10} & \textbf{33.50} & \textbf{32.74} & \textbf{33.26} & \textbf{31.91} & \textbf{31.01} & \textbf{32.19} & \textbf{30.86} \\
		Italy & 55.73 & 47.42 & 43.69 & \textbf{39.47} & 54.30 & 50.61 & 59.87 & 61.72 & \textbf{29.10} & \textbf{29.00} & 42.41 & 43.57 & 41.72 & 40.26 & 42.95 & 52.78 & 52.97 & 51.55 & 51.49 & 52.20 \\
		Switzerland & 61.98 & 62.44 & 56.71 & 59.95 & 52.31 & 55.06 & 53.12 & 56.73 & 55.01 & 53.51 & 50.13 & 50.22 & 50.22 & 47.40 & \textbf{36.05} & \textbf{33.46} & \textbf{30.87} & \textbf{32.77} & \textbf{30.17} & \textbf{30.24} \\
		Great Britain & 44.30 & \textbf{38.58} & 42.51 & \textbf{37.77} & \textbf{37.41} & \textbf{10.93} & \textbf{10.73} & \textbf{10.67} & \textbf{12.40} & \textbf{11.23} & \textbf{11.56} & \textbf{14.34} & \textbf{13.92} & \textbf{22.44} & \textbf{29.24} & \textbf{28.50} & \textbf{29.17} & \textbf{29.39} & \textbf{29.45} & \textbf{29.10} \\
		Vietnam & 86.79 & 72.02 & 41.19 & 52.88 & 57.33 & 56.95 & 56.75 & 56.68 & 56.91 & 56.40 & 56.29 & 55.85 & 56.30 & 55.20 & 55.22 & 55.01 & 55.70 & 56.09 & 54.42 & 54.06 \\
		Limburg & \textbf{36.40} & 70.47 & 54.38 & \textbf{38.15} & 48.17 & \textbf{33.52} & 43.40 & 47.43 & 45.10 & 44.67 & 43.18 & \textbf{37.47} & \textbf{39.19} & \textbf{37.40} & \textbf{38.03} & \textbf{38.45} & \textbf{36.56} & \textbf{36.77} & 47.23 & 46.64 \\
		Noord Brabant & 74.56 & 68.23 & 56.44 & 56.82 & \textbf{34.97} & \textbf{34.20} & \textbf{38.40} & 58.53 & 56.81 & 56.95 & 52.95 & 50.70 & 54.52 & 52.01 & 48.67 & 41.71 & \textbf{33.14} & \textbf{26.93} & \textbf{24.55} & \textbf{26.17} \\
		Lang.-Rouss. & \textbf{28.26} & \textbf{27.16} & \textbf{26.81} & \textbf{34.21} & \textbf{32.13} & 40.11 & \textbf{38.23} & \textbf{39.33} & \textbf{37.33} & \textbf{39.38} & 41.66 & 40.88 & 49.66 & \textbf{15.63} & \textbf{11.54} & 47.76 & 47.08 & 43.97 & 48.47 & 46.75 \\
		 
		\hline 
		Bld 1 & 56.64 & 55.83 & 55.91 & 55.09 & 61.72 & 62.23 & 61.97 & 60.67 & 60.79 & 61.13 & 61.30 & 60.26 & 60.60 & 60.52 & 59.40 & 57.76 & 57.84 & 58.00 & 56.27 & 56.63 \\
		Bld 2 & 40.52 & \textbf{25.70} & 52.94 & 45.57 & 41.93 & 50.56 & 42.96 & \textbf{13.16} & \textbf{14.40} & \textbf{15.17} & \textbf{14.60} & \textbf{9.58} & \textbf{10.15} & \textbf{30.50} & \textbf{30.35} & \textbf{29.91} & \textbf{29.19} & \textbf{32.04} & 42.14 & 40.77 \\
		Bld 3 & 53.27 & 59.90 & 62.59 & 44.91 & 47.39 & 49.52 & \textbf{25.77} & \textbf{29.22} & 46.73 & \textbf{36.52} & 52.98 & 51.52 & 46.59 & 47.92 & \textbf{35.03} & \textbf{33.80} & \textbf{32.91} & 58.14 & 54.57 & 54.42 \\
		Bld 4 & 42.73 & 42.12 & 40.76 & \textbf{39.84} & \textbf{39.03} & \textbf{39.79} & 41.19 & 43.20 & 46.66 & 45.92 & 47.43 & 41.77 & 43.52 & 49.00 & 43.38 & 50.35 & 44.22 & 41.65 & \textbf{38.83} & 40.79 \\
		Bld 5 & 47.69 & \textbf{15.79} & \textbf{27.06} & \textbf{23.33} & 54.65 & 40.45 & \textbf{35.94} & 47.21 & 47.52 & 47.24 & 47.24 & \textbf{37.32} & \textbf{39.01} & \textbf{37.99} & \textbf{37.19} & \textbf{36.41} & \textbf{39.42} & \textbf{39.04} & \textbf{38.88} & 41.98 \\
		Bld 6 & 43.62 & 49.67 & 51.77 & 49.42 & 57.43 & 58.61 & 60.23 & 55.39 & 53.36 & 51.30 & 52.99 & 48.24 & 48.23 & 47.62 & \textbf{32.19} & \textbf{34.21} & \textbf{32.49} & \textbf{30.55} & \textbf{32.60} & \textbf{31.44} \\
		Bld 7 & 40.61 & 42.32 & 40.04 & \textbf{39.91} & 42.17 & \textbf{39.92} & \textbf{39.97} & \textbf{39.98} & \textbf{39.65} & \textbf{22.20} & \textbf{22.09} & \textbf{23.71} & \textbf{25.22} & \textbf{24.17} & 41.92 & 50.38 & 52.16 & 52.85 & 54.29 & 50.68 \\
		Bld 8 & 49.70 & 53.18 & 50.62 & 49.34 & 47.90 & 48.02 & 50.43 & \textbf{26.07} & 51.58 & 55.20 & 48.11 & 50.74 & 47.40 & 47.94 & 45.71 & 47.93 & 47.41 & 44.28 & 44.37 & 46.19 \\
		Bld 9 & 40.33 & 41.40 & \textbf{23.12} & \textbf{21.20} & 44.50 & \textbf{34.17} & \textbf{32.69} & 41.70 & 41.37 & 40.94 & \textbf{17.30} & \textbf{17.86} & \textbf{18.55} & \textbf{19.72} & \textbf{22.00} & \textbf{18.98} & 51.00 & 49.46 & 50.40 & 51.59 \\
		Bld 10 & 47.53 & 48.99 & 46.38 & 47.87 & 47.88 & \textbf{34.66} & \textbf{34.78} & \textbf{32.56} & \textbf{39.19} & \textbf{16.61} & \textbf{16.24} & \textbf{23.46} & 50.74 & 51.39 & 49.98 & 50.36 & 48.57 & 47.07 & 45.69 & 47.96 \\
		Bld 11 & \textbf{6.57} & 41.24 & \textbf{29.57} & \textbf{36.58} & \textbf{35.85} & \textbf{33.12} & \textbf{33.17} & \textbf{32.77} & \textbf{32.10} & \textbf{32.60} & \textbf{32.57} & \textbf{32.33} & \textbf{32.63} & \textbf{31.67} & \textbf{29.53} & \textbf{29.62} & \textbf{31.84} & \textbf{29.27} & \textbf{31.85} & \textbf{29.38} \\

		\hline 
		bird & 69.24 & 61.32 & 70.66 & 55.88 & 62.35 & 53.69 & 61.01 & 62.45 & \textbf{32.85} & \textbf{34.23} & \textbf{34.98} & \textbf{31.51} & \textbf{30.57} & 51.67 & 52.42 & 56.54 & 54.99 & 53.96 & 53.30 & 44.67 \\
		butterfly & 66.15 & 44.49 & 57.73 & \textbf{34.52} & \textbf{33.60} & \textbf{33.07} & \textbf{32.34} & \textbf{31.18} & \textbf{29.59} & \textbf{30.38} & \textbf{30.10} & \textbf{30.66} & \textbf{22.03} & \textbf{23.05} & \textbf{22.38} & \textbf{19.29} & \textbf{36.34} & \textbf{35.91} & \textbf{19.59} & \textbf{34.52} \\
		cat & 48.70 & 46.97 & 60.42 & 55.17 & 57.77 & 55.77 & 62.20 & 59.31 & 59.54 & 58.77 & 60.54 & 58.50 & 55.08 & 57.85 & 60.58 & 56.08 & 58.24 & 60.32 & 54.68 & 56.33 \\
		dog & 52.75 & 56.23 & 54.28 & 52.07 & 51.00 & 53.57 & 55.75 & 55.66 & 53.21 & 53.05 & 55.14 & 53.35 & 51.66 & 54.09 & 51.60 & 53.61 & 52.97 & \textbf{27.15} & \textbf{26.25} & \textbf{26.22} \\
		horse & 55.31 & \textbf{6.91} & 44.31 & 49.62 & 54.33 & 53.96 & 52.13 & 55.57 & 44.68 & 43.41 & 44.34 & 47.19 & \textbf{39.14} & 41.32 & 50.15 & 42.73 & 43.48 & 42.38 & 41.35 & \textbf{39.34} \\
		ostrich & 50.12 & 51.54 & \textbf{22.74} & 59.65 & 66.39 & 68.84 & 64.42 & 61.47 & 58.25 & 56.98 & 56.49 & 56.93 & 57.58 & 56.59 & 55.65 & 54.31 & 52.54 & 54.66 & 49.24 & 43.95 \\
		shark & \textbf{34.70} & 43.86 & \textbf{9.86} & \textbf{13.26} & 54.03 & 56.19 & 52.55 & 54.23 & 55.48 & 58.02 & 55.36 & 58.97 & 56.42 & 56.92 & 54.58 & 56.39 & 54.42 & 58.02 & 55.42 & 57.70 \\
		spider & 59.48 & \textbf{27.44} & 53.71 & 57.38 & 72.30 & 68.20 & 53.74 & 60.11 & 52.54 & \textbf{33.47} & \textbf{34.27} & \textbf{33.06} & \textbf{34.51} & \textbf{25.98} & \textbf{23.48} & \textbf{24.28} & \textbf{20.83} & \textbf{23.73} & \textbf{23.42} & \textbf{23.34} \\
		turtle & 55.95 & 42.22 & 55.92 & 62.14 & 54.61 & 40.83 & \textbf{28.57} & 56.12 & 62.73 & 63.13 & 65.21 & 65.25 & 64.18 & 62.78 & 62.38 & 62.23 & 58.75 & 58.99 & 57.13 & 56.91 \\
		 
		\hline

	\end{tabular}
\end{sidewaystable}

\begin{table}
	\caption{Influence of the input placement. For every algorithm and input the minimum, average and maximum normalized symmetric difference are given.}	
	\label{tab:input-placement}
	\centering
	\begin{tabular}{rccccccccc}
		\toprule		
		& \multicolumn{3}{c}{\textbf{Optimal}} & \multicolumn{3}{c}{\textbf{Hausdorff}} & \multicolumn{3}{c}{\textbf{Fr\'echet}} \\
		\cmidrule(lr){2-4}
		\cmidrule(lr){5-7}
		\cmidrule(lr){8-10}
		& min & avg & max & min & avg & max & min & avg & max \\
		\midrule
		Africa & 0.139 & 0.159 & 0.18 & 0.139 & 0.159 & 0.18 & 0.142 & 0.165 & 0.191\\
		Antarctica & 0.13 & 0.143 & 0.154 & 0.13 & 0.145 & 0.16 & 0.133 & 0.15 & 0.157\\
		Australia & 0.133 & 0.147 & 0.167 & 0.133 & 0.149 & 0.172 & 0.137 & 0.154 & 0.171\\
		Brazil & 0.156 & 0.173 & 0.199 & 0.156 & 0.173 & 0.199 & 0.162 & 0.182 & 0.23\\
		China & 0.157 & 0.174 & 0.209 & 0.157 & 0.174 & 0.209 & 0.163 & 0.184 & 0.214\\
		France & 0.135 & 0.149 & 0.162 & 0.135 & 0.149 & 0.162 & 0.14 & 0.157 & 0.173\\
		Greece & 0.323 & 0.341 & 0.36 & 0.325 & 0.343 & 0.372 & 0.343 & 0.398 & 0.512\\
		Italy & 0.292 & 0.334 & 0.364 & 0.292 & 0.335 & 0.369 & 0.292 & 0.352 & 0.39\\
		Lang.-Rouss. & 0.192 & 0.233 & 0.268 & 0.192 & 0.233 & 0.268 & 0.197 & 0.245 & 0.282\\
		Noord Brabant & 0.153 & 0.176 & 0.195 & 0.153 & 0.176 & 0.195 & 0.159 & 0.186 & 0.221\\
		Limburg & 0.245 & 0.265 & 0.296 & 0.257 & 0.276 & 0.305 & 0.259 & 0.344 & 0.524\\
		Great Britain & 0.242 & 0.262 & 0.282 & 0.242 & 0.265 & 0.312 & 0.255 & 0.289 & 0.322\\
		Vietnam & 0.364 & 0.392 & 0.416 & 0.412 & 0.456 & 0.494 & 0.665 & 0.847 & 0.995\\
		Switzerland & 0.168 & 0.181 & 0.194 & 0.168 & 0.181 & 0.194 & 0.171 & 0.193 & 0.21\\
		\midrule
		Bld 1 & 0.511 & 0.56 & 0.594 & 0.68 & 0.757 & 0.825 & 0.833 & 0.863 & 0.896\\
		Bld 2 & 0.235 & 0.284 & 0.342 & 0.235 & 0.284 & 0.342 & 0.243 & 0.302 & 0.36\\
		Bld 3 & 0.135 & 0.159 & 0.189 & 0.135 & 0.159 & 0.189 & 0.138 & 0.166 & 0.199\\
		Bld 4 & 0.109 & 0.124 & 0.136 & 0.109 & 0.124 & 0.136 & 0.113 & 0.126 & 0.136\\
		Bld 5 & 0.183 & 0.234 & 0.264 & 0.197 & 0.247 & 0.274 & 0.19 & 0.271 & 0.472\\
		Bld 6 & 0.144 & 0.157 & 0.175 & 0.144 & 0.158 & 0.175 & 0.149 & 0.167 & 0.182\\
		Bld 7 & 0.114 & 0.158 & 0.188 & 0.114 & 0.162 & 0.217 & 0.119 & 0.16 & 0.195\\
		Bld 8 & 0.162 & 0.181 & 0.203 & 0.162 & 0.181 & 0.203 & 0.162 & 0.19 & 0.218\\
		Bld 9 & 0.187 & 0.279 & 0.392 & 0.187 & 0.28 & 0.392 & 0.206 & 0.295 & 0.407\\
		Bld 10 & 0.293 & 0.308 & 0.322 & 0.293 & 0.311 & 0.337 & 0.313 & 0.333 & 0.382\\
		Bld 11 & 0.349 & 0.379 & 0.414 & 0.353 & 0.391 & 0.431 & 0.37 & 0.432 & 0.523\\
		\midrule
		bird & 0.28 & 0.304 & 0.323 & 0.28 & 0.307 & 0.342 & 0.284 & 0.324 & 0.349\\
		butterfly & 0.21 & 0.225 & 0.234 & 0.21 & 0.228 & 0.246 & 0.21 & 0.24 & 0.281\\
		cat & 0.206 & 0.254 & 0.292 & 0.23 & 0.276 & 0.333 & 0.223 & 0.275 & 0.317\\
		dog & 0.279 & 0.318 & 0.347 & 0.316 & 0.36 & 0.421 & 0.295 & 0.337 & 0.377\\
		horse & 0.286 & 0.343 & 0.401 & 0.314 & 0.384 & 0.504 & 0.291 & 0.363 & 0.42\\
		ostrich & 0.308 & 0.346 & 0.392 & 0.363 & 0.419 & 0.48 & 0.363 & 0.398 & 0.437\\
		shark & 0.31 & 0.342 & 0.374 & 0.321 & 0.355 & 0.383 & 0.333 & 0.381 & 0.407\\
		spider & 0.568 & 0.586 & 0.619 & 0.63 & 0.708 & 0.808 & 0.582 & 0.625 & 0.694\\
		turtle & 0.261 & 0.279 & 0.303 & 0.267 & 0.282 & 0.303 & 0.271 & 0.303 & 0.351\\

		\bottomrule
	\end{tabular}
\end{table}

\begin{table}
	\centering
	\caption{\label{tab:mst-methods-extended} Normalized symmetric difference, as an increase percentage w.r.t.\ optimal, of the algorithms. For the Hausdorff algorithm, results for the various heuristic improvements are shown. In the second row, \textit{None} means that no postprocessing heuristic was used; \textit{A}, \textit{R} and \textit{S} mean additions, removals and shifts, respectively. In the third row, \cmark\ and \xmark\ indicate whether $Q_4$ was chosen arbitrarily (\xmark) or using the symmetric difference heuristic (\xmark).}
	\makebox[\textwidth][c]{
	\begin{tabular}{rccccccccccc}
		\toprule
		& \textbf{Optimal} & \multicolumn{6}{c}{\textbf{Hausdorff}} & \textbf{Fr\'echet} \\
		\cmidrule(lr){3-8}
		\textit{postproc.} & & \multicolumn{2}{c}{\textit{None}} &  \multicolumn{2}{c}{\textit{A\,/\,R}} & \multicolumn{2}{c}{\textit{A\,/\,R\,/\,S}} \\
		\cmidrule(lr){3-4}
		\cmidrule(lr){5-6}
		\cmidrule(lr){7-8}
		\textit{$Q_4$ heur.} & & \xmark & \cmark & \xmark & \cmark &\xmark & \cmark \\
		\midrule
		Africa & 0.159 & $+\,358\,\%$ & $+\,291\,\%$ & $+\,5\,\%$ & $+\,0\,\%$ & $+\,0\,\%$ & $+\,0\,\%$ & $+\,4\,\%$\\
		Antarctica & 0.143 & $+\,332\,\%$ & $+\,267\,\%$ & $+\,2\,\%$ & $+\,2\,\%$ & $+\,2\,\%$ & $+\,2\,\%$ & $+\,5\,\%$\\
		Australia & 0.147 & $+\,350\,\%$ & $+\,288\,\%$ & $+\,3\,\%$ & $+\,1\,\%$ & $+\,3\,\%$ & $+\,1\,\%$ & $+\,4\,\%$\\
		Brazil & 0.173 & $+\,346\,\%$ & $+\,274\,\%$ & $+\,0\,\%$ & $+\,0\,\%$ & $+\,0\,\%$ & $+\,0\,\%$ & $+\,5\,\%$\\
		China & 0.174 & $+\,372\,\%$ & $+\,276\,\%$ & $+\,32\,\%$ & $+\,0\,\%$ & $+\,4\,\%$ & $+\,0\,\%$ & $+\,5\,\%$\\
		France & 0.149 & $+\,365\,\%$ & $+\,269\,\%$ & $+\,41\,\%$ & $+\,0\,\%$ & $+\,0\,\%$ & $+\,0\,\%$ & $+\,5\,\%$\\
		Greece & 0.341 & $+\,260\,\%$ & $+\,188\,\%$ & $+\,81\,\%$ & $+\,1\,\%$ & $+\,3\,\%$ & $+\,1\,\%$ & $+\,17\,\%$\\
		Italy & 0.334 & $+\,294\,\%$ & $+\,215\,\%$ & $+\,30\,\%$ & $+\,0\,\%$ & $+\,0\,\%$ & $+\,0\,\%$ & $+\,5\,\%$\\
		Lang.-Rouss. & 0.233 & $+\,339\,\%$ & $+\,248\,\%$ & $+\,40\,\%$ & $+\,0\,\%$ & $+\,4\,\%$ & $+\,0\,\%$ & $+\,5\,\%$\\
		Noord Brabant & 0.176 & $+\,339\,\%$ & $+\,254\,\%$ & $+\,23\,\%$ & $+\,0\,\%$ & $+\,7\,\%$ & $+\,0\,\%$ & $+\,6\,\%$\\
		Limburg & 0.265 & $+\,314\,\%$ & $+\,225\,\%$ & $+\,75\,\%$ & $+\,4\,\%$ & $+\,35\,\%$ & $+\,4\,\%$ & $+\,30\,\%$\\
		Great Britain & 0.262 & $+\,296\,\%$ & $+\,215\,\%$ & $+\,50\,\%$ & $+\,1\,\%$ & $+\,37\,\%$ & $+\,1\,\%$ & $+\,10\,\%$\\
		Vietnam & 0.392 & $+\,268\,\%$ & $+\,207\,\%$ & $+\,38\,\%$ & $+\,18\,\%$ & $+\,22\,\%$ & $+\,16\,\%$ & $+\,116\,\%$\\
		Switzerland & 0.181 & $+\,338\,\%$ & $+\,256\,\%$ & $+\,59\,\%$ & $+\,0\,\%$ & $+\,20\,\%$ & $+\,0\,\%$ & $+\,7\,\%$\\
		\textbf{Maps} & \textbf{0.223} & $\boldsymbol{+\,316\,\%}$& $\boldsymbol{+\,238\,\%}$ & $\boldsymbol{+\,39\,\%}$ & $\boldsymbol{+\,3\,\%}$ & $\boldsymbol{+\,11\,\%}$ & $\boldsymbol{+\,3\,\%}$ & $\boldsymbol{+\,23\,\%}$\\
		\midrule
		Bld 1 & 0.560 & $+\,245\,\%$ & $+\,201\,\%$ & $+\,78\,\%$ & $+\,42\,\%$ & $+\,57\,\%$ & $+\,35\,\%$ & $+\,54\,\%$\\
		Bld 2 & 0.284 & $+\,288\,\%$ & $+\,188\,\%$ & $+\,48\,\%$ & $+\,0\,\%$ & $+\,0\,\%$ & $+\,0\,\%$ & $+\,7\,\%$\\
		Bld 3 & 0.159 & $+\,339\,\%$ & $+\,248\,\%$ & $+\,3\,\%$ & $+\,0\,\%$ & $+\,3\,\%$ & $+\,0\,\%$ & $+\,4\,\%$\\
		Bld 4 & 0.124 & $+\,357\,\%$ & $+\,274\,\%$ & $+\,9\,\%$ & $+\,0\,\%$ & $+\,0\,\%$ & $+\,0\,\%$ & $+\,1\,\%$\\
		Bld 5 & 0.234 & $+\,246\,\%$ & $+\,167\,\%$ & $+\,55\,\%$ & $+\,3\,\%$ & $+\,20\,\%$ & $+\,5\,\%$ & $+\,16\,\%$\\
		Bld 6 & 0.157 & $+\,300\,\%$ & $+\,229\,\%$ & $+\,29\,\%$ & $+\,1\,\%$ & $+\,1\,\%$ & $+\,1\,\%$ & $+\,7\,\%$\\
		Bld 7 & 0.158 & $+\,284\,\%$ & $+\,203\,\%$ & $+\,19\,\%$ & $+\,3\,\%$ & $+\,19\,\%$ & $+\,3\,\%$ & $+\,1\,\%$\\
		Bld 8 & 0.181 & $+\,344\,\%$ & $+\,266\,\%$ & $+\,22\,\%$ & $+\,0\,\%$ & $+\,5\,\%$ & $+\,0\,\%$ & $+\,5\,\%$\\
		Bld 9 & 0.279 & $+\,245\,\%$ & $+\,161\,\%$ & $+\,55\,\%$ & $+\,0\,\%$ & $+\,26\,\%$ & $+\,0\,\%$ & $+\,6\,\%$\\
		Bld 10 & 0.308 & $+\,266\,\%$ & $+\,178\,\%$ & $+\,54\,\%$ & $+\,1\,\%$ & $+\,19\,\%$ & $+\,1\,\%$ & $+\,8\,\%$\\
		Bld 11 & 0.379 & $+\,216\,\%$ & $+\,164\,\%$ & $+\,47\,\%$ & $+\,4\,\%$ & $+\,10\,\%$ & $+\,3\,\%$ & $+\,14\,\%$\\
		\textbf{Buildings} & \textbf{0.257} & $\boldsymbol{+\,270\,\%}$& $\boldsymbol{+\,197\,\%}$ & $\boldsymbol{+\,47\,\%}$ & $\boldsymbol{+\,9\,\%}$ & $\boldsymbol{+\,21\,\%}$ & $\boldsymbol{+\,8\,\%}$  & $\boldsymbol{+\,17\,\%}$\\
		\midrule
		bird & 0.304 & $+\,273\,\%$ & $+\,206\,\%$ & $+\,51\,\%$ & $+\,1\,\%$ & $+\,2\,\%$ & $+\,1\,\%$ & $+\,7\,\%$\\
		butterfly & 0.225 & $+\,322\,\%$ & $+\,246\,\%$ & $+\,53\,\%$ & $+\,1\,\%$ & $+\,12\,\%$ & $+\,1\,\%$ & $+\,7\,\%$\\
		cat & 0.254 & $+\,269\,\%$ & $+\,208\,\%$ & $+\,24\,\%$ & $+\,8\,\%$ & $+\,9\,\%$ & $+\,9\,\%$ & $+\,8\,\%$\\
		dog & 0.318 & $+\,255\,\%$ & $+\,193\,\%$ & $+\,63\,\%$ & $+\,13\,\%$ & $+\,45\,\%$ & $+\,13\,\%$ & $+\,6\,\%$\\
		horse & 0.343 & $+\,229\,\%$ & $+\,168\,\%$ & $+\,94\,\%$ & $+\,14\,\%$ & $+\,63\,\%$ & $+\,12\,\%$ & $+\,6\,\%$\\
		ostrich & 0.346 & $+\,244\,\%$ & $+\,190\,\%$ & $+\,76\,\%$ & $+\,24\,\%$ & $+\,52\,\%$ & $+\,21\,\%$ & $+\,15\,\%$\\
		shark & 0.342 & $+\,268\,\%$ & $+\,214\,\%$ & $+\,26\,\%$ & $+\,5\,\%$ & $+\,4\,\%$ & $+\,4\,\%$ & $+\,12\,\%$\\
		spider & 0.586 & $+\,173\,\%$ & $+\,136\,\%$ & $+\,71\,\%$ & $+\,24\,\%$ & $+\,40\,\%$ & $+\,21\,\%$ & $+\,7\,\%$\\
		turtle & 0.279 & $+\,274\,\%$ & $+\,199\,\%$ & $+\,61\,\%$ & $+\,1\,\%$ & $+\,13\,\%$ & $+\,1\,\%$ & $+\,9\,\%$\\
		\textbf{Animals} & \textbf{0.333} & $\boldsymbol{+\,246\,\%}$& $\boldsymbol{+\,188\,\%}$ & $\boldsymbol{+\,60\,\%}$ & $\boldsymbol{+\,12\,\%}$ & $\boldsymbol{+\,29\,\%}$ & $\boldsymbol{+\,11\,\%}$   & $\boldsymbol{+\,8\,\%}$\\
		\bottomrule
	\end{tabular}
	}
\end{table}

\end{document}